 \documentclass[3p,times]{elsarticle}
\usepackage{amssymb}
\usepackage{graphicx}% Include figure files
\usepackage{dcolumn}% Align table columns on decimal point
\usepackage{bm}% bold math
\usepackage{mathptmx}
\usepackage{enumerate}
\usepackage{array,color}
\usepackage{amsmath}
\usepackage{hyperref}
\usepackage{amsthm}
\newcommand{\ii}{\mathrm{i}}
\newcommand{\ee}{\mathrm{e}}

\newcommand{\diag}{\mathrm{diag}}

\newcommand{\cO}{\mathcal{O}}

\newtheorem{theorem}{Theorem}
\newtheorem{lemma}{Lemma}
\newtheorem{corol}{Corollary}
\newtheorem{prop}{Proposition}
\newtheorem{propo}{Proposition}[section]
\newtheorem{definition}{Definition}[section]

\journal{ }

\begin{document}

\begin{frontmatter}

%% Title, authors and addresses

%% use the tnoteref command within \title for footnotes;
%% use the tnotetext command for theassociated footnote;
%% use the fnref command within \author or \address for footnotes;
%% use the fntext command for theassociated footnote;
%% use the corref command within \author for corresponding author footnotes;
%% use the cortext command for theassociated footnote;
%% use the ead command for the email address,
%% and the form \ead[url] for the home page:
%% \title{Title\tnoteref{label1}}
%% \tnotetext[label1]{}
%% \author{Name\corref{cor1}\fnref{label2}}
%% \ead{email address}
%% \ead[url]{home page}
%% \fntext[label2]{}
%% \cortext[cor1]{}
%% \affiliation{organization={},
%%             addressline={},
%%             city={},
%%             postcode={},
%%             state={},
%%             country={}}
%% \fntext[label3]{}

\title{Coexistence of two distinct rogue wave patterns in the coupled nonlinear Schr\"odinger equation}

%% use optional labels to link authors explicitly to addresses:
%% \author[label1,label2]{}
%% \affiliation[label1]{organization={},
	%%             addressline={},
	%%             city={},
	%%             postcode={},
	%%             state={},
	%%             country={}}
%%
%% \affiliation[label2]{organization={},
	%%             addressline={},
	%%             city={},
	%%             postcode={},
	%%             state={},
	%%             country={}}
\author[aff1]{Zixuan Deng}
%\ead{zxdengmath@163.com}

\author[aff1]{Huian Lin$^{*}$}
\ead{linhamath@163.com}

\author[aff1]{Liming Ling}
%\ead{linglm@scut.edu.cn}

%\affiliation[aff1]{organization={School of Mathematics, South China University of Technology},city={Guangzhou},postcode={510641}, country={China}}
%\address[aff1]{Department of Mathematics and Physics, North China Electric Power University, Baoding, 071003, China}	

\address[aff1]{School of Mathematics, South China University of Technology, Guangzhou, 510641, China}	
%\author{}

%\affiliation{organization={},%Department and Organization
	%           addressline={}, 
	%            city={},
	%            postcode={}, 
	%            state={},
	%            country={}}

\begin{abstract}
	This paper investigates the asymptotic behavior of high-order vector rogue wave (RW) solutions of the coupled nonlinear Schr\"odinger (CNLS) equation in the presence of multiple large internal parameters. We report several new high-order RW patterns in the CNLS system, including double-sector, double-heart, and mixed sector-heart configurations. The main novelty is that each RW pattern contains two distinct regions in which two different fundamental first-order RWs coexist simultaneously, potentially appearing as bright (eye-shaped) versus four-petaled or dark (anti-eye-shaped) forms. These two regions are respectively associated with the simple root structures of two different Adler--Moser polynomials: each region consists of well-separated first-order RWs in one-to-one correspondence with the simple roots of the associated polynomial. In addition, by tuning certain free parameters, the two regions of the RW pattern can be shifted to arbitrary locations in the $ (x,t) $-plane. This flexibility, together with the rich simple-root structures of Adler--Moser polynomials, enables the systematic generation of a much broader family of structured RW patterns in the CNLS equation.
	
\end{abstract}

%%Graphical abstract
%\begin{graphicalabstract}
%\includegraphics{grabs}
%\end{graphicalabstract}

%%Research highlights
%\begin{highlights}
%\item Research highlight 1
%\item Research highlight 2
%\end{highlights}

\begin{keyword}
	CNLS equation; RW patterns; asymptotics; Adler--Moser polynomials; root structures
	
	%% PACS codes here, in the form: \PACS code \sep code
	
	%% MSC codes here, in the form: \MSC code \sep code
	%% or \MSC[2008] code \sep code (2000 is the default)
	
\end{keyword}

\end{frontmatter}

%% \linenumbers

%% main text

\section{Introduction}\label{Sec-intro}

Rogue waves (RWs), also known as freak waves, extreme waves, or monster waves, are a class of nonlinear wave phenomena in nature characterized by extreme localization, short-lived occurrences, and high amplitudes \cite{GuoB2017book,Akhmediev2009}. This concept was first introduced by Draper in 1965 \cite{Draper1965}. Over subsequent decades, the suddenness and destructive impact of RWs have been repeatedly implicated in maritime disaster reports, posing serious threats to vessel navigation, offshore oil and gas facilities and personnel, and even coastal populations \cite{Kjeldsen1984,Kharif2003,Rosenthal2005}. Due to early observational conditions that made it difficult to obtain reproducible and verifiable scientific evidence, RWs were once regarded as rare and inexplicable accidents. In 1995, however, a RW striking an offshore oil platform in the Norwegian North Sea was recorded in full for the first time. As it occurred on New Year’s Day, it became known as the ``New Year Wave" \cite{Walker2004}. 

Over the past two decades, RW phenomena have been identified far beyond oceanography, appearing across many areas of the natural sciences. For instance, in $2007$, Solli et al. reported the first experimental observation of optical RW with abnormally large amplitudes in nonlinear optics\cite{Solli2007}. In $2010$, Kibler et al. observed the Peregrine soliton (the first-order RW) in nonlinear fiber experiments and performed corresponding numerical simulations \cite{Kibler2010}. In the same year, Efimov et al. observed RW events in superfluid helium, noting that they arise shortly after the driving is switched on, during the nonequilibrium stage before a steady state is reached \cite{2010Efimov}. Stenflo and Marklund also investigated the possibility of RWs in the atmosphere \cite{Stenflo2010}. Yan analytically derives explicit ``financial RW" solutions within the Ivancevic nonlinear option-pricing model and argues that these solutions can serve as stylized descriptions of abrupt, extreme events and their possible mechanisms in financial markets \cite{Yan2010f}. In 2011, Chabchoub et al. observed the Peregrine soliton in a wave-tank experiment \cite{Chabchoub2011}, and in 2012, further obtained higher-order RWs in a water-tank setting consistent with theoretical predictions \cite{Chabchoub2012}. These advances have substantially intensified academic interest in RWs. 

A variety of physical mechanisms have been proposed for RW generation, including  modulational instability \cite{shats2010, Pelinovsky2016}, three-dimensional directional wave focusing \cite{fochesato2007}, and so on. From a mathematical perspective, RW can be modeled as special solutions of integrable nonlinear partial differential equations, often referred to as RW solutions. Representative examples include scalar integrable nonlinear equations such as the nonlinear Schr\"odinger (NLS) equation, the derivative NLS (DNLS) equation, and the Ablowitz-Ladik equation, as well as multi-component integrable systems such as the Manakov system (i.e., coupled NLS equation, CNLS equation), the coupled Fokas-Lenells equation, and multi-component NLS equations \cite{ankiewicz2010d,ling2014high,GuoB2017book,YangB2024book, suh2023,linh2024b}.

In 2013, Kedziora et al. discovered that higher-order RW solutions of the NLS equation exhibit diverse patterns composed of multiple fundamental first-order and lower-order RWs, thereby unveiling underlying symmetry principles and extrapolating the general shape for RW solutions beyond order $6$ \cite{Kedziora2013}. Since then, RW patterns have attracted sustained attention and have become a central topic in RW theory, as understanding these patterns enables the prediction of future RW structures from initial waveforms. Notably, in 2021, Zhang et al. discovered that the structures of the RW patterns in integrable systems are governed by certain special polynomials, which they called ``governing polynomials" \cite{ZhangGQ2021}. Meanwhile, Yang et al. discovered an intrinsic connection between RW patterns of the NLS equation (with a large internal parameter) and a special class of governing polynomial, namely, the Yablonskii--Vorob'ev polynomial hierarchy. Furthermore, in 2022, Dong et al. investigated the lump patterns in the KPI equation including those in the large-time regime and at $t=0$, by using the KP constraint theory \cite{DongJ2022}. They established a relation between these lump patterns and the RW patterns of the AKNS system, as well as the Yablonskii--Vorob'ev polynomial hierarchy. With further developments, researchers discovered higher-order RW patterns (containing a single large internal parameter) associated with the root structures of the Okamoto polynomial hierarchy and the generalized Wronskian--Hermite polynomial hierarchy in multi-component integrable systems, with no more than four components \cite{YangB2024book, zhangg2022,suh2023}. Building on these works, in 2024, we reported RW patterns related to generalized Wronskian--Hermite polynomials in the multi-component DNLS equation, where the number of components is an arbitrary positive integer and the high-order RW solutions contain a single large internal parameter \cite{linh2024a}. 

In addition, for high-order RW solutions in integrable systems with multiple large internal parameters, both Yang and our group independently investigated RW patterns of the scalar NLS equation associated with the Adler--Moser polynomials \cite{yang2024,YangB2025,linh2024b}. Our group also reported RW patterns of multi-component NLS equation associated with the generalized mixed Adler--Moser polynomials \cite{lin2025vector}. However, in that study of Ref. \cite{lin2025vector}, we considered only the special case where the associated characteristic polynomial of $ (n+1) $-degree possesses an $ (n+1) $-multiple root. For a polynomial of $ (n+1) $-degree having an $ (n+1) $-multiple root is an exceptionally restrictive and highly non-generic constraint. It is therefore natural to investigate more general scenarios for the characteristic polynomial, such as the presence of double roots. 

On the other hand, the results in Ref.~\cite{ZhaoLing2016} indicate that three distinct types of fundamental first-order RWs may arise in high-order RW solutions of multi-component integrable systems, namely bright (eye-shaped) RWs, dark (anti-eye-shaped) RWs, and four-petaled RWs. However, in previous studies on RW patterns of multi-component integrable systems, it has been observed that the RW pattern in each component of a high-order vector RW solution typically consists of only one of these three types of first-order RWs. This naturally leads to the following question: can the RW pattern in a single component of a high-order vector RW solution in a multi-component integrable system simultaneously involve first-order RWs with different structures?

Therefore, in this paper, we consider the following CNLS equation: 
\begin{equation}\label{cnlse}
	\begin{aligned}
		&\ii \mathbf{q}_{t} +\frac{1}{2}\mathbf{q}_{xx}+ \mathbf{q}\mathbf{q}^{\dagger}\mathbf{q} ={0},
	\end{aligned}
\end{equation}
where $ \mathbf{q}=\left[q_{1}(x,t), q_{2}(x,t) \right]^{T} $ is a two-dimensional complex vector field, and the superscripts $ \dagger $ and $ T $ denote the conjugate transpose and transpose of the matrix, respectively.
Then, we investigate the patterns of the high-order RW solutions with multiple large internal parameters when the corresponding characteristic equation 
\begin{equation}\label{chareq1}
	\left( \chi-2{\lambda}-\sum_{i=1}^{2}\frac{a_{i}^{2}}{\chi+b_{i}}\right) \prod_{j=1}^{2}(\chi+b_{j})=0, \quad \lambda\in\mathbb{C},\quad a_{i}\in \mathbb{R}\setminus\{0\}, \quad b_{i}\in \mathbb{R}, \quad i=1,2,
\end{equation}
admits double roots. Since such double roots are determined by the algebraic equation
\begin{equation}\label{chareq2}
	 1+\sum_{i=1}^{2}\frac{a_{i}^{2}}{(\chi+b_{i})^{2}}=0,
\end{equation}
with the equation \eqref{chareq2} admitting two pairs of complex-conjugate double roots, we can confirm that the characteristic polynomial \eqref{chareq1} has two pairs of conjugate double roots. For this class of high-order vector RW solutions, we find that the RW patterns simultaneously correspond to the root structures of two distinct Adler--Moser polynomials. However, when analyzing the asymptotics of the vector RW patterns near the locations corresponding to the polynomial roots, the computation of the leading-order coefficients with respect to the large parameters becomes highly involved. Therefore, in this work, we focus on the case where the associated Adler--Moser polynomial admits only simple roots. In this case, the leading-order term can be reduced to a sum of four terms, each with constant coefficients expressed in special determinant forms. In contrast, when the Adler--Moser polynomial has multiple roots, or when two distinct Adler--Moser polynomials share identical roots in the $ (x,t) $-plane after transformation, the computation of the leading-order terms becomes extremely complicated. As a result, the asymptotic analysis of rogue wave patterns near these locations remains to be investigated in future work.

The results shows that when the two corresponding Adler--Moser polynomials both admit only simple roots and share no common roots in the transformed $ (x,t) $-plane, each component of the resulting vector RW pattern consists of two collections of spatially separated first-order RWs. Each first-order RW can correspond to a simple root of the underlying polynomial, while those linked to different polynomials exhibit distinct dynamics---one collection displays bright structures, whereas the other exhibits four-petaled or dark structures. In each RW pattern, the two different sets of first-order RWs can be shifted to arbitrary locations in the $ (x,t) $-plane by appropriately adjusting certain free parameters. This tunability, combined with the rich simple root structures of the Adler--Moser polynomials, facilitates the systematic construction of a substantially broader class of structured RW patterns for the CNLS equation.

The structure of this paper is organized as follows: In Section \ref{Sec-RWs}, we present the explicit formula for high-order RW solutions of the CNLS equation \eqref{cnlse} and introduce the Adler--Moser polynomials. In Section \ref{Sec-asymptotic}, we investigate the asymptotic behavior of the resulting RW patterns in the condition where multiple internal parameters become large, and we illustrate several representative examples together with their dynamical evolution plots. Section \ref{Sec-proof} provides detailed proofs of our main results. Finally, conclusions and further discussions are given in Section \ref{Sec-conclusion}.

\section{Preliminaries}\label{Sec-RWs}
In this section, we will present the determinant formula of the vector RW solutions for the CNLS equation \eqref{cnlse} in the case where the associated characteristic equation \eqref{chareq1} possesses two pairs of complex-conjugate double roots. Additionally, we will introduce Adler--Moser polynomials to analyze the RW patterns below.

\subsection{Vector RW solutions of the CNLS equation}

The CNLS equation \eqref{cnlse} admits the Lax pair \cite{ling2014high}
\begin{equation} \label{laxp}
	\begin{aligned}
		&\Phi_{x}=\mathbf{U}(\lambda;x,t)\Phi,  \quad \Phi_{t}=\mathbf{V}(\lambda;x,t)\Phi, 
	\end{aligned}
\end{equation}
where
\begin{equation}\label{laxpuv}
	\begin{aligned}
		&\mathbf{U}=\ii \left( \lambda\sigma_{3} +\mathbf{Q}\right) , \quad \mathbf{V}= \lambda\mathbf{U} -\frac{\ii}{2}\sigma_{3}(\mathbf{Q}^{2}+\ii\mathbf{Q}_{x}),
		\\
		&\sigma_{3}={\rm diag}(1,-1,-1), \quad
		\mathbf{Q}=\begin{pmatrix}
			0 & \mathbf{q}^{\dagger} \\
			\mathbf{q} & \mathbf{0}_{2\times 2} 
		\end{pmatrix},
	\end{aligned}
\end{equation}
$ \Phi=\Phi(\lambda;x,t) $ is the complex matrix spectral function, and $ \lambda \in \mathbb{C} $ is a spectral parameter. We can utilize the zero curvature equation $ \mathbf{U}_{t}-\mathbf{V}_{x} +[ \mathbf{U}, \mathbf{V} ]=0$ to derive the CNLS equation \eqref{cnlse} with the commutator $[ \mathbf{U}, \mathbf{V} ] = \mathbf{U}\mathbf{V} - \mathbf{V}\mathbf{U}$.

To generate the RW solutions, we give a plane wave seed solution $ \mathbf{q}^{[0]}=\left[q_{1}^{[0]}(x,t), q_{2}^{[0]}(x,t) \right]^{T} $ for the CNLS equation \eqref{cnlse} with
\begin{equation}\label{seeds}
	\begin{aligned}
		q_{n}^{[0]}(x,t)= a_{n}\ee^{\ii \theta_{n}},  \quad
		\theta_{n}=b_{n}x+\left( a_{1}^{2}+a_{2}^{2}-\frac{b_{n}^{2}}{2}\right)t,  \quad n=1,2,
	\end{aligned}
\end{equation}
the real parameters $ a_{n} $ and $ b_{n} $ are the amplitudes and wave numbers, respectively. Since the CNLS equation \eqref{cnlse} admits the scaling symmetry and the Galilean symmetry, we can set $ a_{1}=1,\ a_{2}\in \mathbb{R}\setminus\{0\} $, and $ b_{1}=-b_{2}>0 $ without loss of generality. Then, as shown in Ref. \cite{lin2025vector}, starting from the plane-wave solution \eqref{seeds} and applying the Darboux transformation, we obtain the following formula for the high-order solution of the CNLS equation \eqref{cnlse}:
\begin{equation}\label{qkfor1}
	\begin{aligned}
		&q_{n}^{[N]}(x,t)=q^{[0]}_{n}(x,t) \frac{\det({\mathbf{M}}^{(n)})}{\det({\mathbf{M}}^{(0)})}, \quad n=1,2, \\
	\end{aligned}
\end{equation}
where the seed solutions $ q^{[0]}_{n}(x,t) $ are given in the equation \eqref{seeds}, 
\begin{equation}\label{bmsij}
	\begin{aligned}
		&{\mathbf{M}}^{(k)} = \left( {{M}}^{(k)}_{i,j} \right)_{1\leq i,j\leq N}, \quad k=0,1,2, \\
		&{{M}}_{i,j}^{(0)}= \sum_{p,r=1}^{3} \frac{c_{i,p}^{*}c_{j,r}}{\chi_{i,p}^{*}-\chi_{j,r}} \ee^{-\ii\omega_{i,p}^{*}+\ii\omega_{j,r}},\quad 
		{M}_{i,j}^{(n)}= \sum_{p,r=1}^{3} \frac{c_{i,p}^{*}c_{j,r}}{\chi_{i,p}^{*}-\chi_{j,r}} \frac{\chi_{i,p}^{*}+b_{n}}{\chi_{j,r}+b_{n}} \ee^{-\ii\omega_{i,p}^{*}+\ii\omega_{j,r}},\\
		&\omega_{j,r}= (\chi_{j,r}-\lambda_{j})x + (\frac{\chi_{j,r}^{2}}{2} -\lambda_{j}^{2} -2)t +d_{j,r}(\lambda_{j})
	\end{aligned}
\end{equation}
$ \chi_{j,r} $ is the $ r $-th root of the characteristic equation \eqref{chareq1} at $ \lambda=\lambda_{j} $, $ d_{j,r}(\lambda) $ is also the arbitrary parameters independent of $ x $ and $ t $, and all $ c_{j,r} $'s are arbitrary constants. 

Next, we derive the vector RW solution formula of the CNLS equation \eqref{cnlse} in the case where the characteristic equation \eqref{chareq1} has double roots. To obtain concise analytic expressions for the double roots of the characteristic equation \eqref{chareq1} in the complex field, we focus on the case $a_{1}=a_{2}=1$ in the plane-wave solution \eqref{seeds} and do not elaborate on the general case $a_{1}\neq a_{2}$. Note that this restriction is made only for the purpose of simplifying the explicit root parametrization; the subsequent methodology for analyzing RW patterns of the CNLS equation extends without essential modification to $a_{1}\neq a_{2}$. A detailed treatment of the general case can be found in Ref. \cite{ZhaoLing2016}. On the other hand, algebraic equation \eqref{chareq2} can be reduced to a quartic equation. When its discriminant equals zero (i.e., when $ b_{1}=\frac{1}{2} $), the quartic admits multiple roots, which is not the case of interest here. Therefore, for $ 0<b_{1}<\tfrac{1}{2} $ or \(b_{1}>\tfrac{1}{2}\), one can verify that the quartic equation has two pairs of complex-conjugate simple roots. This implies that the characteristic equation \eqref{chareq1} possesses two pairs of complex-conjugate double roots. In particular, when setting $ b_{1}=\frac{1}{2}\sqrt{1-\zeta^{2}} $ with $ \zeta\in((-1,1) \cup \ii\mathbb{R}) \backslash \{0\} $, we can obtain two pairs of complex-conjugate double roots $ \chi_{l} $ $ (l=1,2) $ and $ \chi_{l}^{*} $ for the equation \eqref{chareq1} with $ \lambda=\lambda_{l} $ and $ \lambda=\lambda_{l}^{*} $, respectively, where
\begin{equation}\label{chin}
	\begin{aligned}
		&\chi_{1}=\frac{\ii\sqrt{(\zeta+1)(\zeta+3)}}{2}, \quad \lambda_{1}=\frac{\ii(\zeta+3)\sqrt{(\zeta+1)(\zeta+3)}}{4(\zeta+1)},\\
		&\chi_{2}=\frac{\ii\sqrt{(\zeta-1)(\zeta-3)}}{2}, \quad \lambda_{2}=\frac{\ii(\zeta-3)\sqrt{(\zeta-1)(\zeta-3)}}{4(\zeta-1)}.
	\end{aligned}
\end{equation}

Moreover, we introduce a small perturbation $ \varepsilon $ for the parameters $ \lambda_{l} $ and $ \chi_{l} $, as follows:
\begin{equation}\label{lamchi}
	\lambda_{l}(\varepsilon) =\lambda_{l} +\lambda_{l}^{[2]}\varepsilon^{2}, \quad \chi_{l}(\varepsilon)=\chi_{l}+ \sum_{i=1}^{\infty}\chi_{l}^{[i]}\varepsilon^{i}, 
\end{equation}
where $ \lambda_{l}(\varepsilon) $ and $ \chi_{l}(\varepsilon) $ satisfy the characteristic equation \eqref{chareq1} with the arbitrary nonzero complex constants $ \lambda_{l}^{[2]}$. Then, for the solution formula \eqref{qkfor1}, we consider the parameters
\begin{equation}\nonumber%\label{key}
	\lambda_{j}\to\lambda_{l}(\varepsilon), \quad \chi_{j,1}\to\chi_{l}(\varepsilon),\quad  \chi_{j,2}\to \chi_{l}(-\varepsilon),
\end{equation}
with $ l=1 $ for $ 1\leq j\leq N_{1} $ and $ l=2 $ for $ N_{1}+1\leq j\leq N_{1}+N_{2} $, and set the constants $(c_{j,1},c_{j,2}, c_{j,3})=(1,-1,0) $. Further, we can perform a
Taylor expansion around $ \varepsilon $ to generate the formula \eqref{horw} of the high-order RW solution for the CNLS equation \eqref{cnlse}. The detailed computations are omitted here, which are found in our earlier analogous work \cite{linh2024a}. 

Now, we present the determinant representation of the vector RW solutions containing multiple free large parameters in the following theorem. For convenience, we introduce the Schur polynomials $ S_{j}(\mathbf{z}) $ by the generating function
\begin{equation}\label{schur}
	\sum_{j=0}^{\infty} S_{j}(\mathbf{z})\varepsilon^{j}=\exp\left( \sum_{j=1}^{\infty} z_{j}\varepsilon^{j}\right), \quad  \mathbf{z}=(z_{1}, z_{2}, \cdots), 
\end{equation}
and $  S_{j}(\mathbf{z})=0 $ for $ j<0 $, and take other notations:
\begin{equation}\label{xpm1}
	\begin{aligned}
		&\mathbf{x}_{l}^{\pm}=(x_{l,1}^{\pm}, x_{l,2}^{\pm}, \cdots), \quad x_{l,j}^{+}=\alpha_{l,j}x+\beta_{l,j}\ii t+ d_{l,j}, \quad x^{-}_{l,j}=(x^{+}_{l,j})^{*}, \\ &\mathbf{h}_{r}^{[l,m]}=(h_{r,1}^{[l,m]},h_{r,2}^{[l,m]},\cdots), \quad l,m=1,2, \quad r=1,2,3,
	\end{aligned}
\end{equation}
where $ d_{l,j} $'s are the arbitrary complex constants, $\alpha_{l,j} $, $\beta_{l,j} $, and $ h_{r,j}^{[l,m]} $ are defined by the following expansions
\begin{equation}\label{xpm2}
	\begin{aligned}
		&\ii\left( \chi_{l}(\varepsilon)-\lambda_{l}(\varepsilon)\right)  = \sum_{j=0}^{\infty}\alpha_{l,j}\varepsilon^{j}, \quad
		 \lambda_{l}^{2}(\varepsilon) - \frac{\chi_{l}^{2}(\varepsilon)}{2} +2 = \sum_{j=0}^{\infty}\beta_{l,j}\varepsilon^{j}, \\
		& \ln\left(  \dfrac{(\chi_{l}-\chi_{m}^{*})}{(\chi_{l}(\varepsilon)-\chi_{m}^{*})}\right)  =\sum_{j=1}^{\infty} h_{1,j}^{[l,m]}\varepsilon^{j}, \quad 
		\ln\left( \dfrac{(\chi_{l}-\chi_{m}^{*})}{\chi_{l}^{[1]}\varepsilon} \dfrac{(\chi_{l}(\varepsilon) -\chi_{l})}{(\chi_{l}(\varepsilon)-\chi_{m}^{*})}\right)  =\sum_{j=1}^{\infty}h_{2,j}^{[l,m]}\varepsilon^{j},    \\
		& \ln\left( \frac{\chi_{l}+(-1)^{m-1}b_{1}}{\chi_{l}(\varepsilon)+(-1)^{m-1}b_{1}}\right) =\sum_{j=1}^{\infty}{h_{3,j}^{[l,m]}}\varepsilon^{j}.
	\end{aligned}
\end{equation}

\begin{theorem}\label{Theo1}
	Given a $ 2 $-dimensional integer vector $ [N_{1}, N_{2}] $ with $N_{1},N_{2}\geq 0$ and $ N_{1}+N_{2}=N $, the CNLS equation \eqref{cnlse} admits high-order vector RW solutions $\mathbf{q}^{[N_{1}, N_{2}]}(x,t) = \left[{q}_{1}^{[N_{1}, N_{2}]}(x,t),{q}_{2}^{[N_{1}, N_{2}]}(x,t)\right]^{T} $ with
	\begin{equation}\label{horw}
		q_{n}^{[N_{1}, N_{2}]}(x,t)=  q^{[0]}_{n}(x,t)\frac{\tau^{(n)}}{\tau^{(0)}}, \quad n=1,2,
	\end{equation}
	where $ q_{n}^{[0]}(x,t) $ is the plane wave solution given in the equation \eqref{seeds}, $ \tau^{(k)} ( k=0,1,2)$ are defined by the following $ 2\times 2 $ block matrices
	\begin{equation}\label{msmat1}
		\begin{aligned}
			&\tau^{(k)}=\begin{vmatrix} 
				\mathbf{M}^{(k,{1},{1})} &  \mathbf{M}^{(k,{1},{2})}  \\
				\mathbf{M}^{(k,{2},{1})} &  \mathbf{M}^{(k,{2},{2})} 
			\end{vmatrix}_{N\times N},  \\
			&\mathbf{M}^{(k,l,m)}= \left( \tau^{(k,l,m)}_{2i-1,2j-1} \right)_{1\leq i\leq N_{l}, 1\leq j\leq N_{m}}, \quad  l,m=1,2,
		\end{aligned}
	\end{equation}
	the elements $ \tau^{(k,l,m)}_{i,j} $ are given by
	\begin{equation}\label{tauij1}
		\begin{aligned}
			&\tau_{i,j}^{(k,l,m)}=\frac{1}{\chi_{l}^{*}-\chi_{m}} \sum_{v=0}^{\min{(i,j)}}(C_{lm}^{*}C_{ml})^{v} \, S_{i-v}\left(\mathbf{x}^{-}_{l} + (\mathbf{h}_{1}^{[l,m]})^{*} +v(\mathbf{h}_{2}^{[l,m]})^{*} -(\mathbf{h}_{3}^{[l,k]})^{*} \right) \, S_{j-v}\left(  \mathbf{x}^{+}_{m} + \mathbf{h}_{1}^{[m,l]} +v\mathbf{h}_{2}^{[m,l]} +\mathbf{h}_{3}^{[m,k]}\right), 
		\end{aligned}
	\end{equation}
	with $ \mathbf{h}_{3}^{[m,0]} $ being the zero vectors, the constants $ C_{ml}= \frac{\chi_{m}^{[1]}}{\chi_{m}-\chi_{l}^{*}} $, and $ \chi_{l} $, $ S_{j} $, $\lambda_{l}^{[1]}, \chi_{l}^{[1]}$, $ \mathbf{x}_{l}^{\pm} $, and $ \mathbf{h}_{r}^{[l,m]} $ $ (1\leq r\leq 3) $ defined by the equations \eqref{lamchi}-\eqref{xpm2}.
\end{theorem}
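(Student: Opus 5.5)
Starting from the Darboux formula \eqref{qkfor1}--\eqref{bmsij}, I would make the substitution $\lambda_j\to\lambda_l(\varepsilon)$, $\chi_{j,1}\to\chi_l(\varepsilon)$, $\chi_{j,2}\to\chi_l(-\varepsilon)$ with $(c_{j,1},c_{j,2},c_{j,3})=(1,-1,0)$. Each column of $\mathbf{M}^{(k)}$ then becomes an odd function of $\varepsilon$ built from a single generating kernel. Concretely, for the block $(l,m)$ I would write the kernel
\[
K^{(k,l,m)}(\varepsilon,\delta)=\frac{\chi_l^{*}(\varepsilon^*)+b_k}{\chi_m(\delta)+b_k}\,\frac{\ee^{-\ii\omega_l^{*}(\varepsilon^*)+\ii\omega_m(\delta)}}{\chi_l^{*}(\varepsilon^*)-\chi_m(\delta)} ,
\]
with the $b_k$-factor omitted for $k=0$ and $b_{k}=(-1)^{k-1}b_1$. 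The $(i,j)$ entry is then $K(\varepsilon,\delta)-K(-\varepsilon,\delta)-K(\varepsilon,-\delta)+K(-\varepsilon,-\delta)$. Since the entries vanish at $\varepsilon=0$, I would Taylor-expand in $\varepsilon$ (for rows) and $\delta$ (for columns), exactly as in \cite{linh2024a}. Row/column operations on the determinant, combining derivatives of successive orders, replace the $j$-th column by the coefficient of $\delta^{2j-1}$ and the $i$-th row by that of $\varepsilon^{2i-1}$. The odd-index coefficients survive, which explains the indices $2i-1,2j-1$ in \eqref{msmat1}. The $d_{j,r}(\lambda)$ freedom is then absorbed into the constants $d_{l,j}$ of $\mathbf{x}_l^{\pm}$, after rescaling by $\exp$ of suitable even series so as to kill even-order contributions. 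This is the standard choice making only odd Schur indices matter.

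The core step is to put $K$ into a product-of-Schur form. I would write
\[
\frac{1}{\chi_l^{*}(\varepsilon^*)-\chi_m(\delta)}=\frac{1}{\chi_l^{*}-\chi_m}\cdot\frac{(\chi_l^{*}-\chi_m)}{(\chi_l^{*}(\varepsilon^*)-\chi_m^{*}\!\!\ldots)}\,,
\]
more precisely by factoring
\[
\chi_l^{*}(\varepsilon^*)-\chi_m(\delta)=(\chi_l^{*}(\varepsilon^*)-\chi_m)+(\chi_m-\chi_m(\delta)),
\]
and expanding the resulting geometric series
\[
\frac{1}{1-\rho}=\sum_{v\ge0}\rho^{v},\qquad \rho=\frac{(\chi_l^{*}(\varepsilon^*)-\chi_l^*)(\chi_m(\delta)-\chi_m)}{(\chi_l^{*}(\varepsilon^*)-\chi_m)(\chi_m(\delta)-\chi_l^{*})}.
\]
This is the usual identity $\frac{1}{p^*-q}$ with $p^*-q$ rewritten via cross ratio. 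Each factor $\rho^v$ splits into an $\varepsilon$-part and a $\delta$-part. Using the definitions of $h_1,h_2$ in \eqref{xpm2}, the $\delta$-part equals $(C_{ml}\delta)^v\exp\bigl(v\sum_j h_{2,j}^{[m,l]}\delta^j\bigr)$ times the normalisation factor $\exp\bigl(\sum_j h_{1,j}^{[m,l]}\delta^j\bigr)$, and symmetrically for $\varepsilon$. The exponential $\ee^{\ii\omega_m(\delta)}$ contributes $\exp\bigl(\sum_j x_{m,j}^{+}\delta^j\bigr)$ up to a $\delta$-independent factor, by the definition of $\alpha,\beta$. The factor $(\chi_m+b_k)/(\chi_m(\delta)+b_k)$ gives $\exp\bigl(\sum_j h_{3,j}^{[m,k]}\delta^j\bigr)$. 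Extracting coefficients of $\varepsilon^{*i}\delta^{j}$ via \eqref{schur} yields exactly the sum over $v\le\min(i,j)$ in \eqref{tauij1}. The prefactors $(\chi_l^*+b_k)/(\chi_m+b_k)$ and the $\delta$-independent exponentials are common row/column factors. They cancel in the ratio $\tau^{(n)}/\tau^{(0)}$, up to a constant phase that is absorbed into $q_n^{[0]}$ or shown to be trivial.

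The main obstacle I expect is the bookkeeping showing that the even-order Schur terms can be eliminated. The difference structure $K(\pm\varepsilon,\pm\delta)$ only retains odd powers of the total generating variable. Therefore one must check that the recombination of columns (using that $\chi_l(-\varepsilon)$ is the other branch) is a unimodular triangular operation that leaves the determinant unchanged in both $\tau^{(n)}$ and $\tau^{(0)}$ simultaneously. I would follow the argument in \cite{linh2024a,lin2025vector} and choose $d_{j,r}(\lambda)$ so that $\ee^{\ii d}$ exactly cancels the even-part series. I would then verify that the same triangular matrices act on all three $\tau^{(k)}$, since the $k$-dependence enters only through $\mathbf{h}_3$, which is odd-compatible. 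Finally, I would confirm that $\lambda_l^{[2]}\ne0$ ensures $\chi_l^{[1]}\ne0$ (from the double-root expansion $\chi_l(\varepsilon)-\chi_l\sim\chi_l^{[1]}\varepsilon$), so that $C_{ml}$ and $h_2$ are well defined.
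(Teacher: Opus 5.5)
Your proposal is correct and follows the paper's own route. The paper substitutes $\lambda_j\to\lambda_l(\varepsilon)$, $\chi_{j,1}\to\chi_l(\varepsilon)$, $\chi_{j,2}\to\chi_l(-\varepsilon)$, $(c_{j,1},c_{j,2},c_{j,3})=(1,-1,0)$ into the Darboux determinant \eqref{qkfor1}, Taylor-expands, and defers the computation to \cite{linh2024a}; your cross-ratio expansion of $1/(\chi_l^*(\varepsilon^*)-\chi_m(\delta))$ is exactly what generates $\mathbf h_1^{[l,m]}$, $\mathbf h_2^{[l,m]}$ and $C_{lm}$ in \eqref{tauij1}. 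Two loose ends you leave open resolve without further work. The even-order elimination you flag as the main obstacle is not needed, since the odd--odd symmetrization already gives $\tau_{2i-1,2j-1}$ with the full vectors $\mathbf{x}_l^{\pm}$. The constant $\prod_l\bigl((\chi_l^*+b_n)/(\chi_l+b_n)\bigr)^{N_l}$ left over from the $b_n$-prefactors is unimodular and is removed by the componentwise phase invariance of \eqref{cnlse}.
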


From Theorem \ref{Theo1}, when the integer vector $ [N_{1},N_{2}] $ is equal to $ [1,0] $ or $ [0,1] $, we can obtain two different first-order fundamental vector RW solutions:
\begin{equation}\label{qk12}
	\begin{aligned}
		&\mathbf{q}^{[1, 0]}(x,t)=\left[{q}_{1}^{[1, 0]}(x,t),\, {q}_{2}^{[1, 0]}(x,t) \right]^{T}, \quad {q}_{n}^{[1, 0]}(x,t) =\hat{q}_{n,1}(x-x_{1},t-t_{1}){q}_{n}^{[0]}(x,t),\\
		&\mathbf{q}^{[0,1]}(x,t)=\left[{q}_{1}^{[0,1]}(x,t),\, {q}_{2}^{[0,1]}(x,t) \right]^{T}, \quad {q}_{n}^{[0,1]}(x,t) =\hat{q}_{n,2}(x-x_{2},t-t_{2}){q}_{n}^{[0]}(x,t), \quad n=1,2,
	\end{aligned}
\end{equation}
for the CNLS equation \eqref{cnlse}, where $ \hat{q}_{n,l}(x-x_{l},t-t_{l}) $ $ (l=1,2) $ are defined by
\begin{equation}\label{qk1}
	\begin{aligned}
		& \hat{q}_{n,l}(x,t)= 1+ \frac{2\ii}{\left| \chi_{l}+(-1)^{n+1}b_{1}\right|^{2} } \dfrac{(\Re(\chi_{l})+(-1)^{n+1}b_{1})(x+\Re(\chi_{l})t)-\Im(\chi_{l})^{2}t +\frac{\ii}{2}}{ (x+\Re(\chi_{l})t)^{2} +\Im(\chi_{l})^{2}t^{2} +\frac{1}{4\Im(\chi_{l})^{2}}},\\
		&(x_{l},t_{l})= \left( -\frac{\Re(\chi_{l})}{\Im(\chi_{l})}\Re\left(  \frac{d_{l,1}}{\chi_{l}^{[1]}}\right)  -\Im\left(  \frac{d_{l,1}}{\chi_{l}^{[1]}}\right) -\frac{1}{2\Im(\chi_{l})} ,\, \frac{1}{\Im(\chi_{l})}\Re\left(  \frac{d_{l,1}}{\chi_{l}^{[1]}}\right) \right),
	\end{aligned}
\end{equation}
with $ \Re(\chi_{l}) $ and $ \Im(\chi_{l}) $ denoted as the real and imaginary parts of $ \chi_{l} $, respectively. When $ x\rightarrow\infty $ and $ t\rightarrow\infty $, the first-order fundamental vector RW solutions $ \mathbf{q}^{[1, 0]}(x,t) $ and $\mathbf{q}^{[0,1]}(x,t)  $ both tend to the plane wave background $ \mathbf{q}^{[0]}(x,t) $ given in the equation \eqref{seeds}. At their central positions $ (x,t)=(x_{l}, t_{l}) $, the amplitudes of the first-order RWs $ |q_{n}^{[1,0]}(x,t)|  $ and $ |q_{n}^{[0,1]}(x,t)|  $ are
\begin{equation}\label{maxamp1}
	\left|1-\frac{4\Im(\chi_{l})^2}{ |\chi_{l}+(-1)^{n+1}b_{1}|^{2}}  \right|, \quad l=1,2,
\end{equation}
respectively. 

Moreover, Refs. \cite{ZhaoLing2016,qin2023optical} show that the vector first-order RW solutions of the CNLS equation \eqref{cnlse} admit three different dynamical structures, which are determined by the ratio
\begin{equation}\label{rho1}
	\rho_{n,l}=\frac{(\Re(\chi_{l})+(-1)^{n+1}b_{1})^{2}}{(\Im(\chi_{l}))^{2}}, \quad n,l=1,2,
\end{equation}
and the first-order RW is eye-shaped (i.e. a bright RW) if $\rho_{n,l} \leq \frac{1}{3} $, four-petaled if $ \frac{1}{3}<\rho_{n,l}<3  $, and anti-eye-shaped (i.e., a dark RW) if $ \rho_{n,l}\geq 3 $. Thus, we obtain the following proposition for vector first-order RW solutions $ \mathbf{q}^{[1, 0]}(x,t) $ and $ \mathbf{q}^{[0,1]}(x,t) $, and its dynamical evolution will be illustrated in Section \ref{subsec-example}.

\begin{prop}[\cite{ZhaoLing2016}]\label{prop-rw1}
	For the two first-order vector RW solutions $ \mathbf{q}^{[1, 0]}(x,t) $ and $ \mathbf{q}^{[0,1]}(x,t) $, we have
	\begin{itemize}
		\item[(1)] When $ 0<b_{1}<\frac{1}{2} $, among the two vector first-order RWs $ \mathbf{q}^{[1, 0]}(x,t) $ and $ \mathbf{q}^{[0,1]}(x,t) $, one is a four-petaled RW in each of its components, while the other is a bright RW.
		
		\item[(2)] When $ b_{1}>\frac{1}{2} $, $ {q}_{1}^{[1, 0]}(x,t) $ and $ {q}_{2}^{[0,1]}(x,t) $ are necessarily, simultaneously, dark RWs or four-petaled RWs, whereas $ {q}_{2}^{[1, 0]}(x,t) $ and $ {q}_{1}^{[0,1]}(x,t) $ are bright RWs. 
	\end{itemize}
\end{prop}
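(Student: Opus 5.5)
The proof is a direct computation of the ratios $\rho_{n,l}$ in \eqref{rho1} from the explicit double roots \eqref{chin}, with $b_{1}=\tfrac12\sqrt{1-\zeta^{2}}$ and $b_{2}=-b_{1}$. We then compare the results with the thresholds $\tfrac13$ and $3$ quoted from \cite{ZhaoLing2016}. The two regimes $0<b_{1}<\tfrac12$ and $b_{1}>\tfrac12$ correspond exactly to $\zeta\in(-1,1)\setminus\{0\}$ and $\zeta=\ii\eta$ with $\eta\in\mathbb{R}\setminus\{0\}$, so the argument splits into these two cases.

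\emph{Case (1).} For $\zeta\in(-1,1)\setminus\{0\}$ both $(\zeta+1)(\zeta+3)$ and $(\zeta-1)(\zeta-3)$ are positive. Hence $\chi_{1},\chi_{2}$ are purely imaginary, with $\Im(\chi_{1})^{2}=\tfrac14(\zeta+1)(\zeta+3)$ and $\Im(\chi_{2})^{2}=\tfrac14(1-\zeta)(3-\zeta)$. Since $\Re(\chi_{l})=0$, the ratio $\rho_{n,l}=b_{1}^{2}/\Im(\chi_{l})^{2}$ does not depend on $n$, so both components of each vector RW have the same type. Cancelling common factors gives
\begin{equation*}
\rho_{n,1}=\frac{1-\zeta}{3+\zeta},\qquad \rho_{n,2}=\frac{1+\zeta}{3-\zeta}.
\end{equation*}
For $\zeta\in(0,1)$ we get $\rho_{n,1}\in(0,\tfrac13)$, so $\mathbf{q}^{[1,0]}$ is bright, and $\rho_{n,2}\in(\tfrac13,1)$, so $\mathbf{q}^{[0,1]}$ is four-petaled. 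For $\zeta\in(-1,0)$ the roles are exchanged. This proves (1).

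\emph{Case (2).} Put $\zeta=\ii\eta$, so that $4b_{1}^{2}=1+\eta^{2}$, and write $\sqrt{(\zeta+1)(\zeta+3)}=u+\ii v$ with $u>0$, $u^{2}-v^{2}=3-\eta^{2}$ and $uv=2\eta$. Because $(\zeta-1)(\zeta-3)$ is the complex conjugate of $(\zeta+1)(\zeta+3)$, we obtain
\begin{equation*}
\chi_{1}=\tfrac12(-v+\ii u),\qquad \chi_{2}=\tfrac12(v+\ii u).
\end{equation*}
So the two roots have equal imaginary parts and opposite real parts. Substituting into \eqref{rho1} immediately gives
\begin{equation*}
\rho_{1,1}=\rho_{2,2}=\frac{(2b_{1}-v)^{2}}{u^{2}},\qquad \rho_{2,1}=\rho_{1,2}=\frac{(2b_{1}+v)^{2}}{u^{2}}.
\end{equation*}
This yields the pairing of $q_{1}^{[1,0]}$ with $q_{2}^{[0,1]}$ and of $q_{2}^{[1,0]}$ with $q_{1}^{[0,1]}$ claimed in (2). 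It remains to show that, with $w=|v|$, one has
\begin{equation*}
\rho_{-}=\frac{(2b_{1}-w)^{2}}{u^{2}}\le\frac13<\rho_{+}=\frac{(2b_{1}+w)^{2}}{u^{2}}.
\end{equation*}
Then the component carrying $\rho_{-}$ is bright, and the other is four-petaled or dark according to whether $\rho_{+}<3$ or $\rho_{+}\ge 3$.

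For this, set $R=\sqrt{(\eta^{2}+1)(\eta^{2}+9)}=|(\zeta+1)(\zeta+3)|$. Then
\begin{equation*}
u^{2}=\tfrac12(R+3-\eta^{2}),\qquad w^{2}=\tfrac12(R-3+\eta^{2}),\qquad 4b_{1}^{2}-w^{2}=\tfrac12(\eta^{2}+5-R)>0,
\end{equation*}
where positivity follows from $(\eta^{2}+5)^{2}-R^{2}=16$. In particular $0<w<2b_{1}$. Both inequalities $\sqrt3\,(2b_{1}-w)\le u$ and $\sqrt3\,(2b_{1}+w)>u$ become one-variable inequalities in $s=\eta^{2}>0$. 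The second is easy: $u^{2}<4b_{1}^{2}+w^{2}$ holds because $u^{2}-w^{2}=3-\eta^{2}$, while $4b_{1}^{2}=1+\eta^{2}$. The first is the main obstacle. I would rationalize it as $3(2b_{1}-w)^{2}=3\cdot 16/\big(4(2b_{1}+w)^{2}\big)\cdot\big(\ldots\big)$, using $(2b_{1}-w)(2b_{1}+w)=\tfrac12(\eta^{2}+5-R)=8/(\eta^{2}+5+R)$, and then clear the radicals by squaring. As a check, at $\eta=1$ the values are $\rho_{-}\approx0.028$ and $\rho_{+}\approx1.97$, and as $\eta\to\infty$ we have $\rho_{+}\to\infty$, consistent with the dark regime.

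Finally, which of $\rho_{1,1}$ and $\rho_{2,1}$ equals $\rho_{-}$ depends on the sign of $v$, that is, of $\eta$, under the chosen branch of the square root in \eqref{chin}. To obtain the assignment stated in (2), with $q_{1}^{[1,0]}$ not bright, the branch must be fixed so that $v<0$; otherwise the statement holds after interchanging the labels $l=1,2$. I will make this branch convention explicit in the proof.
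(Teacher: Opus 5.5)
Your approach is the same as the paper's: you evaluate $\rho_{n,l}$ from \eqref{chin} and compare it with $\tfrac13$ and $3$. Case (1) matches the paper's argument exactly.

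In case (2) the paper uses a rational parametrization in $\bar\zeta$. It re-chooses $\chi_1$ with $\Re(\chi_1)>0$ and sets $\chi_2=-\chi_1$. This is exactly your convention $v<0$, which under the principal branch is equivalent to taking $\eta<0$; that choice is harmless because $b_1$ depends only on $\eta^2$. The paper then simply asserts $\rho_{1,1}=\rho_{2,2}>\tfrac13>\rho_{2,1}=\rho_{1,2}$. You actually try to prove these inequalities, but one step is false as written, and the other is left unfinished.

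\emph{The ``easy'' inequality.} From $u^2-w^2=3-\eta^2$ and $4b_1^2=1+\eta^2$ you only get $u^2<4b_1^2+w^2$ when $\eta^2>1$. For $0<|\eta|\le 1$ it fails: at $\eta=\tfrac12$, $u^2\approx 3.08$ while $4b_1^2+w^2\approx 1.58$. Since $\rho_+\to\tfrac13$ as $\eta\to 0$, you cannot drop the factor $3$. The repair is
$3(2b_1+w)^2\ge 3(4b_1^2+w^2)=u^2+4\eta^2+2w^2>u^2$.

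\emph{The step you called the main obstacle.} This needs no squaring. Your own identity gives $2b_1-w=8/\bigl((\eta^2+5+R)(2b_1+w)\bigr)$.
\begin{itemize}
\item For $\eta\neq 0$ we have $R>3$ and $2b_1=\sqrt{1+\eta^2}>1$, so $2b_1-w<1$.
\item $u^2=\tfrac12(R+3-\eta^2)\ge 3$, because $R^2-(3+\eta^2)^2=4\eta^2\ge 0$.
\end{itemize}
Hence $3(2b_1-w)^2<3\le u^2$, i.e.\ $\rho_-<\tfrac13$.

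With these two lines your proof of (2) is complete. You do not need to decide whether $\rho_+\ge 3$, since the statement only says ``dark or four-petaled''.
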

\begin{proof}
	(1) When $ 0<b_{1}<\frac{1}{2} $, we have $ \zeta\in(-1,0)\cup(0,1) $ 
	because $ b_{1}=\frac{1}{2}\sqrt{1-\zeta^{2}} $.
	Then, based on the values of $ \chi_{l} $ $ (l=1,2) $ in the equation \eqref{chin}, we derive
	$ \rho_{n,1}=\frac{1-\zeta}{3+\zeta} $ and $ \rho_{n,2}=\frac{1+\zeta}{3-\zeta} $. Thus, we deduce that $ \frac{1}{3}<\rho_{n,1}<3 $ and $ \rho_{n,2}<\frac{1}{3} $ if $ \zeta\in(-1,0) $, and $ \rho_{n,1}<\frac{1}{3} $ and $ \frac{1}{3}<\rho_{n,2}<3 $ if $ \zeta\in(0,1) $. This implies that when $ \zeta\in(-1,0) $, each component of the vector first-order RW $ \mathbf{q}^{[1, 0]}(x,t) $ is a four-petaled RW and each component of $ \mathbf{q}^{[0,1]}(x,t) $ is a bright RW; by contrast, when $ \zeta\in(0,1) $, each component of $ \mathbf{q}^{[1, 0]}(x,t) $ becomes a bright RW, while each component of $ \mathbf{q}^{[0,1]}(x,t) $ becomes a four-petaled RW.
	
	(2) When $ b_{1}>\frac{1}{2} $, we have $ \zeta\in \ii\mathbb{R} \backslash \{0\} $ 
	because $ b_{1}=\frac{1}{2}\sqrt{1-\zeta^{2}} $. Then, let
	\[\zeta=\frac{1}{2}(s+\frac{1}{s})-2, \quad s=\frac{4\bar{\zeta}}{1+\bar{\zeta}}+\ii\frac{\sqrt{\bar{\zeta}(\bar{\zeta}^{2}-14\bar{\zeta} +1)}}{1+\bar{\zeta}},\]
	with $ \bar{\zeta}\in (0,7-4\sqrt{3})\cup(7+4\sqrt{3}, \infty) $, we obtain $ b_{1}=\frac{\bar{\zeta}^{2}-6\bar{\zeta} +1}{4(1+\bar{\zeta})\sqrt{\bar{\zeta}}} $
	and further choose
	\[\chi_{1}= \frac{\sqrt{\bar{\zeta}^{2}-14\bar{\zeta} +1}}{4\sqrt{\bar{\zeta}}} +\ii\frac{1-\bar{\zeta}}{1+\bar{\zeta}}, \quad \chi_{2}=-\chi_{1}.\]
	Therefore, we can determine that $ \rho_{1,1}=\rho_{2,2}>\frac{1}{3} $ and $ \rho_{2,1}=\rho_{1,2}<\frac{1}{3} $, which implies $ {q}_{1}^{[1,0]}(x,t) $ and $ {q}_{2}^{[0,1]}(x,t) $ are four-petaled RWs or dark RWs, and $ {q}_{2}^{[1,0]}(x,t) $ and $ {q}_{1}^{[0,1]}(x,t) $ are bright RWs.
	
	This completes the proof of Proposition \ref{prop-rw1}.
\end{proof}

\subsection{Adler--Moser polynomials}\label{subsec_amp}

In 1978, Adler and Moser \cite{am1978} constructed what are now known as the Adler--Moser polynomials to obtain rational solutions of the Korteweg-de Vries (KdV) equation. The Adler--Moser polynomials $ \Theta_{N}(z,\bm{\kappa}) $ can be expressed as the following determinant \cite{am2009}:
\begin{equation}\label{amp}
	\Theta_{N}(z,\bm{\kappa})= c_{N} \det_{1\leq i,j\leq N}\left( \varphi_{2i-j}(z, \bm{\kappa})\right), 
\end{equation}
where $ c_{N}=\prod_{k=1}^{N}(2k-1)!! $, $ \bm{\kappa}=(\kappa_{3}, \kappa_{5}, \kappa_{7},\ldots) $, all $ \varphi_{k}(z, \bm{\kappa}) $'s are the special Schur polynomials defined by
\begin{equation}\label{schur2}
	\sum_{k=0}^{\infty} \varphi_{k}(z, \bm{\kappa})\epsilon^{k}=\exp\left( z\epsilon+ \sum_{j=1}^{\infty} \kappa_{2j+1}\epsilon^{2j+1}\right),
\end{equation}
$ \varphi_{k}(z, \bm{\kappa})=0 $ for $ k<0 $, $ \varphi_{k+1}'(z, \bm{\kappa})=\varphi_{k}(z, \bm{\kappa}) $, and $ \kappa_{2j+1} $ $ (j\geq 1) $ are arbitrary complex constants. Since $  \varphi_{k}(z, \bm{\kappa}) $ is a $k$th-order polynomial, we can determine that the degree of the polynomial $ \Theta_{N}(z, \bm{\kappa}) $ is $ \frac{N(N+1)}{2} $. In particular, if there is only one nonzero complex parameter $ \kappa_{2k+1}=-\frac{2^{2k}}{2k+1} $ in $ \Theta_{N}(z, \bm{\kappa}) $, the Adler--Moser polynomials can be reduced to the Yablonskii--Vorob'ev polynomial hierarchy \cite{yv2003}. Here, we present the first few Adler--Moser polynomials, as follows:
\begin{equation}\label{ampf}
	\begin{aligned}
		&\Theta_{1}(z)= z, \quad \Theta_2(z;\kappa_{3}) = z^3 - 3\kappa_3, \quad \Theta_3(z;\kappa_{3},\kappa_{5}) = z^6 - 15\kappa_3 z^3 + 45\kappa_5 z - 45\kappa_3^2,\\
		&\Theta_4(z;\kappa_{3},\kappa_{5},\kappa_{7}) = z^{10} - 45\kappa_3 z^7 + 315\kappa_5 z^5 - 1575\kappa_7 z^3 + 4725\kappa_3\kappa_5 z^2 - 4725\kappa_3^3 z - 4725\kappa_5^2+ 4725\kappa_3\kappa_7.
	\end{aligned}
\end{equation}

%\begin{equation}\label{yvpk}
%	Q_{N}^{[k]}(z)=c_{N} \det_{1\leq i,j\leq N}\left( p_{2i-j}^{[k]}(z) \right), \quad  p_{k}^{[k]}(z)= \varphi_{k}(z,\kappa_{2k+1}),
%\end{equation}
%where $ \varphi_{j}(z,\kappa_{2k+1}) $ represents $ \varphi_{j}(z,\bm{\kappa}) $ with only one parameter $ \kappa_{2k+1} $. 

Root structures of the Adler--Moser polynomials have been studied in Refs. \cite{yang2024, linh2024b, yv2003, yang2021}. Depending on the choice of free parameter vector $ \bm{\kappa} $, the Adler--Moser polynomials $ \Theta_{N}(z, \bm{\kappa}) $ exhibit a rich variety of root structures. However, these root structures share a universal feature \cite{yang2024}: the multiplicity of each multiple root must be a triangular number of the form $ \frac{N_{0}(N_{0}+1)}{2} $ with $ 1\leq N_{0}\leq N $. The multiplicity of the multiple root is equal to $ \frac{N(N+1)}{2} $ if and only if $ \bm{\kappa} $ is a zero vector. In this paper, we focus on the case where all roots of $ \Theta_{N}(z, \bm{\kappa}) $ with multiple nonzero parameters $ \kappa_{2j+1} $ are single.
Here, we present several examples for $ \Theta_{N}(z, \bm{\kappa}) $ with $ N=3,4 $, and $ \kappa_{2j+1}=\frac{5}{2j+1} $ or $ \kappa_{2j+1}=\ii $, as shown in Fig. \ref{Fig1}. 

\begin{figure*}[!htbp]
	\centering
	\includegraphics[width=\textwidth]{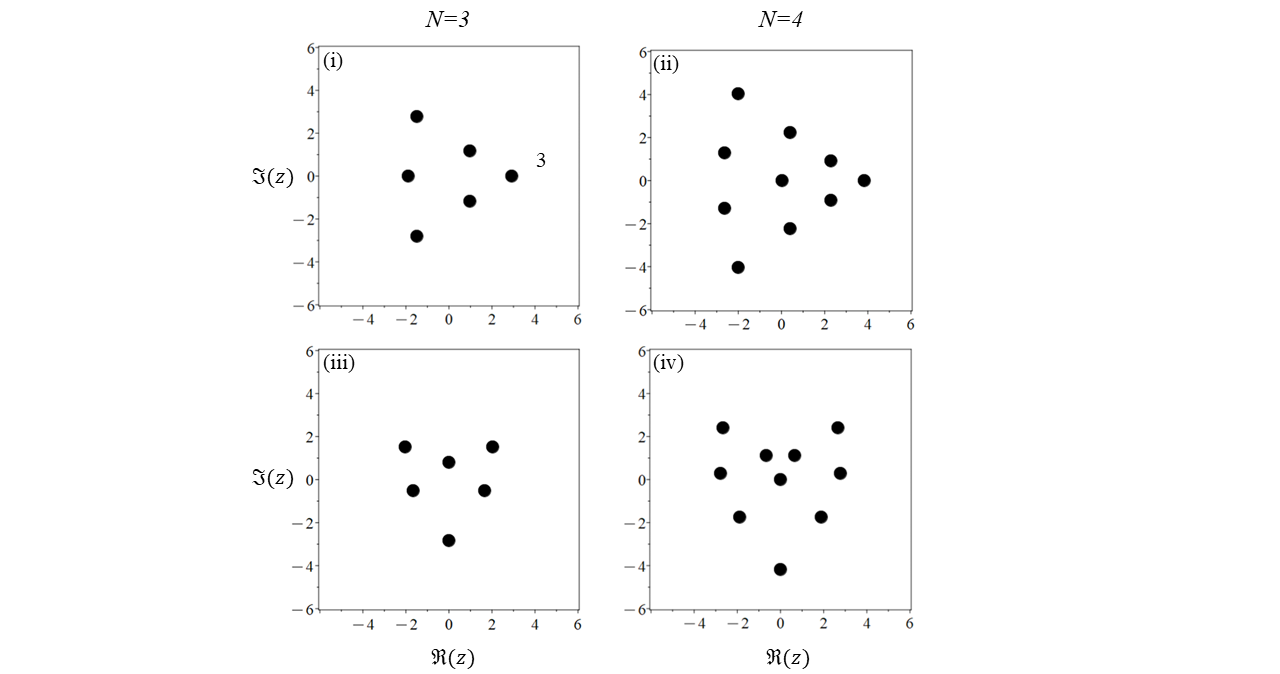}
	\caption{
		Root structures of the Adler--Moser polynomials $\Theta_{N}(z,\bm{\kappa})$, where $ N=3 $ in (i) and (iii), $ N=4 $ in (ii) and (iv), and the free parameters: $\kappa_{2j+1}=\frac{5}{2j+1}$ in (i)-(ii) and $ \kappa_{2j+1}=\ii $ in (iii)-(iv), respectively.
		Every black point represents one simple root.
	}
	\label{Fig1}
\end{figure*}

\section{asymptotic analysis and examples for the RW patterns } \label{Sec-asymptotic}

In the above Theorem \ref{Theo1}, we derived the explicit formula \eqref{horw} for high-order RW solutions $\mathbf{q}^{[N_{1}, N_{2}]}(x,t) $ of the CNLS equation \eqref{cnlse} in the case where the associated characteristic polynomial \eqref{chareq1} possesses double roots. These solutions involve multiple arbitrary complex parameters $ \{d_{l,j}\}_{1\leq l\leq 2, j\geq 1} $. When exactly one entry of $ [N_{1},N_{2}] $ is a nonzero positive integer and, among the parameters $ \{d_{l,j}\}_{1\leq l\leq 2, j\geq 1} $, only a single odd-indexed parameter $ d_{l,2j+1} $ is taken to be large, the corresponding high-order RW solutions exhibit RW patterns associated with the root structure of the Yablonskii--Vorob'ev polynomial hierarchy \cite{YangB2024book}. In this section, we investigate patterns of the high-order RW solution $\mathbf{q}^{[N_{1}, N_{2}]}(x,t) $ for the CNLS equation \eqref{cnlse} in the presence of multiple arbitrarily large parameters $ d_{l,j} $. Since the parameters $ d_{l,2j} $ $ (l=1,2, j\geq 1) $ do not affect the dynamical behavior of the RW patterns (see \cite{YangB2024book,linh2024b} for the reason), we can set $ d_{l,2j}=0 $ without loss of generality. Now, the number of the internal parameters $ d_{l,j} $ is $ N_{1}+N_{2} $. Then, following the approach in Ref. \cite{yang2024}, we choose appropriate form of the internal parameters $\{d_{l,2j+1}\}_{1\leq l \leq 2, 0 \leq j\leq N_{l}-1}$ with large value. This procedure yields new RW patterns in the CNLS equation \eqref{cnlse}, which are associated with the root structures of two distinct Adler–Moser polynomials.

Furthermore, to facilitate a rigorous analysis of the asymptotic behavior of RW patterns below, we present two lemmas and one proposition here.
\begin{lemma}\label{prop1}
	Let $N_{1}$ and $N_{2}$ be arbitrary positive integers and let $\gamma\in\mathbb{R}\setminus\{0\}$. Then the following determinant identity holds:
	\begin{equation}\label{pl-deo}
		\left| \mathbb{I}_{N_{2}} + (\gamma-1)\hat{\mathbf{R}}^{T} \hat{\mathbf{R}} \right|
		= \gamma^{N_{1}N_{2}},
	\end{equation}
	where
	\begin{equation}\label{pl-deo-w}
		\hat{\mathbf{R}} = \bigl( \hat{R}_{i,j}\bigr)_{1\leq i\leq N_{1},\,1\leq j\leq N_{2}},\qquad 
		\hat{R}_{i,j}= \gamma^{(i+j)/2 -1}  \sum_{k=1}^{\min(i,j)} \binom{i-1}{k-1}\binom{j-1}{k-1} \left( 1-\frac{1}{\gamma}\right)^{k-1}.
	\end{equation}
\end{lemma}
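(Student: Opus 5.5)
The plan is to prove \eqref{pl-deo} by induction on $N_{1}$ with $N_{2}$ fixed, adding one row of $\hat{\mathbf{R}}$ at a time and tracking the determinant through the matrix determinant lemma. This is preceded by a structural rewriting of $\hat{\mathbf{R}}$ that makes the required quadratic forms computable.

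\emph{Factorization and base case.} Write $(1-\frac{1}{\gamma})^{k-1}=(\gamma-1)^{k-1}\gamma^{1-k}$ and $\gamma^{(i+j)/2-1}=\gamma^{(i-1)/2}\gamma^{(j-1)/2}$. Then every entry of $\hat{\mathbf{R}}$ comes from the infinite matrix
\[
\mathbf{D}\,\mathbf{P}\,\mathbf{E}\,\mathbf{P}^{T}\mathbf{D},
\]
where $\mathbf{P}=\bigl(\binom{i-1}{k-1}\bigr)$ is the lower-triangular Pascal matrix, $\mathbf{D}=\diag(\gamma^{(i-1)/2})$ and $\mathbf{E}=\diag\bigl(((\gamma-1)/\gamma)^{k-1}\bigr)$. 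The matrix $\hat{\mathbf{R}}$ is its leading $N_{1}\times N_{2}$ block. Equivalently,
\[
\sum_{i,j\ge1}\hat{R}_{i,j}\,x^{i-1}y^{j-1}=\frac{1}{1-\sqrt{\gamma}\,x-\sqrt{\gamma}\,y+xy}.
\]
The base case $N_{1}=0$ (empty $\hat{\mathbf{R}}$) is trivial. As a sanity check, $N_{1}=N_{2}=1$ gives $\hat{R}_{1,1}=1$ and determinant $\gamma$.

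\emph{Inductive step.} Let $\mathbf{M}_{N_{1}}=\mathbb{I}_{N_{2}}+(\gamma-1)\hat{\mathbf{R}}^{T}\hat{\mathbf{R}}$. Passing from $N_{1}$ to $N_{1}+1$ adds the rank-one term $(\gamma-1)\mathbf{r}\mathbf{r}^{T}$, where $\mathbf{r}^{T}=(\hat{R}_{N_{1}+1,j})_{j\le N_{2}}$ is the new row. By the matrix determinant lemma,
\[
\det\mathbf{M}_{N_{1}+1}=\det\mathbf{M}_{N_{1}}\bigl(1+(\gamma-1)\,\mathbf{r}^{T}\mathbf{M}_{N_{1}}^{-1}\mathbf{r}\bigr).
\]
It therefore suffices to prove
\[
\mathbf{r}^{T}\mathbf{M}_{N_{1}}^{-1}\mathbf{r}=\frac{\gamma^{N_{2}}-1}{\gamma-1}=1+\gamma+\cdots+\gamma^{N_{2}-1}.
\]
To obtain $\mathbf{M}_{N_{1}}^{-1}\mathbf{r}$, I would first compute small cases, $N_{1},N_{2}\le 3$, by hand and guess a closed form for the solution $\mathbf{u}$ of $\mathbf{M}_{N_{1}}\mathbf{u}=\mathbf{r}$. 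I expect its entries to be again binomial sums in $\sqrt{\gamma}$ and $(\gamma-1)$. I would then verify $\mathbf{M}_{N_{1}}\mathbf{u}=\mathbf{r}$ and compute $\mathbf{r}^{T}\mathbf{u}$ using the generating function above together with Vandermonde/Chu-type convolution identities. By Sylvester's identity $|\mathbb{I}_{N_{2}}+\mathbf{A}\mathbf{B}|=|\mathbb{I}_{N_{1}}+\mathbf{B}\mathbf{A}|$, one may instead induct on whichever of $N_{1},N_{2}$ is more convenient, since the claimed value $\gamma^{N_{1}N_{2}}$ is symmetric.

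\emph{Alternative route and main obstacle.} If the explicit inverse proves unwieldy, I would instead seek a congruence factorization
\[
\mathbf{M}_{N_{1}}=\mathbf{L}^{T}\Delta\,\mathbf{L},
\]
with $\mathbf{L}$ unipotent triangular and $\Delta=\gamma^{N_{1}}\mathbb{I}_{N_{2}}$, or with diagonal entries whose product is $\gamma^{N_{1}N_{2}}$. The motivation is that $\mathbf{D}\mathbf{P}\mathbf{E}\mathbf{P}^{T}\mathbf{D}$ behaves like a hyperbolic ("squeeze") block for which a relation of the form $\mathbb{I}+(\gamma-1)\mathbf{R}^{T}\mathbf{R}$ is Gram-like. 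The main obstacle in either route is the truncation: the identity must hold for the finite $N_{1}\times N_{2}$ block, not just for the infinite matrix. Controlling the boundary terms in the quadratic form $\mathbf{r}^{T}\mathbf{M}_{N_{1}}^{-1}\mathbf{r}$ is where I expect the real work to lie. As a fallback, both sides are polynomials in $\sqrt{\gamma}$, so the identity can be cross-checked symbolically for small sizes before attempting the general argument.
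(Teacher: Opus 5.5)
Your factorization $\mathbf{D}\mathbf{P}\mathbf{E}\mathbf{P}^{T}\mathbf{D}$, the bivariate generating function, and the rank-one reduction are all correct. However, there is a genuine gap: nothing beyond $N_{1}=1$ is actually proved. By the matrix determinant lemma, the identity you need,
\[
\mathbf{r}^{T}\mathbf{M}_{N_{1}}^{-1}\mathbf{r}=\frac{\gamma^{N_{2}}-1}{\gamma-1},
\]
is literally equivalent to $\det\mathbf{M}_{N_{1}+1}=\gamma^{N_{2}}\det\mathbf{M}_{N_{1}}$. So the reduction only restates the lemma as a ratio of consecutive cases and does not make it easier. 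You give no formula for $\mathbf{M}_{N_{1}}^{-1}\mathbf{r}$, only a plan to guess one from small cases and check it by unspecified convolution identities. The alternative route is in the same state. The target $\Delta=\gamma^{N_{1}}\mathbb{I}_{N_{2}}$ is in fact right, but you provide neither $\mathbf{L}$ nor any reason why all pivots should coincide. The truncation difficulty you call ``the real work'' is the whole content of the lemma, and it is left unresolved.

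The missing idea is a shift (displacement) identity for the entries of $\mathbf{M}$ itself. Fix $N_{1}$ and write $R_{i,j}=\delta_{i,j}+(\gamma-1)\sum_{v=1}^{N_{1}}\hat{R}_{v,i}\hat{R}_{v,j}$. These entries do not depend on $N_{2}$, so $\mathbf{M}$ is the leading $N_{2}\times N_{2}$ block of one fixed infinite matrix. The paper proves
\[
R_{i,j}=R_{i+1,j+1}-\frac{R_{i+1,1}R_{1,j+1}}{R_{1,1}},\qquad R_{1,1}=\gamma^{N_{1}}.
\]
This says the Schur complement of the $(1,1)$ pivot of the $N_{2}\times N_{2}$ block is the leading $(N_{2}-1)\times(N_{2}-1)$ block. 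Iterating, all $N_{2}$ pivots equal $\gamma^{N_{1}}$. Equivalently, $\mathbf{M}=\gamma^{-N_{1}}\mathbf{T}\mathbf{T}^{T}$, where $\mathbf{T}$ is lower-triangular Toeplitz with first column $(R_{i,1})_{i}$ and diagonal $\gamma^{N_{1}}$. This realizes your congruence factorization and makes the induction on $N_{1}$ unnecessary.

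The truncation is handled with the column generating functions extracted from your bivariate one:
\[
\hat{\mathbf{R}}_{j}(\varepsilon)=\sum_{v\ge1}\hat{R}_{v,j}\varepsilon^{v-1}=\gamma^{(j-1)/2}\frac{(1-\gamma^{-1/2}\varepsilon)^{j-1}}{(1-\gamma^{1/2}\varepsilon)^{j}},
\]
so that $(1-\gamma^{1/2}\varepsilon)\hat{\mathbf{R}}_{j+1}+(\varepsilon-\gamma^{1/2})\hat{\mathbf{R}}_{j}=0$. For the degree-$(N_{1}-1)$ truncations $\tilde{\mathbf{R}}_{j}$, this relation survives up to a single monomial $s_{j}\varepsilon^{N_{1}}$. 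The quantity $(R_{i+1,j+1}-R_{i,j})/(\gamma-1)$ is the constant term of $\tilde{\mathbf{R}}_{i+1}(\varepsilon)\tilde{\mathbf{R}}_{j+1}(\varepsilon^{-1})-\tilde{\mathbf{R}}_{i}(\varepsilon)\tilde{\mathbf{R}}_{j}(\varepsilon^{-1})$. Substituting the truncated relation makes the bulk terms cancel identically, leaving only boundary contributions. Using the relation at $\varepsilon=\gamma^{-1/2}$, their constant term is $s_{i}s_{j}/(\gamma-1)$. Evaluating the relation at $\varepsilon=\gamma^{1/2}$, together with $\hat{R}_{k,1}=\gamma^{(k-1)/2}$, identifies $s_{j}=-\gamma^{-N_{1}/2}R_{1,j+1}$. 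This is the concrete mechanism that controls the boundary terms your proposal leaves open.
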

Its proof is provided in Sec. \ref{subsec_proofp1}.

\begin{lemma}\label{prop2}
	Let $N_{1}$ and $N_{2}$ be arbitrary positive integers. Then the identities
	\begin{equation}\label{prop-hpli}
		\begin{aligned}
			\hat{\mathbf{P}}_{12}\hat{\mathbf{P}}_{11}^{-1}
			&= \bigl( \binom{j-1}{i-1}(C_{11}-C_{12})^{\,j-i}\bigr)_{1\leq i\leq N_{12},\,1\leq j\leq N_{1}},\\[2mm]
			\hat{\mathbf{P}}_{21}\hat{\mathbf{P}}_{22}^{-1}
			&= \bigl( \binom{j-1}{i-1}(C_{22}-C_{21})^{\,j-i}\bigr)_{1\leq i\leq N_{12},\,1\leq j\leq N_{2}},
		\end{aligned}
	\end{equation}
	hold, where $ \binom{j}{i} $ is the binomial coefficient,
	\begin{equation}\label{utm}
		\begin{aligned}
			&\hat{\mathbf{P}}_{lm}
			= \bigl[ S_{j-i}(\mathbf{h}_{lm}(i-1)) \bigr]_{1\leq i\leq N_{lm},\,1\leq j\leq N_{l}}, \quad N_{lm}=\min(N_{l},N_{m}), \\
			&\mathbf{h}_{lm}(i-1)
			= \mathbf{h}_{1}^{[l,m]} + (i-1)\mathbf{h}_{2}^{[l,m]},\quad l,m=1,2,
		\end{aligned}
	\end{equation}
	and the vectors $\mathbf{h}_{n}^{[l,m]}$ $(n=1,2)$ and the constants $C_{lm}$ are given in the equations \eqref{xpm1}, \eqref{xpm2}, and \eqref{tauij1}.
\end{lemma}
See Sec. \ref{subsec_proofp2} for the proof of Lemma \ref{prop2}.

\begin{prop}\label{prop3}
	Let $N_{1}$ and $N_{2}$ be arbitrary positive integers, and let 
	$\chi_{1},\chi_{2}\in\mathbb{C}\setminus\mathbb{R}$ with $\chi_{1}\neq\chi_{2}$. 
	Then the following determinant identities hold:
	\begin{equation}\label{dete-p}
		\frac{\bigl|\mathbf{P}_{N_{1}+1,N_{2}}\bigr|\,\bigl|\mathbf{P}_{N_{1}-1,N_{2}}\bigr|}
		{\bigl|\mathbf{P}_{N_{1},N_{2}}\bigr|^{2}}
		= |C_{11}|^{2},  
		\qquad
		\frac{\bigl|\mathbf{P}_{N_{1},N_{2}+1}\bigr|\,\bigl|\mathbf{P}_{N_{1},N_{2}-1}\bigr|}
		{\bigl|\mathbf{P}_{N_{1},N_{2}}\bigr|^{2}}
		= |C_{22}|^{2},
	\end{equation}
	where
	\begin{equation}\label{dete-para2}
		\begin{aligned}
			\mathbf{P}_{N_{1},N_{2}}
			&=
			\begin{pmatrix}
				\dfrac{1}{\chi_{1}^{*}-\chi_{1}}\,\bar{\mathbf{P}}_{11}^{\dagger}\bar{\mathbf{P}}_{11}
				&
				\dfrac{1}{\chi_{1}^{*}-\chi_{2}}\,\bar{\mathbf{P}}_{12}^{\dagger}\bar{\mathbf{P}}_{21}
				\\[3mm]
				\dfrac{1}{\chi_{2}^{*}-\chi_{1}}\,\bar{\mathbf{P}}_{21}^{\dagger}\bar{\mathbf{P}}_{12}
				&
				\dfrac{1}{\chi_{2}^{*}-\chi_{2}}\,\bar{\mathbf{P}}_{22}^{\dagger}\bar{\mathbf{P}}_{22}
			\end{pmatrix},\\[2mm]
			\bar{\mathbf{P}}_{lm}
			&= \mathbf{G}_{lm}\hat{\mathbf{P}}_{lm},\qquad 
			\mathbf{G}_{lm}=\diag\bigl(1, C_{lm}, C_{lm}^{2}, \ldots, C_{lm}^{N_{lm}-1}\bigr),
		\end{aligned}
	\end{equation}
	and the matrices $\hat{\mathbf{P}}_{lm}$ are given in the equation \eqref{utm}.
\end{prop}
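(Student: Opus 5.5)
The plan is to compute $\bigl|\mathbf{P}_{N_{1},N_{2}}\bigr|$ in closed form and then read off both ratios. Write $\mathbf{P}_{N_{1},N_{2}}=\begin{pmatrix}\mathbf{A}&\mathbf{B}\\ \mathbf{C}&\mathbf{D}\end{pmatrix}$ with the blocks as in \eqref{dete-para2}.

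\textbf{Step 1: the diagonal blocks.} Since $S_{0}=1$ and $S_{k}=0$ for $k<0$, each $\hat{\mathbf{P}}_{ll}$ in \eqref{utm} is a square upper unitriangular matrix. Hence $\det\bar{\mathbf{P}}_{ll}=\det\mathbf{G}_{ll}=C_{ll}^{N_{l}(N_{l}-1)/2}$. It follows that
\[
|\mathbf{A}|=(\chi_{1}^{*}-\chi_{1})^{-N_{1}}|C_{11}|^{N_{1}(N_{1}-1)}.
\]
In particular $\mathbf{A}$ is invertible, because $\chi_{1}\notin\mathbb{R}$ and $C_{11}\neq0$ (as $\chi_{1}^{[1]}\neq 0$).

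\textbf{Step 2: the Schur complement.} I use $|\mathbf{P}_{N_{1},N_{2}}|=|\mathbf{A}|\,|\mathbf{D}-\mathbf{C}\mathbf{A}^{-1}\mathbf{B}|$. Factor $\bar{\mathbf{P}}_{22}^{\dagger}$ on the left and $\bar{\mathbf{P}}_{22}$ on the right of the Schur complement. This gives
\[
\mathbf{D}-\mathbf{C}\mathbf{A}^{-1}\mathbf{B}=\frac{1}{\chi_{2}^{*}-\chi_{2}}\,\bar{\mathbf{P}}_{22}^{\dagger}\bigl(\mathbb{I}_{N_{2}}-\mu\,\mathbf{W}^{\dagger}\mathbf{W}\bigr)\bar{\mathbf{P}}_{22},
\qquad
\mu=\frac{(\chi_{1}^{*}-\chi_{1})(\chi_{2}^{*}-\chi_{2})}{(\chi_{1}^{*}-\chi_{2})(\chi_{2}^{*}-\chi_{1})},
\]
where
\[
\mathbf{W}=\bigl(\bar{\mathbf{P}}_{12}\bar{\mathbf{P}}_{11}^{-1}\bigr)^{\dagger}\,\bar{\mathbf{P}}_{21}\bar{\mathbf{P}}_{22}^{-1}
=\mathbf{G}_{11}^{-\dagger}\bigl(\hat{\mathbf{P}}_{12}\hat{\mathbf{P}}_{11}^{-1}\bigr)^{\dagger}\mathbf{G}_{12}^{\dagger}\mathbf{G}_{21}\bigl(\hat{\mathbf{P}}_{21}\hat{\mathbf{P}}_{22}^{-1}\bigr)\mathbf{G}_{22}^{-1}.
\]
Lemma \ref{prop2} identifies the two middle factors as explicit binomial matrices with entries $\binom{j-1}{i-1}(C_{11}-C_{12})^{j-i}$ and $\binom{j-1}{i-1}(C_{22}-C_{21})^{j-i}$. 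Multiplying out, $\mathbf{W}$ becomes an $N_{1}\times N_{2}$ matrix whose $(i,j)$ entry is a Vandermonde-type convolution $\sum_{k}\binom{i-1}{k-1}\binom{j-1}{k-1}(\cdots)^{k-1}$, times diagonal scalings in $i$ and $j$.

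\textbf{Step 3: the key algebraic identification (main obstacle).} Using the explicit definitions $C_{lm}=\chi_{l}^{[1]}/(\chi_{l}-\chi_{m}^{*})$, I compute the differences $C_{11}-C_{12}$ and $C_{22}-C_{21}$ and the ratios $C_{12}^{*}C_{21}/(C_{11}^{*}C_{22})$. The goal is to show that, after conjugation by diagonal unitary-type scalings (which do not change the determinant), $\mathbb{I}-\mu\mathbf{W}^{\dagger}\mathbf{W}$ is exactly $\mathbb{I}_{N_{2}}+(\gamma-1)\hat{\mathbf{R}}^{T}\hat{\mathbf{R}}$ from Lemma \ref{prop1}. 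The natural candidate is
\[
\gamma=1-\mu=\frac{|\chi_{1}-\chi_{2}|^{2}}{|\chi_{1}-\chi_{2}^{*}|^{2}},
\]
which is real and nonzero since $\chi_{1}\neq\chi_{2}$. I expect the main difficulty to be checking that the effective ratio inside the sum is precisely $1-1/\gamma$ and that the prefactors collapse to $\gamma^{(i+j)/2-1}$. I would first test this on $N_{1}=N_{2}=1$ and $N_{1}=N_{2}=2$ to fix the scalings and phases.

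\textbf{Step 4: the closed form and the ratios.} Granting Step 3, Lemma \ref{prop1} gives
\[
|\mathbf{P}_{N_{1},N_{2}}|=(\chi_{1}^{*}-\chi_{1})^{-N_{1}}(\chi_{2}^{*}-\chi_{2})^{-N_{2}}\,|C_{11}|^{N_{1}(N_{1}-1)}|C_{22}|^{N_{2}(N_{2}-1)}\,\gamma^{N_{1}N_{2}}.
\]
This also holds when $N_{1}=0$ or $N_{2}=0$, where $\mathbf{P}$ reduces to a single diagonal block and Step 1 alone applies. In $\frac{|\mathbf{P}_{N_{1}+1,N_{2}}||\mathbf{P}_{N_{1}-1,N_{2}}|}{|\mathbf{P}_{N_{1},N_{2}}|^{2}}$ every factor whose exponent is affine in $N_{1}$ cancels: this covers $(\chi_{1}^{*}-\chi_{1})^{-N_{1}}$, the whole $N_{2}$-part, and $\gamma^{N_{1}N_{2}}$. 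The second difference of $N_{1}(N_{1}-1)$ equals $2$, which leaves $|C_{11}|^{2}$. The second identity in \eqref{dete-p} follows symmetrically, or by running Steps 2–3 with the roles of the two blocks interchanged.
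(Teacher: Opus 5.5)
Your proposal is correct and is essentially the paper's own proof: a Schur complement on the $(1,1)$ block, Lemma~\ref{prop2} to make the off-diagonal factor $\mathbf{W}$ explicit, reduction to Lemma~\ref{prop1} with $\gamma=1-\mu=|\chi_1-\chi_2|^2/|\chi_1-\chi_2^*|^2$, and then the second difference of the closed form for $|\mathbf{P}_{N_1,N_2}|$. The Step 3 identification you flag does go through, exactly as in the paper's \eqref{prop-ppij}: the scalings $(C_{11}-C_{12})^*/C_{11}^*=(\chi_1-\chi_2)/(\chi_1^*-\chi_2)$ and $(C_{22}-C_{21})/C_{22}=(\chi_1^*-\chi_2^*)/(\chi_1^*-\chi_2)$ both have modulus $\gamma^{1/2}$, and the ratio inside the sum is $C_{12}^*C_{21}/\bigl((C_{11}-C_{12})^*(C_{22}-C_{21})\bigr)=1-1/\gamma$.
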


The proof of Proposition \ref{prop3} is given in Sec. \ref{subsec_proofp3}.

Next, let the Adler--Moser polynomials $ \Theta_{N_{l}}(z_{l}, \bm{\kappa}_{l}) $ both have only simple roots $  z_{l,i} $, and denote their respective sets of simple roots as follows:
\begin{equation}\label{srset}
	{Z}_l=\left\lbrace  z_{l,i}\in\mathbb{C}\ \mid \ \Theta_{N_l}(z_{l,i},\bm{\kappa}_l)=0,\ \Theta_{N_l}'(z_{l,i},\bm{\kappa}_l)\ne 0,\
	i=1,2,\ldots,\frac{N_l(N_l+1)}{2} \right\rbrace ,\quad l=1,2.
\end{equation}
Then let the internal large parameters $\{d_{l,2j+1}\}_{1\leq l \leq 2, 0 \leq j\leq N_{l}-1}$ in the $[N_{1}, N_{2}]$-order vector RW solution $\mathbf{q}^{[N_{1}, N_{2}]}(x,t) $  of the CNLS equation \eqref{cnlse} be the following form
\begin{equation}\label{dj1}
	d_{l,2j+1}=\kappa_{l,2j+1}A^{2j+1}, 
\end{equation}
where $A\gg 1 $ is a large positive constant, $\{\kappa_{l,2j+1}\}_{1\leq l \leq 2, 0 \leq j\leq N_{l}-1}$ are free complex constants (not all of which are zero). 
Meanwhile, for $ l=1,2 $, these roots $ z_{l,i} $ of $ \Theta_{N_{l}}(z_{l}, \bm{\kappa}_{l}) $ are in one-to-one correspondence with two set of points in the $ (x,t) $-plane,
\begin{equation}\label{srset2}
	{B}_{l} =\left\lbrace  (x_{l,i}, t_{l,i}) \in \mathbb{R}^2 \mid i=1,2,\ldots,\frac{N_{l}(N_{l}+1)}{2} \right\rbrace, \quad l=1,2,
\end{equation}
respectively, via the map
\begin{equation}\label{map0}
	\psi_{l}:\, \mathbb{R}^{2}\to \mathbb{C}, \quad \psi_{l}=\kappa_{l,1}+ \ii\chi_{l}^{[1]}(x+\chi_l t)A^{-1}, \quad l=1,2,
\end{equation}
where $ \chi_{l}^{[1]}\in \mathbb{C} $ are the arbitrary parameters, and $ \chi_{l} $ are given in equation \eqref{chin}. More precisely, for each $ i=1,2,\ldots$, $\frac{N_{l}(N_{l}+1)}{2} $, we associate $ (x_{l,i}, t_{l,i})\in B_{l} $ with each root $ z_{l,i}\in Z_{l} $ by
\[ z_{l,i}=\psi_{l}(x_{l,i}, t_{l,i}), \quad l=1,2.\]
Throughout, we assume that the above correspondence is bijective (equivalently, $ \psi_{l} $ is injective on the domain under consideration), so that each simple root $ z_{l,i} $ corresponds to a unique point $ (x_{l,i},t_{l,i}) $ in the $ (x,t) $-plane. 
Then, we can obtain the RW patterns corresponding to the root structures of the two polynomials $ \Theta_{N_{l}}(z_{l}, \bm{\kappa}_{l}) $. The detailed asymptotic analysis is presented in the following theorem.

\begin{theorem}\label{theo-nmrp}
	Let the Adler--Moser polynomials $\Theta_{N_{l}}(z_{l}, \bm{\kappa}_{l})$ possess only simple roots $ z_{l,i} \in Z_{l}$ defined in the equation \eqref{srset} with $ l=1,2 $. Assume further that the associated point sets $ B_{l} $ satisfy a uniform separation condition:
	there exists a constant $ c>0 $ such that 
	\[\|(x_{1,i},t_{1,i})-(x_{2,j},t_{2,j})\|_{2}\geq c, \quad  1\leq i\leq \frac{N_{1}(N_{1}+1)}{2},\quad  1\leq j\leq \frac{N_{2}(N_{2}+1)}{2}, \]
	where $ (x_{l,i},t_{l,i}) \in B_{l} $, and $ \|\cdot\|_{2} $ denotes the Euclidean norm on $ \mathbb{R}^{2} $.
	Then, when the internal large parameters $\{d_{l,2j+1}\}_{1\leq l \leq 2, 0 \leq j\leq N_{l}-1}$ are defined by the equation \eqref{dj1}, the pattern of the vector RW  solution $\mathbf{q}^{[N_{1}, N_{2}]}(x,t) $ for the CNLS equation \eqref{cnlse} admits the following asymptotics: 
	
	\begin{itemize}
		\item[(1).] The patterns of $ {q}_{n}^{[N_{1}, N_{2}]}(x,t) $ $ (n=1,2) $ both asymptotically split into $\frac{N_{1}(N_{1}+1)}{2} $ fundamental first-order RWs $ \hat{q}_{n,1}(x-x_{1,0}, t-t_{1,0})q_{n}^{[0]}(x,t) $ and $\frac{N_{2}(N_{2}+1)}{2} $ fundamental first-order RWs $ \hat{q}_{n,2}(x-x_{2,0}, t-t_{2,0})q_{n}^{[0]}(x,t) $, where $ q_{n}^{[0]}(x,t)  $ and $ \hat{q}_{n,l}(x,t) $ are given in equations \eqref{seeds} and \eqref{qk1}, the positions $ (x_{l,0},t_{l,0}) $ are defined by
		\begin{equation}\label{xtli1}
			\ii\chi_{l}^{[1]}\left( x_{l,0} +\chi_{l} t_{l,0}\right) +\Delta_{l}=(z_{l,0}-\kappa_{l,1})A, 
		\end{equation}
		with $ z_{l,0} $ being each of the simple roots of $ \Theta_{N_{l}}(z_{l},\bm{\kappa}_{l}) $ and the constant term $ \Delta_{l}$ given in the equations \eqref{delta1} and \eqref{delta2}.
		
		\item[(2).] When $A\gg 1$ and $\sqrt{(x-{x}_{l,0})^{2}+(t-{t}_{l,0})^{2}} = \cO(1)$, the RW solutions $ {q}_{n}^{[N_{1}, N_{2}]}(x,t) $ $ (n=1,2) $ satisfy the following asymptotic expression:
		\begin{equation}\label{asym_nmrp}
			q_{n}^{[N_{1}, N_{2}]}(x,t) =\hat{q}_{n,l}(x-{x}_{l,0}, t-{t}_{l,0}) q_{n}^{[0]}(x,t) +\cO(A^{-1}).
		\end{equation}
	
		\item[(3).] When $ A\rightarrow \infty $ and $ (x,t) $ is not near the positions of {all $ (x_{l,0},t_{l,0}) $ given in the equation \eqref{xtli1}}, the vector RW solution $\mathbf{q}^{[N_{1}, N_{2}]}(x,t)$ asymptotically approaches the vector plane wave background $\mathbf{q}^{[0]}(x,t)$ in the equation \eqref{seeds}.
	\end{itemize}

\end{theorem}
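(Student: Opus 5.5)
We follow the Cauchy--Binet / Laplace-expansion strategy used for Adler--Moser-type patterns in the scalar and multi-component cases. The plan proceeds in five steps.

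\textbf{Step 1: Cauchy--Binet factorization.} Each $\tau^{(k)}$ is a Gram-type determinant. Each entry $\tau^{(k,l,m)}_{2i-1,2j-1}$ is a sum over $v$ of products of Schur polynomials, so $\tau^{(k)}$ can be written as $\det(\Phi^{(k)\dagger}\mathbf{D}\Psi^{(k)})$. Here the rows of $\Psi^{(k)}$ carry $S_{2j-1-v}(\mathbf{x}^{+}_{m}+\dots)$, and $\mathbf{D}$ collects the factors $(C^{*}_{lm}C_{ml})^{v}/(\chi^{*}_{l}-\chi_{m})$. Applying the Cauchy--Binet formula expresses $\tau^{(k)}$ as a sum over index sets $0\le\nu_{1}<\dots<\nu_{N}$, split into the blocks $l=1,2$, of products of $N\times N$ minors.

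\textbf{Step 2: scaling and the leading term away from roots.} Substituting $d_{l,2j+1}=\kappa_{l,2j+1}A^{2j+1}$ and $d_{l,2j}=0$ gives $S_{k}(\mathbf{x}^{+}_{l}+\dots)=A^{k}\varphi_{k}(\hat z_{l},\bm{\kappa}_{l})\,[1+\cO(A^{-1})]$ with $\hat z_{l}=A^{-1}\ii\chi_{l}^{[1]}(x+\chi_{l}t)+\kappa_{l,1}=\psi_{l}$. This uses $\alpha_{l,1}=\ii\chi^{[1]}_{l}$, $\beta_{l,1}=-\chi_{l}\chi^{[1]}_{l}$ up to the factor $\ii$, and the vanishing of the first-order terms coming from $\lambda_{l}$. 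The dominant Cauchy--Binet term is the one with $\nu=(0,1,\dots,N_{1}-1;0,\dots,N_{2}-1)$ in each block. Up to a nonzero constant it gives
\[
\tau^{(k)}\sim \big|\mathbf{P}_{N_{1},N_{2}}\big|\;\big|c_{N_{1}}^{-1}\Theta_{N_{1}}(\hat z_{1})\big|^{2}\big|c_{N_{2}}^{-1}\Theta_{N_{2}}(\hat z_{2})\big|^{2}A^{N_{1}(N_{1}+1)+N_{2}(N_{2}+1)}.
\]
Here the constant prefactor is exactly the block determinant $|\mathbf{P}_{N_{1},N_{2}}|$ of Proposition \ref{prop3}, obtained after Lemma \ref{prop2} is used to triangularize the $h$-shifts. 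To see that this factor is nonzero we would use Lemma \ref{prop1}: it computes the Schur complement of the off-diagonal block, whose entries reduce to $\hat R_{i,j}$ with $\gamma$ a cross-ratio such as $|\chi_{1}-\chi_{2}|^{2}/|\chi_{1}-\chi_{2}^{*}|^{2}$. The prefactor is independent of $k$, and the $\mathbf{h}_{3}$ shifts enter only at lower order. Hence $\tau^{(n)}/\tau^{(0)}\to1$ whenever both $\Theta_{N_{l}}(\hat z_{l})\neq0$, which proves part (3).

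\textbf{Step 3: local analysis near a root.} Suppose $\hat z_{1}$ is near a simple root $z_{1,0}$ while $\hat z_{2}$ stays $\cO(1)$ away from $Z_{2}$; the separation hypothesis guarantees this. The leading term then vanishes to first order, and the competing terms are:
\begin{itemize}
\item[(a)] the base term, with its $\Theta_{N_{1}}$ factor now of order $A^{-1}(\hat x_{1}^{-}\hat x_{1}^{+})$ from Taylor expansion;
\item[(b)] terms in which the last index of block 1 is shifted to $N_{1}$, or in which $v=1$ coupling terms appear.
\end{itemize}
Collecting contributions of order $A^{N_{1}(N_{1}+1)-2}$, we expect four surviving terms. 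Their constant coefficients are $|\mathbf{P}_{N_{1},N_{2}}|$, $|\mathbf{P}_{N_{1}\pm1,N_{2}}|$, and mixed minors. Dividing by $|\mathbf{P}_{N_{1},N_{2}}|$ and using the identity $|\mathbf{P}_{N_{1}+1,N_{2}}||\mathbf{P}_{N_{1}-1,N_{2}}|/|\mathbf{P}_{N_{1},N_{2}}|^{2}=|C_{11}|^{2}$ from Proposition \ref{prop3}, the bracket becomes $(\hat x_{1}^{-}+\dots)(\hat x_{1}^{+}+\dots)+|C_{11}|^{2}$ with the appropriate $\mathbf{h}_{3}^{[1,k]}$ shifts. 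This is exactly the $[1,0]$ first-order $\tau$-function of \eqref{qk1}, shifted by $\Delta_{1}$; $\Delta_{1}$ absorbs the subleading $\mathbf{h}_{1}$ contributions and the mixed-minor ratio. This yields \eqref{xtli1} and \eqref{asym_nmrp} with $\cO(A^{-1})$ error, and the case $l=2$ is symmetric. Part (1) then follows by counting roots; injectivity of $\psi_{l}$ makes each root correspond to exactly one wave.

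\textbf{Main obstacle.} The hardest step is Step 3: isolating exactly the four leading contributions in the two-block Cauchy--Binet sum and identifying their coefficient ratios. In particular, we must show that the cross-block minors collapse, via Lemma \ref{prop2}, to the matrices $\bar{\mathbf{P}}_{lm}=\mathbf{G}_{lm}\hat{\mathbf{P}}_{lm}$, and that these ratios give precisely $|C_{11}|^{2}$ and the shift $\Delta_{1}$. We would first carry this out for $N_{1}=1$ with general $N_{2}$, then treat the general case by Laplace expansion along the block-1 rows.
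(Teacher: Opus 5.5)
Your architecture matches the paper's: Cauchy--Binet, then the leading term $|\mathbf{P}_{N_1,N_2}|\,|\Theta_{N_1}|^2|\Theta_{N_2}|^2A^{\Gamma}$ for part (3), then a four-term local expansion near a simple root closed by Proposition~\ref{prop3}. However, Step~3 has a genuine gap exactly at what you call the main obstacle.

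Every Cauchy--Binet coefficient is an $N\times N$ minor of the constant middle matrix $\mathbf{P}$ in \eqref{mat_p}. So $|\mathbf{P}_{N_1\pm1,N_2}|$, which are $(N\pm1)\times(N\pm1)$, cannot be among them. The four terms surviving at order $A^{\Gamma-2}$ come from choosing block-1 indices $(1,\dots,N_1)$ or $(1,\dots,N_1-1,N_1+1)$ on each side. The shifted minor has leading coefficient $c_{N_1}^{-1}\Theta_{N_1}'(z_{1,0})$, because differentiating the staircase determinant in $z$ leaves only the shift of the last index. The four coefficients are $|\mathbf{P}[\mathbf{r}_{11},\mathbf{s}_{11}]|$, $|\mathbf{P}[\mathbf{r}_{11},\mathbf{s}_{21}]|$, $|\mathbf{P}[\mathbf{r}_{21},\mathbf{s}_{11}]|$ and $|\mathbf{P}[\mathbf{r}_{21},\mathbf{s}_{21}]|$. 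After completing the square, the leftover constant is $(|\mathbf{P}[\mathbf{r}_{21},\mathbf{s}_{21}]|\,|\mathbf{P}[\mathbf{r}_{11},\mathbf{s}_{11}]|-|\mathbf{P}[\mathbf{r}_{21},\mathbf{s}_{11}]|\,|\mathbf{P}[\mathbf{r}_{11},\mathbf{s}_{21}]|)/|\mathbf{P}[\mathbf{r}_{11},\mathbf{s}_{11}]|^{2}$.

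The missing idea is the Desnanot--Jacobi identity for $\mathbf{P}_{N_1+1,N_2}$ with rows and columns $N_1$ and $N_1+1$ removed. It turns this numerator into $|\mathbf{P}_{N_1+1,N_2}|\,|\mathbf{P}_{N_1-1,N_2}|$, and only then does Proposition~\ref{prop3} yield $|C_{11}|^2$. With this identity, $\Delta_1$ needs no evaluation: it is just the minor ratio \eqref{delta1}, and your planned induction on $N_1$ via Laplace expansion is unnecessary. You do need $\bar\Delta_1=\Delta_1^{*}$ so that the shift is a real translation of $(x,t)$; this follows from $\mathbf{P}^{\dagger}=-\mathbf{P}$.

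There are two further points. First, in Step~3 you Taylor-expand $\Theta_{N_1}(\hat z_1)$ but drop the $\cO(A^{-1})$ corrections you introduced in Step~2. Since $x,t=\cO(A)$ near the root, the part $\alpha_{1,2}x+\ii\beta_{1,2}t$ of $x^{\pm}_{1,2}$ enters at order $A^{-1}$, the same order as the local variable. It adds $x^{+}_{1,2}(\hat x_0,\hat t_0)\,\mathcal{L}_1\Theta_{N_1}$ to the block-1 minor, cf.\ \eqref{asyms1}. The paper removes this term with Proposition~\ref{prop-app1}: $\mathcal{L}_1\Theta=\mathcal{L}_2\Theta$, and the column-shift sum vanishes because each shifted column duplicates its neighbour or is zero. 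Without this step you would not recover \eqref{delta1}.

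Second, in Step~1 the factorization $\Phi^{\dagger}\mathbf{D}\Psi$ does not separate the blocks, because $\mathbf{h}_1^{[m,l]}+v\mathbf{h}_2^{[m,l]}$ depends on both $l$ and $m$. The paper first applies the Schur addition formula \eqref{Schexp1}. This puts all $\mathbf{h}$-dependence into the constant matrix $\mathbf{P}$ and leaves block-diagonal outer factors built from $S_j(\mathbf{x}^{(k,\pm)}_l)$ alone. Lemma~\ref{prop2} plays no role in that step; it is used only inside the proof of Proposition~\ref{prop3}.
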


The detailed proof of Theorem \ref{theo-nmrp} is presented in Sec. \ref{subsec_proofnmrp}.

Note that when the large parameter $ A\rightarrow \infty $, the influence of the constant term \( \Delta_{l}(\bar{z}_{l,i})\) on the position of each first-order RW in the RW patterns becomes negligible. Moreover, according to the asymptotics of the vector RW solution $\mathbf{q}^{[N_{1}, N_{2}]}(x,t) $ in Theorem \ref{theo-nmrp} and the dynamics of the first-order vector RW $\mathbf{q}^{[1, 0]}(x,t) $ and $ \mathbf{q}^{[0,1]}(x,t) $ in Proposition \ref{prop-rw1}, we can obtain the following corollaries.
\begin{corol}\label{prop-asyme}
	For the high-order vector RW solution $\mathbf{q}^{[N_{1}, N_{2}]}(x,t) $ in Theorem \ref{theo-nmrp} with the large parameter $ A\gg1 $: 
	\begin{itemize}
		\item If the parameter $ b_{1} $ in the plane wave solution $ \mathbf{q}^{[0]}(x,t) $ satisfies $ 0<b_{1}<\frac{1}{2} $, the RW patterns of $ {q}_{n}^{[N_{1}, N_{2}]}(x,t) $ $ (n=1,2) $ consists of $ \frac{N_{1}(N_{1}+1)}{2} $ bright RWs and $ \frac{N_{2}(N_{2}+1)}{2} $ four-petaled RWs;
		
		\item If the parameter $ b_{1}>\frac{1}{2} $, then the RW patterns of $ {q}_{1}^{[N_{1}, N_{2}]}(x,t) $ consists of $ \frac{N_{1}(N_{1}+1)}{2} $ four-petaled or dark RWs and $ \frac{N_{2}(N_{2}+1)}{2} $ bright RWs, while those of $ {q}_{2}^{[N_{1}, N_{2}]}(x,t) $ consists of $ \frac{N_{1}(N_{1}+1)}{2} $ bright RWs and $ \frac{N_{2}(N_{2}+1)}{2} $ four-petaled or dark RWs.
	\end{itemize}
	 
\end{corol}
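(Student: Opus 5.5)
The plan is to obtain the corollary by combining the local asymptotics of Theorem \ref{theo-nmrp} with the classification of the first-order waves in Proposition \ref{prop-rw1}. No new asymptotic analysis should be needed.

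First, by part (1) of Theorem \ref{theo-nmrp}, each component $q_{n}^{[N_{1},N_{2}]}$ splits, for $A\gg1$, into two families of first-order waves:
\begin{itemize}
\item $\frac{N_{1}(N_{1}+1)}{2}$ waves of the form $\hat{q}_{n,1}(x-x_{1,0},t-t_{1,0})q_{n}^{[0]}$, one for each simple root in $Z_{1}$;
\item $\frac{N_{2}(N_{2}+1)}{2}$ waves of the form $\hat{q}_{n,2}(x-x_{2,0},t-t_{2,0})q_{n}^{[0]}$, one for each simple root in $Z_{2}$.
\end{itemize}
The centres are given by \eqref{xtli1}. Since they are $\cO(A)$-rescaled images of distinct simple roots, waves within one family are separated by distances of order $A$. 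Waves from different families are separated by the uniform separation hypothesis. So each neighbourhood $\sqrt{(x-x_{l,0})^{2}+(t-t_{l,0})^{2}}=\cO(1)$ contains exactly one wave, and part (2) applies there. Because $a_{1}=a_{2}=1$ gives $|q_{n}^{[0]}|=1$, this yields
\[
|q_{n}^{[N_{1},N_{2}]}(x,t)|=|\hat{q}_{n,l}(x-x_{l,0},t-t_{l,0})|+\cO(A^{-1}).
\]
Thus the local profile of each wave is, up to $\cO(A^{-1})$, a translate of the fundamental wave $\mathbf{q}^{[1,0]}$ (for $l=1$) or $\mathbf{q}^{[0,1]}$ (for $l=2$). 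Part (3) shows that no further structure appears away from these centres.

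Second, the shape type (bright, four-petaled, dark) of each wave is determined by the ratio $\rho_{n,l}$ in \eqref{rho1}. This ratio is independent of the translation, so Proposition \ref{prop-rw1} can be read off directly.
\begin{itemize}
\item For $0<b_{1}<\frac12$: with the branch $\zeta\in(0,1)$ (the labelling implicit in the statement), $\rho_{n,1}<\frac13$ and $\frac13<\rho_{n,2}<3$. Hence the $Z_{1}$-family is bright and the $Z_{2}$-family is four-petaled in both components. For $\zeta\in(-1,0)$ the two roles swap. I would state this explicitly, or fix the sign of $\zeta$ by relabelling $\chi_{1}\leftrightarrow\chi_{2}$.
\item For $b_{1}>\frac12$: part (2) of Proposition \ref{prop-rw1} gives that $q_{1}^{[1,0]}$ and $q_{2}^{[0,1]}$ are four-petaled or dark, while $q_{2}^{[1,0]}$ and $q_{1}^{[0,1]}$ are bright. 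This is exactly the component-wise split claimed.
\end{itemize}

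The main obstacle is checking that the $\cO(A^{-1})$ error does not change the qualitative shape of each wave. The shape classification counts local maxima and minima of $|\hat{q}_{n,l}|$. I would argue that this count is stable under small perturbations in the following way.
\begin{itemize}
\item The solution is a ratio of polynomial $\tau$-functions, so the error term in \eqref{asym_nmrp} can be differentiated and stays $\cO(A^{-1})$ in $C^{2}$ on compact neighbourhoods.
\item In the cases used here, the inequalities on $\rho_{n,l}$ are strict: $\rho_{n,l}\neq\frac13$ and $\rho_{n,l}\neq3$. Then the critical points of $|\hat{q}_{n,l}|^{2}$ are nondegenerate.
\item Nondegenerate critical points persist and keep their type under small $C^{2}$ perturbations.
\end{itemize}
The borderline values $\rho=3$, which can occur in the dark/four-petaled case when $b_{1}>\frac12$, are exactly why the statement says ``four-petaled or dark''. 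Those values need not be resolved; it suffices that $\rho_{1,1}>\frac13$ strictly, so the wave is not bright.
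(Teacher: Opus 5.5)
Your proposal is correct and follows the paper's route: the paper likewise obtains the corollary simply by combining the local asymptotics \eqref{asym_nmrp} of Theorem~\ref{theo-nmrp} with the classification of $\mathbf{q}^{[1,0]}$ and $\mathbf{q}^{[0,1]}$ in Proposition~\ref{prop-rw1}. Your two additions are sound refinements that the paper leaves implicit: the first bullet does presume $\zeta\in(0,1)$ (the map $\zeta\mapsto-\zeta$ swaps $\chi_1$ and $\chi_2$, hence the bright and four-petaled families), and the critical points of $|\hat q_{n,l}|^2$ are indeed nondegenerate exactly when $\rho_{n,l}\neq\frac13,3$, so the shape type survives the $\cO(A^{-1})$ error.
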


\begin{corol}
	When the large parameter $ A\gg1 $, the high-order vector RW solution $\mathbf{q}^{[N_{1}, N_{2}]}(x,t) $  in Theorem \ref{theo-nmrp} admits the asymptotical expression:
	\begin{equation}\label{asym_nmrp1}
		\left|q_{n}^{[N_{1},N_{2}]}(x,t)\right| =1+ \sum_{l=1}^{2} \sum_{(x_{l,0},t_{l,0})} \left(\left|\hat{q}_{n,l}(x-{x}_{l,0}, t-{t}_{l,0})\right|-1 \right) +\cO(A^{-1}), \quad n=1,2,
	\end{equation}
	where $ \hat{q}_{n,l}(x, t) $ is given in equation \eqref{qk1}, and the points $ ({x}_{l,0},{t}_{l,0}) $ are defined by equation \eqref{xtli1} with $ z_{l,0} $ traversing $ \frac{N_{l}(N_{l}+1)}{2} $ simple roots of the polynomials $\Theta_{N_{l}}(z_{l}, \bm{\kappa}_{l})$.
\end{corol}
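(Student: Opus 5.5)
The corollary follows from Theorem \ref{theo-nmrp} once we use that the first-order RWs sit in pairwise well-separated neighbourhoods. The plan is to partition the $(x,t)$-plane into neighbourhoods of the points $(x_{l,0},t_{l,0})$ and their complement, and to check that the right-hand side of \eqref{asym_nmrp1} agrees with $|q_n^{[N_1,N_2]}|$ up to $\cO(A^{-1})$ in each region. We first fix the geometry. By \eqref{xtli1} and the map \eqref{map0}, the points $(x_{l,0},t_{l,0})$ equal $A\,(x_{l,i},t_{l,i})+\cO(1)$, where $(x_{l,i},t_{l,i})\in B_l$ are $A$-independent. Simple roots within one $Z_l$ are distinct, $\psi_l$ is injective, and the separation hypothesis holds between $B_1$ and $B_2$. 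Hence any two distinct centres, from the same family or from different families, are at mutual distance at least $c'A$ for some $c'>0$ once $A$ is large.

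Next we record the decay of a single first-order RW. From \eqref{qk1}, $\hat q_{n,l}(x,t)-1$ is a rational function whose numerator has degree one in $(x,t)$. Its denominator $(x+\Re(\chi_l)t)^2+\Im(\chi_l)^2t^2+\tfrac{1}{4\Im(\chi_l)^2}$ is a positive definite quadratic form plus a constant, because $\Im(\chi_l)\neq0$. Therefore $|\hat q_{n,l}(x,t)|-1=\cO\bigl((1+\|(x,t)\|)^{-1}\bigr)$ uniformly, using $\bigl||w|-1\bigr|\le|w-1|$.

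With this in hand, take $(x,t)$ with $\|(x,t)-(x_{l,0},t_{l,0})\|\le c'A/2$ for one centre. Every other centre is then at distance at least $c'A/2$, so each of the remaining $\tfrac{N_1(N_1+1)}{2}+\tfrac{N_2(N_2+1)}{2}-1$ terms in the double sum is $\cO(A^{-1})$. The sum therefore reduces to $|\hat q_{n,l}(x-x_{l,0},t-t_{l,0})|+\cO(A^{-1})$. If $\|(x,t)-(x_{l,0},t_{l,0})\|=\cO(1)$, item (2) of Theorem \ref{theo-nmrp} together with $|q_n^{[0]}|=1$ (since $a_1=a_2=1$) gives $|q_n^{[N_1,N_2]}|=|\hat q_{n,l}(\cdot)|+\cO(A^{-1})$, which matches. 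In the complementary region, away from all centres, every term in the sum is small, and item (3) says $|q_n^{[N_1,N_2]}|\to1$.

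The main obstacle is the intermediate zone, where $(x,t)$ is at distance between $\cO(1)$ and $\cO(A)$ from a centre. Items (2) and (3) as stated do not directly give an $\cO(A^{-1})$ rate uniformly there. I would handle this by returning to the proof of Theorem \ref{theo-nmrp} rather than its statement. The leading-order reduction of $\tau^{(k)}$ that yields \eqref{asym_nmrp} near a root $z_{l,0}$ in fact holds for $x-x_{l,0},\,t-t_{l,0}=o(A)$, with the error controlled by the next term of the Taylor expansion of $\Theta_{N_l}$ about the simple root. Using that uniformity, the near-zone expansion extends out to radius $c'A/2$. Beyond that radius, the dominant term of $\tau^{(k)}$ is the product of Adler--Moser polynomials evaluated away from their roots, so $\tau^{(n)}/\tau^{(0)}=1+\cO(A^{-1})$. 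Combining the near-zone and far-zone estimates gives \eqref{asym_nmrp1} uniformly in $(x,t)$.
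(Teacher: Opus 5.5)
Your near-zone and far-zone argument is exactly how the paper obtains this corollary. The paper gives no separate proof: it reads \eqref{asym_nmrp1} off items (2) and (3) of Theorem~\ref{theo-nmrp}, using $|q_n^{[0]}|=1$ and the algebraic decay of $\hat q_{n,l}$. It never discusses the annulus $1\ll\|(x,t)-(x_{l,0},t_{l,0})\|\ll A$, so you are already more careful than the paper in noticing that the \emph{statements} of items (2) and (3) give no uniform $\mathcal{O}(A^{-1})$ rate there.

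A small correction to your geometry. With the paper's definition through $\psi_l$, the points of $B_l$ are not $A$-independent. They satisfy $\ii\chi_l^{[1]}(x_{l,i}+\chi_l t_{l,i})=(z_{l,i}-\kappa_{l,1})A$, so $B_l=A\,B_l^{(0)}$ with $B_l^{(0)}$ fixed. Consequently $(x_{l,0},t_{l,0})$ differs from the corresponding point of $B_l$ by $\mathcal{O}(1)$ (the $\Delta_l$ shift), not from $A$ times it. Your $c'A$ separation is still true. However, it comes from the fact that a bound $\ge c$ holding for all large $A$ forces $B_1^{(0)}\cap B_2^{(0)}=\emptyset$, not from multiplying the hypothesis by $A$.

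The substantive issue is that your intermediate-zone step does not yet give $\mathcal{O}(A^{-1})$. Put $R=\|(x,t)-(x_{1,0},t_{1,0})\|$. Suppose you expand $\tau^{(0)}$ and $\tau^{(n)}$ separately about the simple root, keeping the next Taylor term of $\Theta_{N_1}$ (and of $\Theta_{N_2}$ about $\hat z_{2,0}$, since $z_2$ also moves by $\mathcal{O}(R/A)$). Each then acquires a relative correction of size $\mathcal{O}((1+R)/A)$. Bounding the quotient term by term only yields $\tau^{(n)}/\tau^{(0)}=\hat q_{n,1}(x-x_{1,0},t-t_{1,0})+\mathcal{O}(R/A)$, which is $o(1)$ but not $\mathcal{O}(A^{-1})$ for $1\ll R\ll A$. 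Since $\hat q_{n,1}-1$ is itself of size $R^{-1}$ there, "both are small" is not enough; you must control their difference.

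The missing observation is structural. By \eqref{taus1} and \eqref{uslm}, $\tau^{(n)}$ is literally $\tau^{(0)}$ with $\mathbf{x}_l^{+}$ and $\mathbf{x}_l^{-}$ shifted by the $A$-independent constants $\mathbf{h}_3^{[l,n]}$ and $-(\mathbf{h}_3^{[l,n]})^{*}$. In the rescaled variable $z_l$ this is a shift of only $\mathcal{O}(A^{-1})$. Hence the $\mathcal{O}(R/A)$ Taylor corrections are common to numerator and denominator and cancel in the ratio up to $\mathcal{O}(A^{-1})$. In the outer form this reads
$\Theta_{N_1}(z_1+h/A)/\Theta_{N_1}(z_1)=1+\frac{h}{A}\,\Theta_{N_1}'(z_1)/\Theta_{N_1}(z_1)+\ldots$,
with $\Theta_{N_1}'/\Theta_{N_1}=(z_1-z_{1,0})^{-1}+\mathcal{O}(1)$ near the simple root. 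The pole term matches the tail of $\hat q_{n,1}$, and the regular part contributes only $\mathcal{O}(A^{-1})$.

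Finally, an expansion valid for $R=o(A)$ does not by itself reach radius $c'A/2$. Fix a small $\delta>0$ and use two overlapping regions:
for $R\le\delta A$, use the root-centred expansion, where $\Theta_{N_1}(z_1)/(z_1-z_{1,0})$ and $\Theta_{N_2}(z_2)$ stay bounded away from zero;
for $R\ge\delta A$, use the far-zone estimate $\tau^{(n)}/\tau^{(0)}=1+\mathcal{O}((\delta A)^{-1})$ from the proof of item (3).
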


Furthermore, when the order index $ [N_{1},N_{2}] $ in the vector RW solution $ \mathbf{q}^{[N_{1}, N_{2}]}(x,t) $ contains only one nonzero positive integer, the RW patterns in Theorem \ref{theo-nmrp} can correspond to the root structures of a single Adler--Moser polynomial, i.e., the degree of the other polynomial is zero. Now, the RW pattern in each component of $ \mathbf{q}^{[N_{1}, N_{2}]}(x,t) $ contains only a single type of first-order RW.

\subsection{Examples of RW patterns for the CNLS equation}\label{subsec-example}

This subsection presents several representative examples of RW patterns for the CNLS equation \eqref{cnlse} in Theorem~\ref{theo-nmrp}, providing an intuitive illustration of the preceding asymptotic analysis. We consider the RW patterns for the two cases: $ 0<b_{1}<\frac{1}{2} $ and $ b_{1}>\frac{1}{2} $. In particular, to visualize more clearly these RW patterns, we choose the parameters $ (\kappa_{1,1}, \kappa_{2,1}) $ appropriately so that the two regions of the RW pattern associated with the different polynomials are completely separated. To this end, in the examples below we set $ \kappa_{1,1}=-\kappa_{2,1}=c $ or $ \kappa_{1,1}=-\kappa_{2,1}^*=c $, where $ |c|>\sup\{|z|\mid z\in Z_{l},\ l=1,2\} $ is a constant with $ Z_{l} $ defined in the equation \eqref{srset}.

\bigskip
\textbf{(1)  The case $ 0<b_{1}<\frac{1}{2} $ }

Setting $ \zeta=\frac{3}{5} $ and $ b_{1}=\frac{2}{5} $, we obtain the two pairs of complex-conjugate double roots $ \chi_{l} $ and $ \chi_{l}^{*} $ $ (l=1,2)$ given by
\begin{equation}\label{pachi1}
	\begin{aligned}
		&\chi_{1}= \frac{6\ii}{5}, \quad \chi_{2}= \frac{\sqrt{6}\ii}{5},\quad \lambda_{1}=\frac{27\ii}{20}, \quad \lambda_{2}=\frac{3\sqrt{6}\ii}{5},
	\end{aligned}
\end{equation}
for the characteristic polynomial \eqref{chareq1}.
Then, with the large internal parameter $ d_{l,1}=0 $ and the arbitrary parameters
\begin{equation}\label{lamchie1}
	\lambda_{1}^{[2]}=\frac{\ii}{2}, \quad \lambda_{2}^{[2]}=\frac{4\sqrt{6}}{3}, \quad \chi_{1}^{[1]}=\chi_{2}^{[1]}=\frac{4\ii}{3},
\end{equation}
each component of the vector RW solutions $ \mathbf{q}^{[1,0]}(x,t) $ and $ \mathbf{q}^{[0,1]}(x,t) $ is a bright RW and a four-petaled RW, respectively (see Fig. \ref{Fig-b1}).

\begin{figure*}[!htbp]
	\centering
	\includegraphics[width=0.8\textwidth]{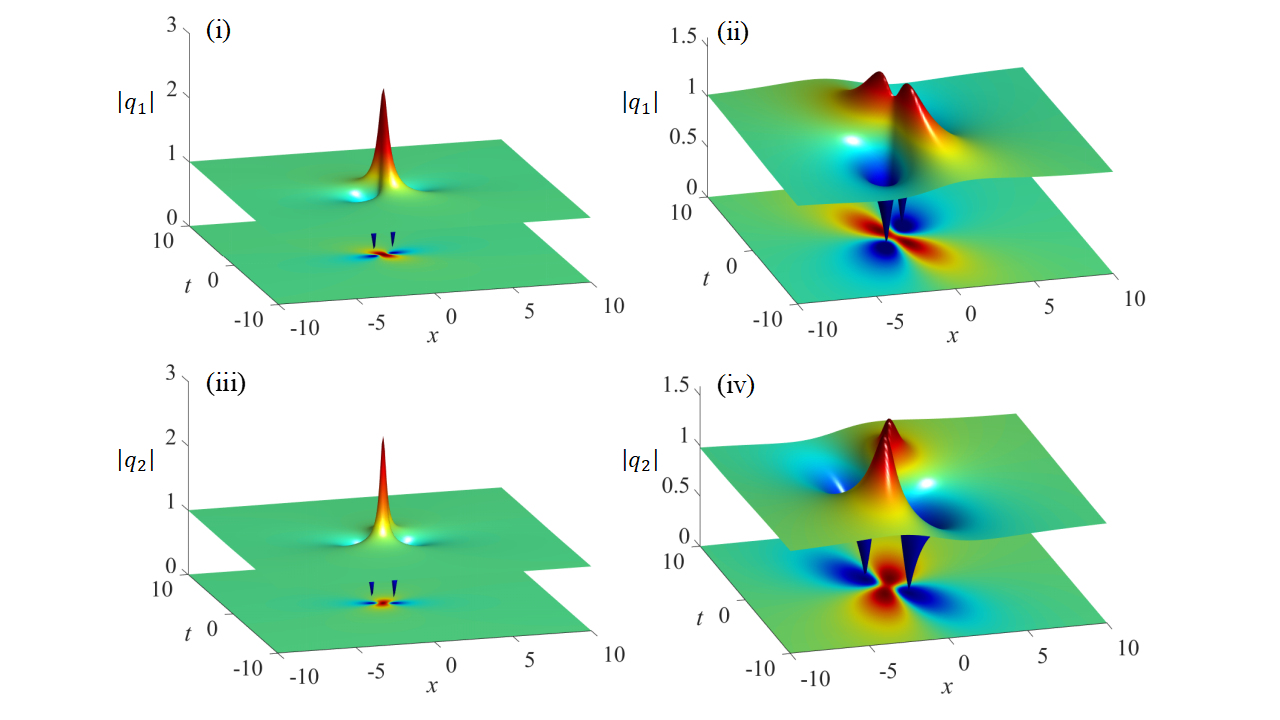}
\caption{The first-order vector RW solutions with different structures in the CNLS equation \eqref{cnlse}. The parameters are taken as $b_{1}=\frac{2}{5}$, together with $\chi_l$, $\lambda_l$, $\chi_l^{[1]}$, and $\lambda_l^{[2]}$ $ (l=1,2) $ defined in the equations \eqref{pachi1}-\eqref{lamchie1}. The first column depicts the vector bright RW solution $\mathbf{q}^{[1,0]}(x,t)$, while the second column depicts the vector four-petaled RW solution $\mathbf{q}^{[0,1]}(x,t)$. The first and second rows correspond to the first and second components of the vector RW solutions, i.e., $|q_1(x,t)|$ and $|q_2(x,t)|$, respectively.}
	\label{Fig-b1}
\end{figure*}	

Next, we consider the RW patterns of $\mathbf{q}^{[4,4]}(x,t)$ with the large internal parameters $ \{d_{l,2j+1}\}_{1\leq l\leq 2, 0\leq j\leq 3 } $ defined in the equation \eqref{dj1}. By choosing parameters $ \{\kappa_{l,2j+1}\}_{1\leq l\leq 2, 0\leq j\leq 3 } $ as one of the following sets:
\begin{equation}\label{parad11}
	\kappa_{l,1}=(-1)^{l-1}4,\quad \kappa_{l,2j+1}=\frac{5}{2j+1},
\end{equation}
\begin{equation}\label{parad12}
	\kappa_{l,1}=(-1)^{l-1}4\ii,\quad \kappa_{l,2j+1}=(-1)^{l-1}\ii,
\end{equation}
and
\begin{equation}\label{parad13}
	\kappa_{l,1}=(-1)^{l-1}4, \quad \kappa_{1,2j+1}=\frac{5}{2j+1}, \quad \kappa_{2,2j+1}=\ii,
\end{equation}
we obtain three types of vector RW patterns depicted in Fig. \ref{Fig-44RWP}, with a large constant $ A=20 $ and arbitrary parameters as defined in the equation \eqref{lamchie1}. We clearly observe that, in each component of these vector RW solutions, the RW pattern exhibits two separated regions: one consists of $10$ bright RWs and the other consists of $10$ four-petaled RWs, which correspond to all simple roots of the Adler--Moser polynomials $\Theta_{4}(z_{1}, \bm{\kappa}_{1})$ and $\Theta_{4}(z_{2}, \bm{\kappa}_{2})$, respectively. Then, we use black and yellow circles to mark the predicted locations of the bright and four-petaled RWs, respectively. Geometrically, the two regions form distinct shapes:
\begin{itemize}
	\item In Figs. \ref{Fig-44RWP}(i) and \ref{Fig-44RWP}(iv) with the free parameters \eqref{parad11}, the left and right regions form acute sectors of different sizes;
	
	\item In Figs. \ref{Fig-44RWP}(ii) and \ref{Fig-44RWP}(v) with the free parameters \eqref{parad12}, the upper and lower regions take the shape of two oppositely oriented hearts;
	
	\item In Figs. \ref{Fig-44RWP}(iii) and \ref{Fig-44RWP}(vi) with the free parameters \eqref{parad13}, the left and right regions are a heart and a sector.
\end{itemize} 

\begin{figure*}[!htbp]
	\centering
	\includegraphics[width=\textwidth]{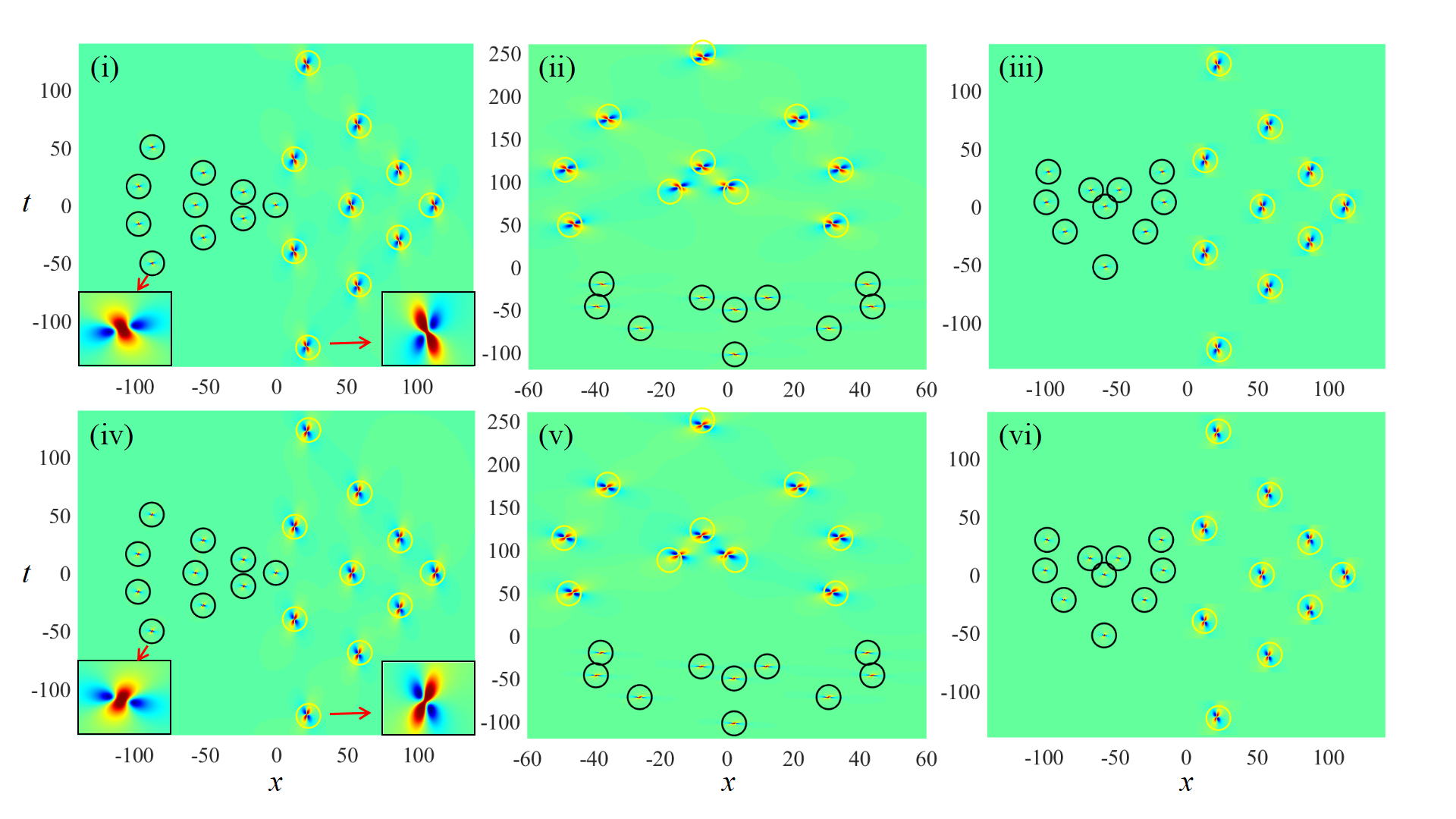}
	\caption{RW patterns of the  vector RW solution $\mathbf{q}^{[4,4]}(x,t)$ of the CNLS equation \eqref{cnlse} with the parameters $b_{1}=\frac{2}{5}$, and $\chi_l$, $\lambda_l$, $\chi_l^{[1]}$, and $\lambda_l^{[2]}$ ($l=1,2$) given in the equations \eqref{pachi1} and \eqref{lamchie1}. The large internal parameters $\{d_{l,2j+1}\}_{1\le l\le 2,\ 0\le j\le 3}$ are defined by the equation \eqref{dj1}, and the free parameters $\kappa_{l,2j+1}$ are chosen according to the equation \eqref{parad11} for panels (i) and (iv), the equation \eqref{parad12} for panels (ii) and (v), and the equation \eqref{parad13} for panels (iii) and (vi). The first and second rows correspond to the first and second components of $\mathbf{q}^{[4,4]}(x,t)$, respectively. In addition, the insets in the lower-left and lower-right corners of panels (i) and (iv) provide magnified views of a representative bright RW and a representative four-petaled RW, respectively. Black and yellow circles indicate the predicted locations of the first-order RWs in the RW patterns: black circles mark the bright RWs associated with the polynomial $\Theta_{4}(z_{1}, \bm{\kappa}_{1})$, whereas yellow circles mark the four-petaled RWs associated with the polynomial $\Theta_{4}(z_{2}, \bm{\kappa}_{2})$. }
	\label{Fig-44RWP}
\end{figure*}	

\bigskip
\noindent\textbf{(2) The case $b_{1}>\frac{1}{2}$.}

Setting $\hat{\zeta}=\frac{1}{25}$ and $b_{1}=\frac{119}{130}$, we obtain two pairs of complex-conjugate double roots $\chi_{l}$ and $\chi_{l}^{*}$ ($l=1,2$) given by
\begin{equation}\label{pachi2}
	\chi_{1}= \frac{\sqrt{69}}{10}+\frac{13\ii}{12}, \quad \chi_{2}=-\chi_{1},\quad 
	\lambda_{1}=\frac{69 \sqrt{69}}{2380}+\frac{1728 \ii}{1547}, \quad \lambda_{2}=-\lambda_{1}.
\end{equation}
Meanwhile, if $d_{l,1}=0$ and
\begin{equation}\label{lamchie2}
	\lambda_{1}^{[2]}=\frac{36000 \sqrt{69}}{1685159}+\frac{909558 \ii}{1685159}, \quad 
	\lambda_{2}^{[2]}=-\lambda_{1}^{[2]}, \quad 
	\chi_{1}^{[1]}=\chi_{2}^{[1]}=-\ii,
\end{equation}
then for the first-order vector RW solutions $\mathbf{q}^{[1,0]}(x,t)$, the first component is a dark RW and the second component is a bright RW; in contrast, for the first-order vector RW solutions $\mathbf{q}^{[0,1]}(x,t)$, the first and second components are a bright RW and a dark RW, respectively (see Fig. \ref{Fig-b2}).

\begin{figure*}[!htbp]
	\centering
	\includegraphics[width=0.8\textwidth]{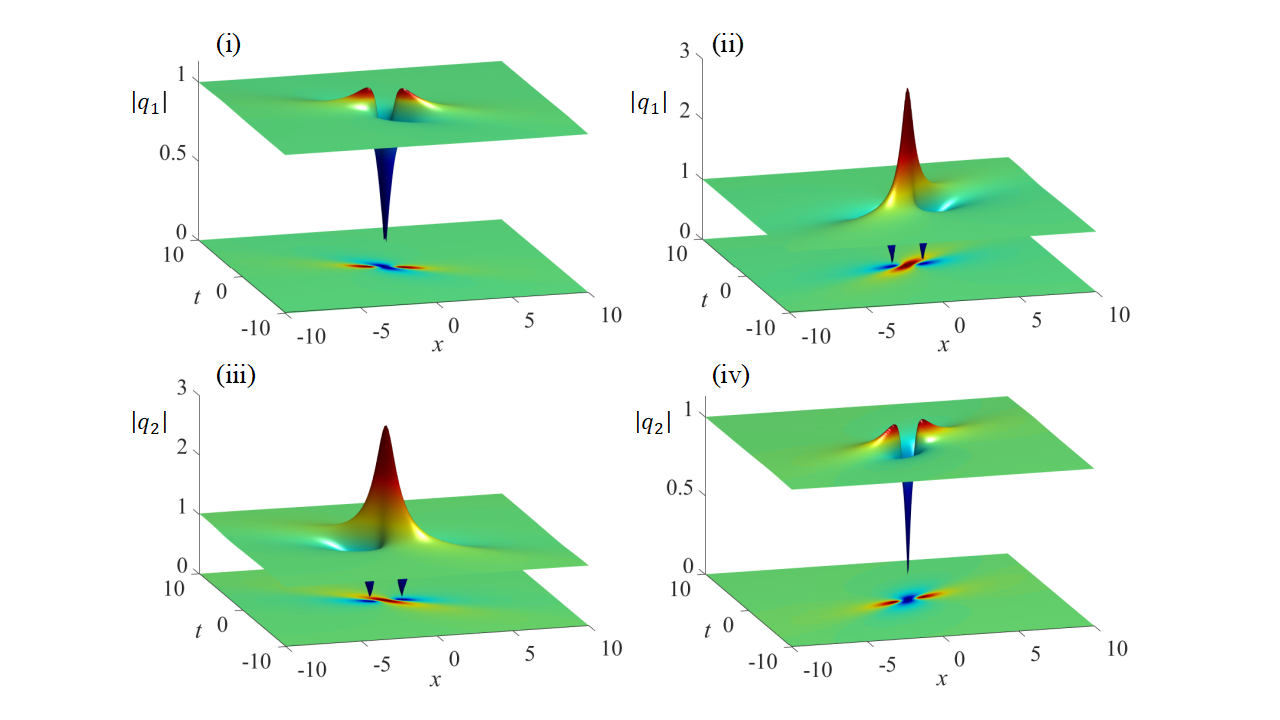}
	\caption{The first-order vector RW solutions with different structures in the CNLS equation \eqref{cnlse}. The parameters are taken as $b_{1}=\frac{119}{130}$, together with $\chi_l$, $\lambda_l$, $\chi_l^{[1]}$, and $\lambda_l^{[2]}$ $ (l=1,2) $ defined in the equations \eqref{pachi2} and \eqref{lamchie2}. Panels (i) and (iii) show the first and second components of the vector RW solution $\mathbf{q}^{[1,0]}(x,t)$, corresponding to a dark RW and a bright RW, respectively, while panels (ii) and (iv) show the first and second components of the vector RW solution $\mathbf{q}^{[0,1]}(x,t)$, corresponding to a bright RW and a dark RW, respectively.}
	\label{Fig-b2}
\end{figure*}	

Moreover, we consider patterns of the vector RW solution $\mathbf{q}^{[4,3]}(x,t)$ with $\{d_{l,2j+1}\}_{1\le l\le 2,\ 0\le j\le 3}$ defined in equation \eqref{dj1} and the parameters \eqref{pachi2}-\eqref{lamchie2}. For $A=20$, by choosing $\{\kappa_{l,2j+1}\}$ as one of the following sets:
\begin{equation}\label{parad21}
	\kappa_{l,1}=(-1)^{l-1}4,\quad \kappa_{l,2j+1}=(-1)^{l-1}\frac{5}{2j+1},
\end{equation}
\begin{equation}\label{parad22}
	\kappa_{l,1}=(-1)^{l-1}4,\quad \kappa_{l,2j+1}=\ii,
\end{equation}
and
\begin{equation}\label{parad23}
	\kappa_{1,1}=3 + 3\ii,\quad \kappa_{2,1}=-\kappa_{1,1}^{*}, \quad \kappa_{1,2j+1}=\frac{5}{2j+1}, \quad \kappa_{2,2j+1}=\ii,
\end{equation}
we produce three vector RW patterns shown in Fig. \ref{Fig-43RWP}. 
In this case, the RW patterns of $q_{1}^{[4,3]}(x,t)$ (resp.\ $q_{2}^{[4,3]}(x,t)$) again decompose into two well-separated regions, consisting of $10$ dark (resp.\ bright) first-order RWs and $6$ bright (resp.\ dark) first-order RWs, respectively. These two regions correspond to the simple-root structures of the Adler--Moser polynomials $\Theta_{4}(z_{1},\bm{\kappa}_{1})$ and $\Theta_{3}(z_{2},\bm{\kappa}_{2})$, respectively. In Fig. \ref{Fig-43RWP}, black and yellow circles mark the predicted locations of the first-order RWs associated with $\Theta_{4}(z_{1},\bm{\kappa}_{1})$ and $\Theta_{3}(z_{2},\bm{\kappa}_{2})$, respectively. In particular, the black (resp.\ yellow) circles mark dark (resp.\ bright) RWs in Fig. \ref{Fig-43RWP} (i)-(iii), whereas the correspondence is reversed in Fig. \ref{Fig-43RWP} (iv)-(vi), i.e., black (resp.\ yellow) circles mark bright (resp.\ dark) RWs.
Geometrically, the two regions form the following different shapes:
\begin{itemize}
	\item In Figs. \ref{Fig-43RWP}(i) and \ref{Fig-43RWP}(iv) with free parameters \eqref{parad21}, the left and right regions form two oppositely oriented triangles;
	
	\item In Figs. \ref{Fig-43RWP}(ii) and \ref{Fig-43RWP}(v) with free parameters \eqref{parad22}, the left and right regions take the shape of two oppositely oriented hearts;
	
	\item In Figs. \ref{Fig-43RWP}(iii) and \ref{Fig-43RWP}(vi) with free parameters \eqref{parad23}, the left-lower and right-upper regions are a triangle and a heart.
\end{itemize}

It is worth noting that, in all the above examples of high-order vector RW solutions, the RW pattern in each component simultaneously contains two distinct types of first-order RWs. For instance, the patterns in Fig. \ref{Fig-44RWP} comprise both bright and four-petaled RWs, whereas those in Fig. \ref{Fig-43RWP} comprise both dark and bright RWs. For all the examples, the predicted locations of the constituent first-order RWs are determined by Theorem \ref{theo-nmrp} and agree asymptotically with the actual RW positions; the residual discrepancy becomes negligible in the large-parameter regime $A\gg 1$. The configuration of each region can be interpreted, to some extent, as a scaled and rotated imprint of the simple-root structure of the associated Adler--Moser polynomials. Moreover, tuning the free parameters allows the two regions to be shifted to essentially arbitrary locations in the $(x,t)$-plane, which in turn enables the systematic generation of a broader family of structured RW patterns for the CNLS equation \eqref{cnlse}.

\begin{figure*}[!htbp]
	\centering
	\includegraphics[width=\textwidth]{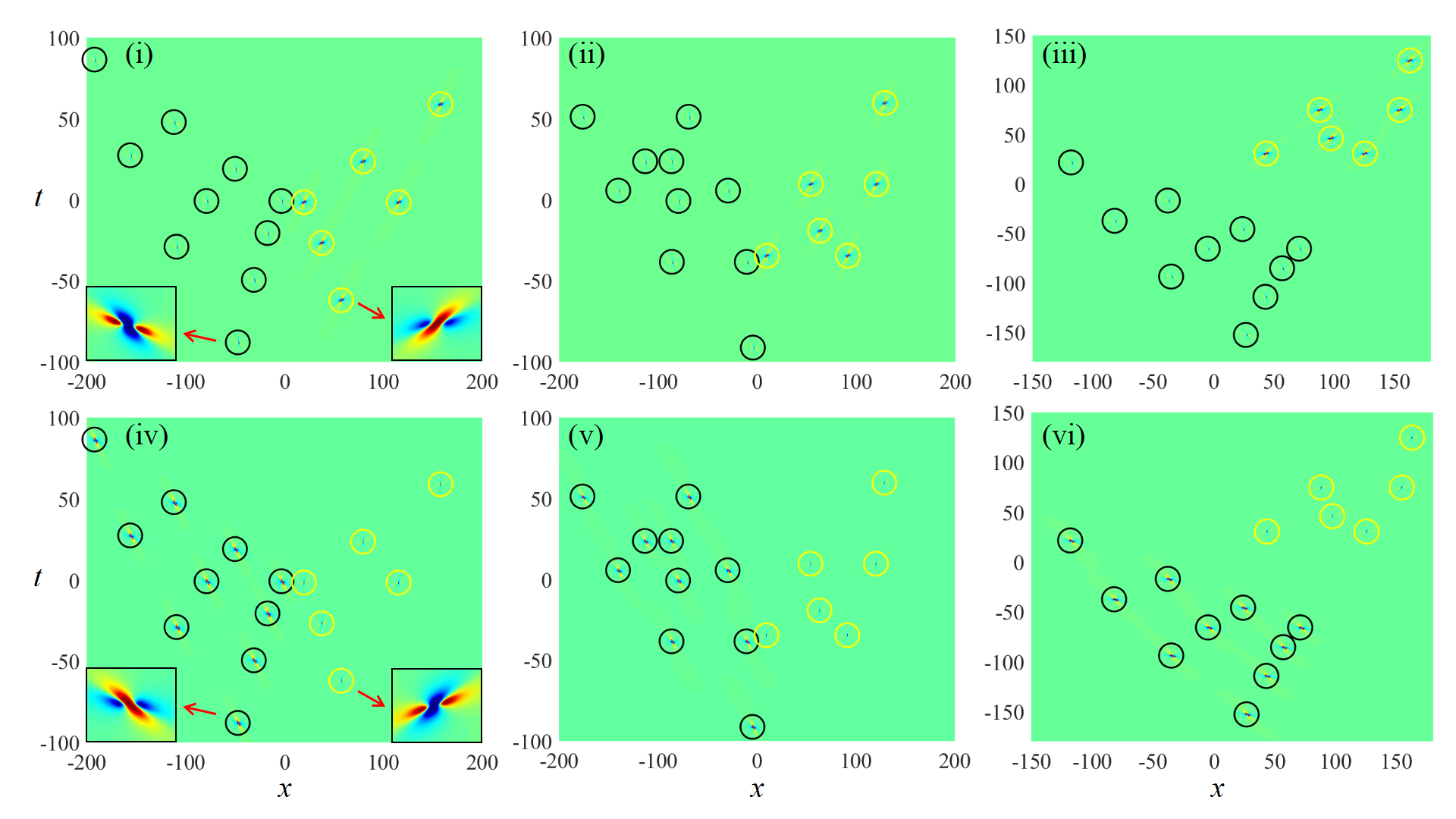}
\caption{RW patterns of the vector RW solution $\mathbf{q}^{[4,3]}(x,t)$ of the CNLS equation \eqref{cnlse} with the parameters $b_{1}=\frac{119}{130}$, and $\chi_l$, $\lambda_l$, $\chi_l^{[1]}$, and $\lambda_l^{[2]}$ ($l=1,2$) given in the equations \eqref{pachi2} and \eqref{lamchie2}. The large internal parameters $\{d_{l,2j+1}\}_{1\le l\le 2,\ 0\le j\le 3}$ are defined by the equation \eqref{dj1} and the free parameters $\kappa_{l,2j+1}$ are chosen according to the equation \eqref{parad21} for (i) and (iv), the equation \eqref{parad22} for (ii) and (v), and the equation \eqref{parad23} for (iii) and (vi). The first and second rows correspond to the first and second components of $\mathbf{q}^{[4,3]}(x,t)$, respectively. In addition, the insets in the lower corners of panels (i) and (iv) provide magnified views of a representative dark RW and a representative bright RW. In these RW patterns, black and yellow circles indicate the predicted locations of the first-order RWs associated with the polynomials $\Theta_{4}(z_{1}, \bm{\kappa}_{1})$ and $\Theta_{3}(z_{2}, \bm{\kappa}_{2})$, respectively. In particular, the black (resp.\ yellow) circles mark dark (resp.\ bright) RWs in panels (i)-(iii), while the black (resp.\ yellow) circles mark bright (resp.\ dark) RWs in panels (iv)-(vi).}
	\label{Fig-43RWP}
\end{figure*}

%%%%%%%%%%%%%%%%%%%%
\section{Proofs of the main results}\label{Sec-proof}

\subsection{Proof of Proposition \ref{prop1}}\label{subsec_proofp1}	
This subsection provides the proof of Lemma \ref{prop1}.

We introduce some notations:
\begin{equation}\label{prop1-pfno}
	\begin{aligned}
		& \mathbb{I}_{N_{2}} +(\gamma-1)\hat{\mathbf{R}}^{T} \hat{\mathbf{R}}= \left( R_{i,j}\right)_{1\leq i,j\leq N_{2}}, \\
		& \hat{\mathbf{R}}_{j}(\varepsilon)= \sum_{v=1}^{\infty} \hat{R}_{v,j} \varepsilon^{v-1}, \quad \tilde{\mathbf{R}}_{j}(\varepsilon)=\sum_{v=1}^{N_{1}} \hat{R}_{v,j} \varepsilon^{v-1}.
	\end{aligned}
\end{equation}
Then, we will first prove the equality of
\begin{equation}\label{prop-rijf}
	R_{i,j}=R_{i+1,j+1}-\frac{R_{i+1,1}}{R_{1,1}}R_{1,j+1},
\end{equation}
which is important for computation of the determinant \eqref{pl-deo}.

According to the definition of the matrix $ \hat{\mathbf{R}} $ in the equation \eqref{pl-deo-w}, we have
\begin{equation}\label{prop-rij}
	\begin{aligned}
		&R_{i,j}=\delta_{i,j} +(\gamma-1)\sum_{v=1}^{N_{1}} \hat{R}_{v,i}\hat{R}_{v,j}, \quad 
		\delta_{i,j}=\begin{cases}
			1, & i=j,\\
			0, & i\ne j,
		\end{cases}
	\end{aligned}
\end{equation}
and derive $ R_{i+1,j+1}-R_{i,j}=(\gamma-1)\left( \sum_{v=1}^{N_{1}} \hat{R}_{v,i+1}\hat{R}_{v,j+1} -\sum_{v=1}^{N_{1}} \hat{R}_{v,i}\hat{R}_{v,j}\right) $. 
Since the constant term in the polynomial $ \tilde{\mathbf{R}}_{i}(\varepsilon)\tilde{\mathbf{R}}_{j}(\varepsilon^{-1}) $ is $ \sum_{v=1}^{N_{1}} \hat{R}_{v,i}\hat{R}_{v,j} $,
it is found that $ \left( R_{i+1,j+1}-R_{i,j} \right) (\gamma-1)^{-1}$ is equal to the constant term in the polynomial 
\begin{equation}\label{prop-poly}
	\tilde{\mathbf{R}}_{i+1}(\varepsilon)\tilde{\mathbf{R}}_{j+1}(\varepsilon^{-1}) -\tilde{\mathbf{R}}_{i}(\varepsilon)\tilde{\mathbf{R}}_{j}(\varepsilon^{-1}).
\end{equation}
Thus, we will calculate the constant term of the polynomial \eqref{prop-poly} below.

Since
\begin{equation}\label{prop-hrjv}
	\begin{aligned}
		\hat{\mathbf{R}}_{j}(\varepsilon)=& \gamma^{(j-1)/2}  \sum_{v=1}^{\infty} \gamma^{(v-1)/2} \sum_{k=1}^{\min(v,j)} \binom{v-1}{k-1}\binom{j-1}{k-1} \left( 1-\frac{1}{\gamma}\right) ^{k-1}\varepsilon^{v-1}\\
		=& \gamma^{(j-1)/2}  \sum_{k=1}^{j} \binom{j-1}{k-1}\left( 1-\frac{1}{\gamma}\right) ^{k-1} \sum_{v=k}^{\infty} \binom{v-1}{k-1} \gamma^{(v-1)/2}\varepsilon^{v-1}\\
		=& \gamma^{(j-1)/2}  \sum_{k=1}^{j} \binom{j-1}{k-1}\left( 1-\frac{1}{\gamma}\right) ^{k-1} \frac{\left( \gamma^{1/2}\varepsilon\right)^{k-1} }{\left(1-\gamma^{1/2}\varepsilon \right)^{k} }\\
		=&\gamma^{(j-1)/2} \frac{\left(1-\gamma^{-1/2}\varepsilon \right)^{j-1}}{\left(1-\gamma^{1/2}\varepsilon \right)^{j}},\\
	\end{aligned}
\end{equation}
we derive
\begin{equation}\nonumber
	\hat{\mathbf{R}}_{j+1}(\varepsilon)= \gamma^{1/2} \frac{1-\gamma^{-1/2}\varepsilon }{1-\gamma^{1/2}\varepsilon } \hat{\mathbf{R}}_{j}(\varepsilon),
\end{equation}
and further obtain
\begin{equation}\label{prop-hrjf}
	\hat{\mathbf{R}}_{j+1}(\varepsilon)-\gamma^{1/2}\hat{\mathbf{R}}_{j}(\varepsilon)=\left( \gamma^{1/2}\hat{\mathbf{R}}_{j+1}(\varepsilon)-\hat{\mathbf{R}}_{j}(\varepsilon) \right) \varepsilon.
\end{equation}
%	i.e.,
%	\begin{equation}\label{prop-hrjf2}
	%		\sum_{v=1}^{\infty} \left( \hat{R}_{v,j+1} -\gamma^{1/2}\hat{R}_{v,j}\right) \varepsilon^{v}= \sum_{v=1}^{\infty} \left( \gamma^{1/2}\hat{R}_{v,j+1} -\hat{R}_{v,j}\right) \varepsilon^{v+1}.
	%	\end{equation}
We truncate the above equation \eqref{prop-hrjf}, retaining the terms in $ \varepsilon $ from $ 0 $-order up to $ N_{1} $-order, and thereby yield
\begin{equation}\nonumber
	\tilde{\mathbf{R}}_{j+1}(\varepsilon)-\gamma^{1/2}\tilde{\mathbf{R}}_{j}(\varepsilon) +\left( \hat{R}_{N_{1},j+1} -\gamma^{1/2}\hat{R}_{N_{1},j}\right) \varepsilon^{N_{1}}=\left( \gamma^{1/2}\tilde{\mathbf{R}}_{j+1}(\varepsilon)-\tilde{\mathbf{R}}_{j}(\varepsilon) \right) \varepsilon.
\end{equation} 
Thus, we have
\begin{equation}\label{prop-trj}
	\left( 1-\gamma^{1/2}\varepsilon\right) \tilde{\mathbf{R}}_{j+1}(\varepsilon) + \left( \varepsilon- \gamma^{1/2}\right) \tilde{\mathbf{R}}_{j}(\varepsilon)= s_{j}\varepsilon^{N_{1}},
	\quad s_{j}= \hat{R}_{N_{1},j+1} -\gamma^{1/2}\hat{R}_{N_{1},j},
\end{equation}
and then derive
\begin{equation}\label{prop-trj2}
	\tilde{\mathbf{R}}_{j+1}(\varepsilon) = \left( 1-\gamma^{1/2}\varepsilon\right)^{-1} \left( s_{j}\varepsilon^{N_{1}}- \left( \varepsilon- \gamma^{1/2}\right) \tilde{\mathbf{R}}_{j}(\varepsilon)\right). 
\end{equation}
By using the above formula \eqref{prop-trj2}, we yield
\begin{equation}\label{prop-trj3}
	\tilde{\mathbf{R}}_{i+1}(\varepsilon)\tilde{\mathbf{R}}_{j+1}(\varepsilon^{-1}) = \tilde{\mathbf{R}}_{i}(\varepsilon)\tilde{\mathbf{R}}_{j}(\varepsilon^{-1}) -\frac{s_{j}\varepsilon^{1-N_{1}}}{1-\gamma^{1/2}\varepsilon} \tilde{\mathbf{R}}_{i}(\varepsilon) -\frac{s_{i}\varepsilon^{N_{1}}}{\varepsilon-\gamma^{1/2}}\tilde{\mathbf{R}}_{j}(\varepsilon^{-1}) +\frac{s_{i}s_{j}\varepsilon}{(1-\gamma^{1/2}\varepsilon)(\varepsilon -\gamma^{-1/2})}.
\end{equation}
It is found that
\[ \frac{s_{i}\varepsilon^{N_{1}}}{\varepsilon-\gamma^{1/2}}\tilde{\mathbf{R}}_{j}(\varepsilon^{-1})=\cO(\varepsilon), \quad \frac{s_{i}s_{j}\varepsilon}{(1-\gamma^{1/2}\varepsilon)(\varepsilon -\gamma^{-1/2})}=\cO(\varepsilon),
\]
which implies that the constant term of the polynomial \eqref{prop-poly} is equal to the constant term of $ -\frac{s_{j}\varepsilon^{1-N_{1}}}{1-\gamma^{1/2}\varepsilon} \tilde{\mathbf{R}}_{i}(\varepsilon) $.

Since
\begin{equation}\nonumber
	-\frac{s_{j}\varepsilon^{1-N_{1}}}{1-\gamma^{1/2}\varepsilon} \tilde{\mathbf{R}}_{i}(\varepsilon)= -s_{j}\sum_{k=0}^{\infty}\gamma^{k/2}\left( \sum_{v=1}^{N_{1}} \hat{R}_{v,i}\varepsilon^{v+k-N_{1}} \right),
\end{equation}
it is computed that the constant term of $ -\frac{s_{j}\varepsilon^{1-N_{1}}}{1-\gamma^{1/2}\varepsilon} \tilde{\mathbf{R}}_{i}(\varepsilon) $ is
\begin{equation}\label{prop-cont0}
	%-s_{i}\sum_{v=1}^{N_{1}} \hat{R}_{v,i}\gamma^{(N_{1}-v)/2} = 
	-s_{j}\gamma^{(N_{1}-1)/2} \tilde{\mathbf{R}}_{i}(\gamma^{-1/2}).
\end{equation}
Additionally, when $ \varepsilon=\gamma^{-1/2} $ in the equation \eqref{prop-trj}, we obtain
\begin{equation}\label{prop-tric}
	\tilde{\mathbf{R}}_{i}(\gamma^{-1/2})= \frac{s_{i}}{1-\gamma} \gamma^{(1-N_{1})/2}.
\end{equation}
Therefore, we can determine that the constant term of $ -\frac{s_{j}\varepsilon^{1-N_{1}}}{1-\gamma^{1/2}\varepsilon} \tilde{\mathbf{R}}_{i}(\varepsilon) $, (i.e., the constant term of the polynomial \eqref{prop-poly}) is
\begin{equation}\label{prop-cont}
	\frac{s_{i}s_{j}}{\gamma-1}.
\end{equation}
On the other hand, when $ \varepsilon=\gamma^{1/2} $ in the equation \eqref{prop-trj}, we also get
\begin{equation}\label{prop-si}
	\begin{aligned}
		s_{j}=&\left( 1-\gamma\right)\gamma^{-N_{1}/2} \tilde{\mathbf{R}}_{j+1}(\gamma^{1/2})\\
		=& \left( 1-\gamma\right)\gamma^{-N_{1}/2} \sum_{k=1}^{N_{1}} \hat{R}_{k,j+1} \gamma^{(k-1)/2} \\
		=&-\gamma^{-N_{1}/2} R_{1,j+1},
	\end{aligned}
\end{equation}
because of $ \hat{R}_{k,1}=\gamma^{(k-1)/2} $. Similarly, we can get $ s_{i}=-\gamma^{-N_{1}/2} R_{i+1,1} $. Hence, as $ R_{1,1}=\gamma^{N_{1}} $, the constant term \eqref{prop-cont} of the polynomial \eqref{prop-poly} is equal to
\begin{equation}\label{prop-cont2}
	\frac{1}{\gamma-1} \frac{R_{i+1,1}}{R_{1,1}}R_{1,j+1},
\end{equation}
which proves the equation \eqref{prop-rijf}.

Then, we apply the formula \eqref{prop-rijf} to perform row operations on the determinant
\begin{equation}\label{detecom}
	\begin{aligned}
		\left| \mathbb{I}_{N_{2}} +(\gamma-1)\hat{\mathbf{R}}^{T} \hat{\mathbf{R}}\right| &= R_{1,1} \det_{1\leq i,j \leq N_{2}-1} \left(R_{i,j} \right)\\
		&= R_{1,1}^{2}\det_{1\leq i,j \leq N_{2}-2} \left(R_{i,j} \right)\\
		&\cdots\\
		&=\gamma^{N_{1}N_{2}}.
	\end{aligned}
\end{equation}

This concludes the proof of Lemma \ref{prop1}.

\subsection{Proof of Lemma \ref{prop2}}\label{subsec_proofp2}	
A proof of Lemma \ref{prop2} is given in this subsection.

First, we factor the matrix $ \hat{\mathbf{P}}_{11}^{-1} $, as follows:
\begin{equation}\label{prop-hp11i}
	\hat{\mathbf{P}}_{11}^{-1}=\mathbf{L}_{1}\mathbf{L}_{2}\cdots\mathbf{L}_{N_{1}-1},\\
\end{equation}
where
\begin{equation}\label{prop-hp11li}
	\begin{aligned}
		&\mathbf{L}_{1}=\begin{pmatrix}
			1 & -S_{1}(\mathbf{h}_{11}{(0)}) & -S_{2}(\mathbf{h}_{11}{(0)}) & \cdots & -S_{N_{1}-1}(\mathbf{h}_{11}{(0)})  \\
			0 & 1 & 0 & \cdots & 0\\
			0 & 0 & 1 & \cdots & 0 \\
			\vdots & \vdots & \vdots &\ddots  & \vdots \\
			0 & 0 & 0 & \cdots & 1 \\
		\end{pmatrix}_{N_{1}\times N_{1}}, \\
		&\mathbf{L}_{2}=\begin{pmatrix}
			1 & 0 & 0 & \cdots & 0  \\
			0 & 1 & -S_{1}(\mathbf{h}_{11}{(1)}) & \cdots & -S_{N_{1}-2}(\mathbf{h}_{11}{(1)})\\
			0 & 0 & 1 & \cdots & 0 \\
			\vdots & \vdots & \vdots &\ddots  & \vdots \\
			0 & 0 & 0 & \cdots & 1 \\
		\end{pmatrix}_{N_{1}\times N_{1}}, \\
		& \quad \vdots\\
		&\mathbf{L}_{N_{1}-1}=\begin{pmatrix}
			1 & 0 &  \cdots & 0 & 0 \\
			0 & 1 &  \cdots & 0 & 0\\
			\vdots & \vdots & \ddots &\vdots  & \vdots \\
			0 &  0 & \cdots & 1 & -S_{1}(\mathbf{h}_{11}{(N_{1}-2)}) \\
			0 &  0 & \cdots & 0 & 1 \\
		\end{pmatrix}_{N_{1}\times N_{1}}. \\
	\end{aligned}
\end{equation}
namely,
\begin{equation}\label{pr-imf}
	\left( \mathbf{L}_{k}\right)_{i,j} =\begin{cases}
		-S_{j-i}(\mathbf{h}_{11}{(i-1)}), & j>i,i=k,\\
		1, & j=i,\\
		0, & \text{others},
	\end{cases}
\end{equation}
with $ \left( \mathbf{L}_{k}\right)_{i,j} $ denoted as $ (i,j) $-entry of the matrix $ \mathbf{L}_{k} $ $ (1\leq k\leq N_{1}-1) $. Thus, $ \hat{\mathbf{P}}_{12}\hat{\mathbf{P}}_{11}^{-1} $ can be viewed as performing $ N_{1}-1 $ successive groups of column transformations $\left(  \mathbf{L}_{1},\mathbf{L}_{2},\ldots,\mathbf{L}_{N_{1}-1} \right) $ to $ \hat{\mathbf{P}}_{12} $. 

Moreover, we denote the $ (i,j) $-entry of $ \hat{\mathbf{P}}_{12} $ by $ p_{i,j}^{(0)} $, and let $ p_{i,j}^{(k)} $ denote the value of $ p_{i,j}^{(0)} $ after $ k $ column transformations, as follows:
\begin{equation}\label{prop-pk}
	\begin{aligned}
		&\hat{\mathbf{P}}_{12}\mathbf{L}_{1}=\begin{pmatrix}
			1 & p_{1,2}^{(1)} & p_{1,3}^{(1)} & \cdots & p_{1,N_{1}}^{(1)}  \\
			0 & 1 & p_{2,3}^{(0)} & \cdots & p_{2,N_{1}}^{(0)}\\
			0 & 0 & 1 & \cdots & p_{3,N_{1}}^{(0)} \\
			\vdots & \vdots & \vdots &\ddots  & \vdots \\
		\end{pmatrix}_{N_{12}\times N_{1}}, \\
		&\hat{\mathbf{P}}_{12}\mathbf{L}_{1}\mathbf{L}_{2}=\begin{pmatrix}
			1 & p_{1,2}^{(1)} & p_{1,3}^{(2)} & \cdots & p_{1,N_{1}}^{(2)}  \\
			0 & 1 & p_{2,3}^{(1)} & \cdots & p_{2,N_{1}}^{(1)}\\
			0 & 0 & 1 & \cdots & p_{3,N_{1}}^{(0)} \\
			\vdots & \vdots & \vdots &\ddots  & \vdots  \\
		\end{pmatrix}_{N_{12}\times N_{1}},\\
		& \quad \vdots \\
%		&\hat{\mathbf{P}}_{12}\mathbf{L}_{1}\mathbf{L}_{2}\cdots \mathbf{L}_{N_{1}-1}=\begin{pmatrix}
%			1 & p_{1,2}^{(1)} & p_{1,3}^{(2)} & \cdots & p_{1,N_{1}}^{(N_{1}-1)}  \\
%			0 & 1 & p_{2,3}^{(1)} & \cdots & p_{2,N_{1}}^{(N_{1}-2)}\\
%			0 & 0 & 1 & \cdots & p_{3,N_{1}}^{(N_{1}-3)} \\
%			\vdots & \vdots & \vdots &\ddots  & \vdots  \\
%		\end{pmatrix}_{N_{12}\times N_{1}},
	\end{aligned}
\end{equation}
At this point, the $ (i,j) $-entry of the matrix $ \hat{\mathbf{P}}_{12}\hat{\mathbf{P}}_{11}^{-1} $ is represented by
\begin{equation}\label{pr-imf2}
	\left( \hat{\mathbf{P}}_{12}\hat{\mathbf{P}}_{11}^{-1} \right)_{i,j} =\begin{cases}
		p_{i,j}^{(j-i)}, & i<j,\\
		1, & i=j,\\
		0, & i>j.
	\end{cases}
\end{equation}
We observe that the nonzero entry $ p_{i,i+k}^{(k)} $ of $ \hat{\mathbf{P}}_{12}\hat{\mathbf{P}}_{11}^{-1} $ is obtained from the entry $ p_{i,i+k}^{(k-1)} $ of $ \hat{\mathbf{P}}_{12}\mathbf{L}_{1} \mathbf{L}_{2} \cdots \mathbf{L}_{i+k-2} $ by performing one column operation, that is, by right-multiplying the matrix by $ \mathbf{L}_{i+k-1} $ with $ 1\leq k \leq N_{1}-i $. Thus, we have
\begin{equation}\label{prop-pkf1}
	p_{i,i+k}^{(k)}=\, p_{i,i+k}^{(k-1)} + \left( \mathbf{L}_{i+k-1}\right)_{i+k-1,i+k} p_{i,i+k-1}^{(k-1)}.
\end{equation}
Similarly, we yield
\begin{equation}\label{prop-pkf2}
		p_{i,i+k}^{(k-1)}=\, p_{i,i+k}^{(k-2)} + \left( \mathbf{L}_{i+k-2}\right)_{i+k-2,i+k} p_{i,i+k-2}^{(k-2)},\quad
		\ldots
\end{equation}
Therefore, we obtain the following expression:
\begin{equation}\label{prop-pkf}
\begin{aligned}
	p_{i,i+k}^{(k)}=\,& p_{i,i+k}^{(0)} + \sum_{v=0}^{k-1} \left( \mathbf{L}_{i+v}\right)_{i+v,i+k} p_{i,i+v}^{(v)}  \\
	=\,& S_{k}(\mathbf{h}_{12}{(i-1)}) - \sum_{v=0}^{k-1}S_{k-v}(\mathbf{h}_{11}{(i-1+v)}) p_{i,i+v}^{(v)}, \quad i=1,2,\ldots, N_{1}-k.
\end{aligned}
\end{equation}

Then, we denote
\begin{equation}\label{prop-notion}
	f_{i}(\varepsilon)=\sum_{k=0}^{\infty} p_{i,i+k}^{(k)} \varepsilon^{k}, \quad g_{n}^{[1,m]}(\varepsilon)=\exp\left(\sum_{k=1}^{\infty} h_{n,k}^{[1,m]}\varepsilon^{k} \right), \quad m,n=1,2,
\end{equation}
with $ h_{n,k}^{[1,m]} $ given in the equation \eqref{xpm2}, and thereby obtain
\begin{equation}\label{prop-sg12}
	\sum_{k=0}^{\infty} S_{k}(\mathbf{h}_{1m}{(v)}) \varepsilon^{k} =g_{1}^{[1,m]}(\varepsilon)\left( g_{2}^{[1,m]}(\varepsilon)\right)^{v}.
\end{equation}
In addition, let $ z=\frac{g_{2}^{[1,1]}(\varepsilon)\chi_{1}^{[1]}}{\chi_{1}-\chi_{1}^{*}}\varepsilon $, we can obtain $ \chi_{1}(\varepsilon) =\frac{\chi_{1}-z\chi_{1}^{*}}{1-z} $ by the definition of $ g_{2}^{[1,1]}(\varepsilon) $ in the equation \eqref{prop-notion},
and further generate
\begin{equation}\label{prop-gm1}
	\begin{aligned}
		&\frac{g_{1}^{[1,2]}(\varepsilon)}{g_{1}^{[1,1]}(\varepsilon)} = \frac{1}{1-\frac{\chi_{1}^{*}-\chi_{2}^{*}}{\chi_{1}-\chi_{2}^{*}}z} =\frac{1}{1-(C_{11}-C_{12}) g_{2}^{[1,1]}(\varepsilon) \varepsilon}, \\
		&g_{2}^{[1,2]}(\varepsilon)= \frac{(\chi_{1}-\chi_{1}^{*})z}{\left( 1-\frac{\chi_{1}^{*}-\chi_{2}^{*}}{\chi_{1}-\chi_{2}^{*}}z\right) \chi_{1}^{[1]}\varepsilon} = \frac{g_{2}^{[1,1]}(\varepsilon)}{1-(C_{11}-C_{12}) g_{2}^{[1,1]}(\varepsilon) \varepsilon},
	\end{aligned}
\end{equation}
with $ C_{mn} $ defined by the equation \eqref{tauij1}.
Then, we employ the above formulae \eqref{prop-sg12} and \eqref{prop-gm1} to calculate
\begin{equation}\label{prop-piki}
	\begin{aligned}
		f_{i}(\varepsilon)-1 =& \sum_{k=1}^{\infty} S_{k}(\mathbf{h}_{12}{(i-1)}) \varepsilon^{k} -\sum_{k=1}^{\infty} \sum_{v=0}^{k-1} S_{k-v}(\mathbf{h}_{11}{(i-1+v)}) p_{i,i+v}^{(v)}\varepsilon^{k} \\
		=& g_{1}^{[1,2]}(\varepsilon)\left( g_{2}^{[1,2]}(\varepsilon)\right)^{i-1}-1 - \sum_{v=0}^{\infty} p_{i,i+v}^{(v)}\varepsilon^{v} \left( g_{1}^{[1,1]}(\varepsilon)\left( g_{2}^{[1,1]}(\varepsilon)\right)^{i-1+v}-1\right)    \\
		=& g_{1}^{[1,1]}(\varepsilon)\left( g_{2}^{[1,1]}(\varepsilon)\right)^{i-1} \left( \left( \frac{1}{1-(C_{11}-C_{12}) g_{2}^{[1,1]}(\varepsilon) \varepsilon}\right)^{i} -f_{i}(g_{2}^{[1,1]}(\varepsilon)\varepsilon) \right) +f_{i}(\varepsilon)-1,
	\end{aligned}	
\end{equation}
which leads to
\begin{equation}\nonumber
	f_{i}(g_{2}^{[1,1]}(\varepsilon)\varepsilon)=\left( \frac{1}{1-(C_{11}-C_{12}) g_{2}^{[1,1]}(\varepsilon) \varepsilon}\right)^{i}. 
\end{equation}
As $ g_{2}^{[1,1]}(\varepsilon)=1+\cO(\varepsilon) $, we can set $ \hat{z}=g_{2}^{[1,1]}(\varepsilon)\varepsilon $, and obtain
\begin{equation}\label{prop-fi}
	f_{i}(\hat{z})=\left( \frac{1}{1-(C_{11}-C_{12}) \hat{z}}\right)^{i}. 
\end{equation}
By comparing the coefficients of the term of $ \hat{z}^{k} $ on both sides of the equation \eqref{prop-fi}, it is found that
\begin{equation}\label{prop-pijk}
	p_{i,i+k}^{(k)}= \binom{i+k-1}{i-1}(C_{11}-C_{12})^{k}, \quad k\geq 0,
\end{equation}
where $ \binom{k}{i} $ is the binomial coefficient.
Therefore, we can verify the equality of $ \hat{\mathbf{P}}_{21}\hat{\mathbf{P}}_{22}^{-1} $ in the equation \eqref{prop-hpli}.

On the other hand, the expression of $ \hat{\mathbf{P}}_{21}\hat{\mathbf{P}}_{22}^{-1} $ in the equation \eqref{prop-hpli} can be proved in the same manner, and the straightforward proof is omitted here.

This completes the proof of Lemma \ref{prop2}.

\subsection{Proof of Proposition \ref{prop3}}\label{subsec_proofp3}	
The proof of Proposition \ref{prop3} is provided in this subsection.

First, we will prove
\begin{equation}\label{prop-pn1n2}
	|\mathbf{P}_{N_{1},N_{2}}|=\frac{|C_{11}|^{N_{1}(N_{1}-1)}}{(\chi_{1}^{*}-\chi_{1})^{N_{1}}} \frac{|C_{22}|^{N_{2}(N_{2}-1)}}{(\chi_{2}^{*}-\chi_{2})^{N_{2}}} \gamma^{N_{1}N_{2}}, \quad \gamma=\left| \frac{\chi_{1}-\chi_{2}}{\chi_{1}^{*}-\chi_{2}}\right|^{2},
\end{equation}
for the arbitrary nonnegative integers $ N_{1} $ and $ N_{2} $.

For $ N_{1}>0 $ and $ N_{2}=0 $, we readily compute
\begin{equation}\label{propd-pn0}
	|\mathbf{P}_{N_{1},N_{2}}|=\left| \frac{1}{\chi_{1}^{*}-\chi_{1}}{\bar{\mathbf{P}}}_{11}^{\dagger}{\bar{\mathbf{P}}}_{11}\right| = 	\frac{|C_{11}|^{N_{1}(N_{1}-1)}}{(\chi_{1}^{*}-\chi_{1})^{N_{1}}},
\end{equation}
which verifies the equation \eqref{prop-pn1n2}. Similarly, the same argument applies to the case $ N_{1}=0 $ and $ N_{2}>0 $.

Next, we consider the case $ N_{1}>0 $ and $ N_{2}>0 $. Since the matrices $ \bar{\mathbf{P}}_{11} $ and $ \bar{\mathbf{P}}_{22} $ are invertible, we employ the Schur complement formula to yield
\begin{equation}\label{propd-pne0}
	\begin{aligned}
		&|\mathbf{P}_{N_{1},N_{2}}|= \frac{|C_{11}|^{N_{1}(N_{1}-1)}}{(\chi_{1}^{*}-\chi_{1})^{N_{1}}} \frac{|C_{22}|^{N_{2}(N_{2}-1)}}{(\chi_{2}^{*}-\chi_{2})^{N_{2}}}  \, \left| \mathbb{I}_{N_{2}} +(\gamma-1) \mathbf{R} \right|,	\\
		&\mathbf{R}=(\bar{\mathbf{P}}_{22}^{\dagger})^{-1} \bar{\mathbf{P}}_{21}^{\dagger} \bar{\mathbf{P}}_{12} \left( \bar{\mathbf{P}}_{11}^{\dagger}\bar{\mathbf{P}}_{11}\right)^{-1}  \bar{\mathbf{P}}_{12}^{\dagger} \bar{\mathbf{P}}_{21} (\bar{\mathbf{P}}_{22})^{-1}.
	\end{aligned}
\end{equation}

Moreover, from Proposition \ref{prop2}, we can obtain
\begin{equation}\label{prop-pin2}
	\begin{aligned}
		\left( \bar{\mathbf{P}}_{12}\bar{\mathbf{P}}_{11}^{-1}\right)_{i,j} &= \left( \mathbf{G}_{12}\hat{\mathbf{P}}_{12}\hat{\mathbf{P}}_{11}^{-1}\mathbf{G}_{11}^{-1}\right)_{i,j} = \binom{j-1}{i-1}(C_{11}-C_{12})^{j-i}\frac{C_{12}^{i-1}}{C_{11}^{j-1}},  \\
		\left( \bar{\mathbf{P}}_{21}\bar{\mathbf{P}}_{22}^{-1}\right)_{i,j} &= \left( \mathbf{G}_{21}\hat{\mathbf{P}}_{21}\hat{\mathbf{P}}_{22}^{-1}\mathbf{G}_{22}^{-1}\right)_{i,j} = \binom{j-1}{i-1}(C_{22}-C_{21})^{j-i}\frac{C_{21}^{i-1}}{C_{22}^{j-1}},
	\end{aligned} 
\end{equation}
where $ (P)_{i,j} $ denotes $ (i,j) $-entry of the matrix $ P $. Then, by the definition of $ C_{lm} $ in the equation \eqref{tauij1}, we can determine that 
\begin{equation}\label{prop-ppij}
	\left( ( \bar{\mathbf{P}}_{11}^{\dagger}) ^{-1}\bar{\mathbf{P}}_{12}^{\dagger}\bar{\mathbf{P}}_{21}\bar{\mathbf{P}}_{22}^{-1} \right)_{i,j}
	=\left(  \frac{\chi_{1}-\chi_{2}}{\chi_{1}^{*}-\chi_{2}}\right) ^{i-1} \left(  \frac{\chi_{1}^{*}-\chi_{2}^{*}}{\chi_{1}^{*}-\chi_{2}}\right) ^{j-1} \sum_{k=1}^{\min(i,j)} \binom{i-1}{k-1}\binom{j-1}{k-1} \left( 1-\frac{1}{\gamma}\right) ^{k-1},
\end{equation}
and thereby obtain
\begin{equation}\label{prop-ppma}
	\begin{aligned}
		&( \bar{\mathbf{P}}_{11}^{\dagger}) ^{-1}\bar{\mathbf{P}}_{12}^{\dagger}\bar{\mathbf{P}}_{21}\bar{\mathbf{P}}_{22}^{-1} = \bar{\mathbf{G}}_{1} \bar{\mathbf{R}} \bar{\mathbf{G}}_{2},\\
		&\bar{\mathbf{G}}_{1}=\diag\left( 1,\frac{\chi_{1}-\chi_{2}}{\chi_{1}^{*}-\chi_{2}} ,\left( \frac{\chi_{1}-\chi_{2}}{\chi_{1}^{*}-\chi_{2}}\right)^{2}, \ldots, \left( \frac{\chi_{1}-\chi_{2}}{\chi_{1}^{*}-\chi_{2}}\right)^{N_{1}-1}   \right),\\
		&\bar{\mathbf{G}}_{2}=  \diag\left( 1,\frac{\chi_{1}^{*}-\chi_{2}^{*}}{\chi_{1}^{*}-\chi_{2}} ,\left( \frac{\chi_{1}^{*}-\chi_{2}^{*}}{\chi_{1}^{*}-\chi_{2}}\right)^{2}, \ldots, \left( \frac{\chi_{1}^{*}-\chi_{2}^{*}}{\chi_{1}^{*}-\chi_{2}} \right)^{N_{2}-1}   \right),\\
		&\bar{\mathbf{R}}=\left( \sum_{k=1}^{\min(i,j)} \binom{i-1}{k-1}\binom{j-1}{k-1} \left( 1-\frac{1}{\gamma}\right) ^{k-1}\right)_{1\leq i\leq N_{1},1\leq j\leq N_{2}}.\\
	\end{aligned}
\end{equation} 
Then, from the definition of the matrix $ \mathbf{R} $ in the equation \eqref{propd-pne0}, we generate
\begin{equation}\label{prop-r}
	\bar{\mathbf{G}}_{2}\mathbf{R}\bar{\mathbf{G}}_{2}^{*}= (\hat{\mathbf{G}}_{2})^{2}\bar{\mathbf{R}}^{T}(\hat{\mathbf{G}}_{1})^{2}\bar{\mathbf{R}}(\hat{\mathbf{G}}_{2})^{2},
\end{equation}
with
\begin{equation}\label{prop-bg3}
	\hat{\mathbf{G}}_{m}=\diag\left( 1,\gamma^{1/2} ,\gamma, \ldots, \gamma^{(N_{m}-1)/2}   \right).
\end{equation}
Since $ \bar{\mathbf{G}}_{2}\bar{\mathbf{G}}_{2}^{*}=(\hat{\mathbf{G}}_{2})^{2} $, we can calculate
\begin{equation}\label{prop-matrix}
	\begin{aligned}
		\left| \mathbb{I}_{N_{2}} +(\gamma-1) \mathbf{R} \right| =& 
		\left| \mathbb{I}_{N_{2}} +(\gamma-1) \hat{\mathbf{G}}_{2}\bar{\mathbf{R}}^{T}(\hat{\mathbf{G}}_{1})^{2}\bar{\mathbf{R}}\hat{\mathbf{G}}_{2} \right|\\
		=& \left| \mathbb{I}_{N_{2}} +(\gamma-1)\hat{\mathbf{R}}^{T} \hat{\mathbf{R}}\right|\\
		=&\, \gamma^{N_{1}N_{2}},
	\end{aligned}
\end{equation}
based on Proposition \ref{prop1}, where $ \hat{\mathbf{R}} $ is defined by the equation \eqref{pl-deo-w}. Then, substituting the result of \eqref{prop-matrix} into \eqref{propd-pne0} verifies the equation \eqref{prop-pn1n2}.

Therefore, we can similarly calculate
\begin{equation}\label{prop-pn1n2b}
	\begin{aligned}
		|\mathbf{P}_{N_{1}-1,N_{2}}|&=\frac{|C_{11}|^{(N_{1}-1)(N_{1}-2)}}{(\chi_{1}^{*}-\chi_{1})^{N_{1}-1}} \frac{|C_{22}|^{N_{2}(N_{2}-1)}}{(\chi_{2}^{*}-\chi_{2})^{N_{2}}} \gamma^{(N_{1}-1)N_{2}},\\
		|\mathbf{P}_{N_{1}+1,N_{2}}|&=\frac{|C_{11}|^{(N_{1}+1)N_{1}}}{(\chi_{1}^{*}-\chi_{1})^{N_{1}+1}} \frac{|C_{22}|^{N_{2}(N_{2}-1)}}{(\chi_{2}^{*}-\chi_{2})^{N_{2}}} \gamma^{(N_{1}+1)N_{2}},\\
		|\mathbf{P}_{N_{1},N_{2}-1}|&=\frac{|C_{11}|^{N_{1}(N_{1}-1)}}{(\chi_{1}^{*}-\chi_{1})^{N_{1}}} \frac{|C_{22}|^{(N_{2}-1)(N_{2}-2)}}{(\chi_{2}^{*}-\chi_{2})^{N_{2}-1}} \gamma^{N_{1}(N_{2}-1)},\\
		|\mathbf{P}_{N_{1},N_{2}+1}|&=\frac{|C_{11}|^{N_{1}(N_{1}-1)}}{(\chi_{1}^{*}-\chi_{1})^{N_{1}}} \frac{|C_{22}|^{(N_{2}+1)N_{2}}}{(\chi_{2}^{*}-\chi_{2})^{N_{2}+1}} \gamma^{N_{1}(N_{2}+1)},\\
	\end{aligned}
\end{equation}
with $ N_{1}>0 $ and $ N_{2}>0 $, and omit the calculation process here. Then, substituting \eqref{prop-pn1n2} and \eqref{prop-pn1n2b} into \eqref{dete-p} confirms the identity in the equation \eqref{dete-p}.

The proof of Proposition \ref{prop3} is completed.

\subsection{Proof of Theorem \ref{theo-nmrp}}\label{subsec_proofnmrp}	
This subsection presents the proof of Theorem \ref{theo-nmrp}.

First, we will rewrite the determinant $ \tau^{(k)} $ $ (k=0,1,2) $ in the $ [N_{1}, N_{2}] $-order RW solution formula \eqref{horw}. Based on the definitions of the matrices $ \mathbf{M}^{(k,l,m)} $ in the equations \eqref{msmat1} and \eqref{tauij1}, the matrices $ \mathbf{M}^{(k,l,m)} $ are also expressed as
	\begin{equation}\label{Mslm1}
		\mathbf{M}^{(k,l,m)}= \frac{1}{\chi_{l}^{*}-\chi_{m}} \mathbf{M}^{(k,-)}_{lm}\mathbf{M}^{(k,+)}_{ml}, \quad k=0,1,2, \quad l,m=1,2,
	\end{equation}
	where 
	\begin{equation}\label{Mspm1}
	\begin{aligned}
		&\mathbf{M}^{(k,+)}_{ml}=\left[ C_{ml}^{i-1}S_{2j-i}( \mathbf{u}^{(k,m,l,+)}(i-1))  \right]_{1\leq i\leq 2N_{ml}, 1\leq j\leq N_{m}},\\
		&\mathbf{M}^{(k,-)}_{lm}=\left[ (C_{lm}^{*})^{j-1}S_{2i-j}( \mathbf{u}^{(k,l,m,-)}(j-1))  \right]_{1\leq i\leq N_{l}, 1\leq j\leq 2N_{lm}},\\
	\end{aligned}
	\end{equation}
	the symbol $ N_{lm}$ is defined by the equation \eqref{utm}, $C_{ml}  $ is given in the equation \eqref{tauij1}, and the vectors 
	\[\mathbf{u}^{(k,l,m,\pm)}(v)=\left( {u}_{1}^{(k,l,m,\pm)}(v), {u}_{2}^{(k,l,m,\pm)}(v),\ldots\right)\]
	are defined by
	\begin{equation}\label{uslm}
		\begin{array}{ll}
			\mathbf{u}^{(k,m,l,+)}(v)= \mathbf{x}^{(k,+)}_{m} + \mathbf{h}_{ml}{(v)}, & \mathbf{x}^{(k,+)}_{m}=\mathbf{x}^{+}_{m}  +\mathbf{h}_{3}^{[m,k]}, \\
			\mathbf{u}^{(k,l,m,-)}(v)=\mathbf{x}^{(k,-)}_{l} + \mathbf{h}_{lm}^{*}{(v)}, & \mathbf{x}^{(k,-)}_{l}=\mathbf{x}^{-}_{l} - (\mathbf{h}_{3}^{[l,k]})^{*},
		\end{array}
	\end{equation}
	with $ \mathbf{h}_{3}^{[l,0]} $ being a zero vector, and $\mathbf{x}^{\pm}_{l}$, $ \mathbf{h}_{j}^{[l,m]} $ $ (1\leq j\leq 3) $, and $ \mathbf{h}_{ml}{(v)} $ given in the equations \eqref{xpm1}, \eqref{xpm2}, and \eqref{utm}, respectively.
	
	Since
	\begin{equation}\label{Schp1}
	\begin{aligned}
		\sum_{i=0}^{\infty} S_{i}(\mathbf{u}^{(k,m,l,+)}(v) ) \varepsilon^{i} &= \exp\left( \sum_{j=1}^{\infty} {u}_{j}^{(k,l,m,+)}(v)\varepsilon^{j} \right) \\
		&= \exp\left(\sum_{j=1}^{\infty} ({x}^{+}_{m,j} +h_{3,j}^{[m,k]}) \varepsilon^{j} \right) \exp\left( \sum_{j=1}^{\infty} ({h}_{1,j}^{[m,l]} +v{h}_{2,j}^{[m,l]}) \varepsilon^{j} \right) \\
		&= \left( \sum_{i=0}^{\infty} S_{i}( \mathbf{x}^{(k,+)}_{m}  ) \varepsilon^{i}\right)
		 \, \left(  \sum_{j=0}^{\infty} S_{j}(\mathbf{h}_{ml}{(v)}) \varepsilon^{j}\right)\\
		&= \sum_{i=0}^{\infty} \left(\sum_{j=0}^{i} S_{i-j}( \mathbf{x}^{(k,+)}_{m} ) S_{j}(\mathbf{h}_{ml}{(v)}) \right) \varepsilon^{i},
	\end{aligned}
	\end{equation}
	we have
	\begin{equation}\label{Schexp1}
		S_{i}(\mathbf{u}^{(k,m,l,+)}(v) ) =\sum_{j=0}^{i} S_{i-j}(\mathbf{x}^{(k,+)}_{m} ) S_{j}(\mathbf{h}_{ml}{(v)}).
	\end{equation}
	Similarly, we can generate
	\begin{equation}\label{Schexp2}
		S_{i}(\mathbf{u}^{(k,l,m,-)}(v) ) =\sum_{j=0}^{i} S_{i-j}( \mathbf{x}^{(k,-)}_{l} ) S_{j}(\mathbf{h}_{lm}^{*}{(v)}).
	\end{equation}
	We apply the formulae \eqref{Schexp1} and \eqref{Schexp2} to expand all elements of the matrices $ \mathbf{M}^{(k,\pm)}_{lm} $, and further obtain the expressions:
	\begin{equation}\label{Mslm2}
		\mathbf{M}^{(k,+)}_{ml}= \mathbf{P}_{ml} \,{\mathbf{M}}_{m}^{(k,+)}, 	\quad \mathbf{M}^{(k,-)}_{lm}= {\mathbf{M}}_{l}^{(k,-)}\, \mathbf{P}_{lm}^{\dagger} ,\\
	\end{equation}
	with
	\begin{equation}\label{Mslm2a}
	\begin{aligned}
		{\mathbf{M}}_{m}^{(k,+)} =& \left[ S_{2j-i}(\mathbf{x}^{(k,+)}_{m}) \right]_{1\leq i\leq 2N_{m}, 1\leq j\leq N_{m}}, \quad
		{\mathbf{M}}_{l}^{(k,-)} = \left[ S_{2i-j}( \mathbf{x}^{(k,-)}_{l} ) \right]_{1\leq i\leq N_{l}, 1\leq j\leq 2N_{l}},\\
		\mathbf{P}_{ml}\quad=& \left[C_{ml}^{i-1} S_{j-i}(\mathbf{h}_{ml}{(v)}) \right]_{1\leq i\leq 2N_{lm}, 1\leq j\leq 2N_{m}}. 	
	\end{aligned}
	\end{equation}

	Thus, the determinant $ \tau^{(k)} $ can be rewritten as
	\begin{equation}\label{taus1}
		\tau^{(k)}=	\det\left( 
		\begin{bmatrix}
			{\mathbf{M}}_{1}^{(k,-)} & \mathbf{0}\\
			\mathbf{0} & {\mathbf{M}}_{2}^{(k,-)}
		\end{bmatrix}
		\mathbf{P}
		\begin{bmatrix}
			{\mathbf{M}}_{1}^{(k,+)} & \mathbf{0}\\
			\mathbf{0} & {\mathbf{M}}_{2}^{(k,+)}
		\end{bmatrix}
		\right), \quad k=0,1,2,
	\end{equation}
	with
	\begin{equation}\label{mat_p}
		\mathbf{P}=\begin{bmatrix}
			\frac{1}{\chi_{1}^{*}-\chi_{1}}\mathbf{P}_{11}^{\dagger}\mathbf{P}_{11} & \frac{1}{\chi_{1}^{*}-\chi_{2}}\mathbf{P}_{12}^{\dagger}\mathbf{P}_{21}\\
			\frac{1}{\chi_{2}^{*}-\chi_{1}}\mathbf{P}_{21}^{\dagger}\mathbf{P}_{12} & \frac{1}{\chi_{2}^{*}-\chi_{2}}\mathbf{P}_{22}^{\dagger}\mathbf{P}_{22}
		\end{bmatrix}.
	\end{equation}
	
	Then, by utilizing the Cauchy--Binet formula, the determinant $ \tau^{(k)} $ in the equation \eqref{taus1} can be expanded as the following form
	\begin{equation}\label{taus_ex1}
		\tau^{(k)}= \sum_{\substack{ 1\leq v_{l,1}< v_{l,2}<\ldots< v_{l,N_{l}}\leq 2N_{l},  \\ 1\leq \bar{v}_{l,1}< \bar{v}_{l,2}<\ldots< \bar{v}_{l,N_{l}}\leq 2N_{l}, }}
		\left| \mathbf{P}[\mathbf{r},\mathbf{s}] \right| 
		\left| 	{\mathbf{M}}_{1}^{(k,-)} [:,\mathbf{v}_{1}] \right| 
		\left| 	{\mathbf{M}}_{2}^{(k,-)} [:,\mathbf{v}_{2}] \right| 
		\left| 	{\mathbf{M}}_{1}^{(k,+)} [\bar{\mathbf{v}}_{1},:]\right| 
		\left| 	{\mathbf{M}}_{2}^{(k,+)} [\bar{\mathbf{v}}_{2},:] \right| ,
	\end{equation}
	where $ {\mathbf{M}}_{l}^{(k,-)}[:,\mathbf{v}_{l}] $ denotes the $ N_{l}\times N_{l} $ submatrices of $  {\mathbf{M}}_{l}^{(k,-)} $ formed by selecting the columns indexed by the vectors $ \mathbf{v}_{l}=(v_{l,1}, v_{l,2},\ldots, v_{l,N_{l}}) $, $ {\mathbf{M}}_{l}^{(k,+)}[\bar{\mathbf{v}}_{l},:] $ denotes the $ N_{l}\times N_{l} $ submatrices of $  {\mathbf{M}}_{l}^{(k,+)} $ formed by the rows indexed by the vectors $ \bar{\mathbf{v}}_{l}=(\bar{v}_{l,1}, \bar{v}_{l,2},\ldots, \bar{v}_{l,N_{l}}) $, and $ \mathbf{P}[\mathbf{r},\mathbf{s}] $ denotes the $ N\times N $ submatrix of $ \mathbf{P} $ by selecting the combined row index vector $ \mathbf{r}=({v}_{1,1},{v}_{2,1},\ldots,v_{1,N_{1}},{v}_{2,1}+2N_{1},{v}_{2,2}+2N_{1},\ldots,{v}_{2,N_{2}}+2N_{1}) $ and the combined column index vector $ \mathbf{s}=(\bar{v}_{1,1},\bar{v}_{2,1},\ldots,\bar{v}_{1,N_{1}},\bar{v}_{2,1}+2N_{1},\bar{v}_{2,2}+2N_{1},\ldots,\bar{v}_{2,N_{2}}+2N_{1}) $.
	
	When the internal large parameters $\{d_{l,2j+1}\}_{1\leq l \leq 2, 0 \leq j\leq N_{l}-1}$ are defined by the equation \eqref{dj1}, we can generate the asymptotic expression
	\begin{equation}\label{asysk1}
		\begin{aligned}
			&S_{j}( \mathbf{x}^{(k,+)}_{l} ) \sim A^{j}\varphi_{j}(z_{l},\bm{\kappa}_{l}), \quad 
			S_{j}( \mathbf{x}^{(k,-)}_{l} ) \sim A^{j}\varphi_{j}^{*}(z_{l},\bm{\kappa}_{l}), \quad \text{as} \quad A\gg 1,\\
			&z_{l}=x_{l,1}^{+}A^{-1}=\ii \chi_{l}^{[1]}\left( x+\chi_{l}t \right)A^{-1} + \kappa_{l,1},
		\end{aligned}
	\end{equation}
	by a calculation similar to that in the equation \eqref{Schp1}.
	It is found that the order of $A$ in the polynomials $S_{j}( \mathbf{x}^{(k,+)}_{l} )$ and $ S_{j}( \mathbf{x}^{(k,-)}_{l} ) $ decreases as the subscript $j$ decreases.
	 
	 Then, we can calculate the highest-order term of $A$ of $ \tau^{(k)} $ in the equation \eqref{taus_ex1} by several index vectors to analyze the asymptotics of $\mathbf{q}^{[N_{1},N_{2}]}(x,t)$ with $A\gg1$, where the index vectors are 
	 \begin{equation}\label{indv1}
	 \begin{array}{ll}
	 	\mathbf{v}_{l}^{(1)}=(1,2, \ldots, N_{l}-1, N_{l}), & 
	 	\mathbf{v}_{l}^{(2)}=(1,2, \ldots, N_{l}-1, N_{l}+1), \quad l=1,2. \\
	 \end{array}
	 \end{equation} 
	 
	When choosing a set of index vectors $ \mathbf{v}_{l}=\bar{\mathbf{v}}_{l}=\mathbf{v}_{l}^{(1)} $ $ (1\leq l\leq 2) $, we can calculate the highest-order term of $A$ in $ \tau^{(k)} $ and provide the following expression:
 	\begin{equation}\label{taus_hot1}
 		\tau^{(k)}=c_{N_{1}}^{-2}c_{N_{2}}^{-2} \left| \mathbf{P}[\mathbf{r}_{11},\mathbf{s}_{11}] \right| \left| \Theta_{N_{1}}(z_{1},\bm{\kappa}_{1})\right|^{2} \left| \Theta_{N_{2}}(z_{2},\bm{\kappa}_{2})\right|^{2} A^{\Gamma}+ \cO(A^{\Gamma-1}),
 	\end{equation}
	where $ \Gamma=\sum_{l=1}^{2}{N_{l}(N_{l}+1)} $, $ \mathbf{P}[\mathbf{r}_{11},\mathbf{s}_{11}] $ is a $ N\times N $ constant matrix formed by the row index vector $ \mathbf{r}_{11} =\mathbf{r} |_{\mathbf{v}_{1}=\mathbf{v}_{1}^{(1)},\mathbf{v}_{2}=\mathbf{v}_{2}^{(1)}} $ and the column index vector $ \mathbf{s}_{11}= \mathbf{s}|_{\bar{\mathbf{v}}_{1}=\mathbf{v}_{1}^{(1)}, \bar{\mathbf{v}}_{2} =\mathbf{v}_{2}^{(1)}} $ of $ \mathbf{P} $, and the constant $ c_{N_{l}} $ is given in the equation \eqref{amp}.
	
	Substituting the above expression \eqref{taus_hot1} into the RW solution formula \eqref{horw}, then we find that as $ A\rightarrow \infty $,
	\begin{equation}\label{qasy1}
		\mathbf{q}^{[N_{1},N_{2}]}(x,t)\rightarrow\mathbf{q}^{[0]}(x,t),
	\end{equation}
	except for some locations near $ (x_{l,0}, t_{l,0}) $ of the $ (x,t) $-plane. Here, $ (x_{l,0}, t_{l,0}) $ is defined by
	\begin{equation}\label{x1}
		\ii \chi^{[1]}_{l} (x_{l,0} +\chi_{l}t_{l,0})A^{-1}+\kappa_{l,0}=z_{l,0},  \quad l=1,2,
	\end{equation} 
	with $ z_{l,0} $ being each of simple root of the Adler--Moser polynomial $ \Theta_{N_{l}}(z_{l},\bm{\kappa}_{l}) $. Now, we complete the proof of Item $ (3) $ of Theorem \ref{theo-nmrp}.
	
	Next, we consider the asymptotics of the RW solution $ \mathbf{q}^{[N_{1}, N_{2}]}(x,t) $ near the location $ (x_{l,0}, t_{l,0}) $ of the $ (x,t) $-plane.
	Here, we provide the asymptotic analysis of $ \mathbf{q}^{[N_{1}, N_{2}]}(x,t) $ near the location $ (x_{1,0}, t_{1,0}) $. Then, the asymptotic behavior near the location $ (x_{2,0}, t_{2,0}) $ can be established in a similar manner and is therefore omitted.
	
	When $ N_{1}>0 $, from the assumptions of Theorem \ref{theo-nmrp}, we find that all roots of $ \Theta_{N_{1}}(z_{1},\bm{\kappa}_{1}) $ are simple, and there exists a constant $ c>0 $ such that 
	$ \|(x_{1,0},t_{1,0})-(x_{2,0},t_{2,0})\|_{2}\geq c $.
	Thus, it suffices to compute the leading two terms of $ A $ of $ \tau^{(k)} $ in the equation \eqref{taus_ex1} by the indexes $ \mathbf{v}_{2}=\bar{\mathbf{v}}_{2}=\mathbf{v}_{2}^{(1)} $ and the four possible choices for the other two index vectors: 
	\begin{equation}\label{indv2}
		\begin{aligned}
			&(i):\quad \mathbf{v}_{1}=\bar{\mathbf{v}}_{1}=\mathbf{v}_{1}^{(1)}; \qquad 
			&(ii): \quad \mathbf{v}_{1}=\mathbf{v}_{1}^{(1)},\bar{\mathbf{v}}_{1}=\mathbf{v}_{1}^{(2)}; \\
			&(iii): \quad \mathbf{v}_{1}=\mathbf{v}_{1}^{(2)},\bar{\mathbf{v}}_{1}=\mathbf{v}_{1}^{(1)}; \qquad
			&(iv): \quad \mathbf{v}_{1}=\mathbf{v}_{1}^{(2)},\bar{\mathbf{v}}_{1}=\mathbf{v}_{1}^{(2)};
		\end{aligned}
	\end{equation}
	with $ \mathbf{v}_{1}^{(l)} $ $ (1\leq l\leq 2) $ defined in the equation \eqref{indv1}.
	
	Now, we take a coordinate transformation
	\begin{equation}\label{hatxt1}
		\hat{x}=x-\hat{x}_{0}A, \quad \hat{t}=t-\hat{t}_{0}A,
	\end{equation}
	where $ (\hat{x}_{0}, \hat{t}_{0})=(x_{1,0}, t_{1,0})A^{-1} $, and $ (x_{1,0},t_{1,0}) $ are defined by the equation \eqref{x1} with a simple root of $ \Theta_{N_{1}}(z_{1},\bm{\kappa}_{1}) $ at the $ (x,t) $-plane. Then, we have
	\begin{equation}\label{uxtv}
		\begin{aligned}
			&\mathbf{x}^{(k,\pm)}_{l}(x,t) = {\mathbf{x}}_{l}^{(k,\pm)}(\hat{x},\hat{t}) +\mathbf{x}_{l}^{\pm}(\hat{x}_{0}A,\hat{t}_{0}A),  \quad l=1,2, \quad 0\leq s \leq 2,
		\end{aligned}
	\end{equation}
	where the vectors $ \mathbf{x}_{l}^{\pm}(x,t)=({{x}}_{l,1}^{\pm}({x},{t}), {{x}}_{l,2}^{\pm}({x},{t}), \ldots) $ and $ \mathbf{x}_{l}^{(k,\pm)}(x,t)=({{x}}_{l,1}^{(k,\pm)}({x},{t}), {{x}}_{l,2}^{(k,\pm)}({x},{t}), \ldots)  $ are given in the equations \eqref{xpm1} and \eqref{uslm}, respectively. When the large parameters $ d_{l,2j}=0 $ and $ d_{l,2j+1} $ are defined by the equation \eqref{dj1} with $ 1\leq l\leq 2$ and $  j\geq0 $, we yield 
	\begin{equation}\label{asyscp}
		\begin{aligned}
			 &\sum_{i=0}^{\infty} S_i (\mathbf{x}^{(k,+)}_{l}(x,t))(A^{-1} \varepsilon)^i \\
			=&\exp \left( \sum_{j=1}^{\infty} \left( {x}_{l,j}^{(k,+)}(\hat{x},\hat{t}) +{x}_{l,j}^{+}(\hat{x}_{0}A,\hat{t}_{0}A)\right) (A^{-1} \varepsilon)^j \right)\\
			=& \exp\left( (\ii\chi_{l}^{[1]}(\hat{x}_{0} + \chi_{l}\hat{t}_{0}) + \kappa_{l,1})\varepsilon + \sum_{j=1}^{\infty}\kappa_{l,2j+1}\varepsilon^{2j+1} \right)  \exp\left( \left( \hat{x}_{l,1}^{(k,+)}(\hat{x},\hat{t})\varepsilon + {x}_{l,2}^{+}(\hat{x}_{0},\hat{t}_{0})\varepsilon^{2} \right) A^{-1} +\cO(A^{-2}) \right) \\
			=& \left(\sum_{j=0}^{\infty} \varphi_{j}(\hat{z}_{l,0}, \bm{\kappa}_{l}) \varepsilon^{j}\right) \left[1+ \left( \hat{x}_{l,1}^{(k,+)}(\hat{x},\hat{t})\varepsilon + {x}_{l,2}^{+}(\hat{x}_{0},\hat{t}_{0})\varepsilon^{2}\right) A^{-1} +\cO(A^{-2}) \right]
		\end{aligned}
	\end{equation}
	with
	\begin{equation}\label{zlhx}
		 \hat{z}_{l,0}=\ii\chi_{l}^{[1]}(\hat{x}_{0} + \chi_{l}\hat{t}_{0}) + \kappa_{l,1}, \quad \hat{x}_{l,1}^{(k,+)}(\hat{x},\hat{t})= \ii\chi_{l}^{[1]}\left( \hat{x}+\chi_{l}\hat{t}\right)  +h_{3,1}^{[l,k]},
	\end{equation}
	and then obtain
	\begin{equation}\label{asyscp1}
	\begin{aligned}
		S_{j}(\mathbf{x}_{l}^{(k,+)}(x,t)) =& A^j \left[ \varphi_{j}(\hat{z}_{l,0},\bm{\kappa}_{l}) +A^{-1} \left(  \hat{x}_{l,1}^{(k,+)}(\hat{x},\hat{t}) \varphi_{j-1}(\hat{z}_{l,0},\bm{\kappa}_{l}) + {x}_{l,2}^{+}(\hat{x}_{0},\hat{t}_{0}) \varphi_{j-2}(\hat{z}_{l,0},\bm{\kappa}_{l})\right) +\cO(A^{-2})\right].
	\end{aligned}
	\end{equation}
	Similarly, we also generate
	\begin{equation}\label{asyscp2}
	\begin{aligned}
		S_{j}(\mathbf{x}_{l}^{(k,-)}(x,t)) =& A^j \left[ \varphi_{j}^{*}(\hat{z}_{l,0},\bm{\kappa}_{l}) +A^{-1} \left(  \hat{x}_{l,1}^{(k,-)}(\hat{x},\hat{t}) \varphi_{j-1}^{*}(\hat{z}_{l,0},\bm{\kappa}_{l}) + {x}_{l,2}^{-}(\hat{x}_{0},\hat{t}_{0}) \varphi_{j-2}^{*}(\hat{z}_{l,0},\bm{\kappa}_{l})\right) +\cO(A^{-2})\right]
	\end{aligned}
	\end{equation}
	with
	\begin{equation}\label{xli1}
		\hat{x}_{l,1}^{(k,-)}(\hat{x},\hat{t})= -\ii(\chi_{l}^{[1]})^{*}\left( \hat{x}+\chi_{l}^{*}\hat{t}\right)  -(h_{3,1}^{[l,k]})^{*}.
	\end{equation}

	Based on the assumptions of Theorem \ref{theo-nmrp} and $ \hat{z}_{1,0}=z_{1,0} $, it is found that
	\begin{equation}\label{the120}
		 \Theta_{N_{1}}(\hat{z}_{1,0},\bm{\kappa}_{1})=0 , \quad \Theta_{N_{1}}'(\hat{z}_{1,0},\bm{\kappa}_{1})\ne0, \quad \Theta_{N_{2}}(\hat{z}_{2,0},\bm{\kappa}_{2})\ne0.
	\end{equation}
Thus, for the index vectors $\mathbf{v}_2=\bar{\mathbf{v}}_2=\mathbf{v}_2^{(1)}$ and those in the equation \eqref{indv2}, we can yield
	\begin{equation}\label{asyms1}
	\begin{aligned}
		&\left| M_{1}^{(k,-)}[:,\mathbf{v}_1^{(1)}]\right|=
		c_{N_{1}}^{-1}\left[ \hat{x}_{1,1}^{(k,-)}(\hat{x},\hat{t})  \left(\Theta_{N_1}^{*}(\hat{z}_{1,0},\bm{\kappa}_1)\right)' +{x}_{1,2}^{-}(\hat{x}_{0},\hat{t}_{0}) \mathcal{L}_{1}\Theta_{N_{1}}^{*}(\hat{z}_{1,0},\bm{\kappa}_{1}) \right]  A^{\frac{N_1(N_1 + 1)}{2}-1} +\mathcal{O}(A^{\frac{N_1(N_1 + 1)}{2}-2}),\\
		&\left| M_{1}^{(k,+)}[\bar{\mathbf{v}}_1^{(1)},:]\right|=
		c_{N_{1}}^{-1}\left[ \hat{x}_{1,1}^{(k,+)}(\hat{x},\hat{t}) \Theta_{N_1}'(\hat{z}_{1,0},\bm{\kappa}_1) +{x}_{1,2}^{+}(\hat{x}_{0},\hat{t}_{0}) \mathcal{L}_{1}\Theta_{N_{1}}(\hat{z}_{1,0},\bm{\kappa}_{1}) \right]  A^{\frac{N_1(N_1 + 1)}{2}-1} +\mathcal{O}(A^{\frac{N_1(N_1 + 1)}{2}-2}),\\
		&\left| M_{1}^{(k,-)}[:,\mathbf{v}_1^{(2)}]\right|=
		c_{N_{1}}^{-1} \Theta_{N_1}^*(\hat{z}_{1,0},\bm{\kappa}_1)\, A^{\frac{N_1(N_1 + 1)}{2}-1} +\mathcal{O}(A^{\frac{N_1(N_1 + 1)}{2}-2}),\\
		&\left| M_{1}^{(k,+)}[\bar{\mathbf{v}}_1^{(2)},:]\right|=
		c_{N_{1}}^{-1}  \Theta_{N_1}(\hat{z}_{1,0},\bm{\kappa}_1)\, A^{\frac{N_1(N_1 + 1)}{2}-1} +\mathcal{O}(A^{\frac{N_1(N_1 + 1)}{2}-2}),\\
		&\left| M_{2}^{(k,-)}[:,\mathbf{v}_2^{(1)}]\right|=
		c_{N_{2}}^{-1} \Theta_{N_2}^*(\hat{z}_{2,0},\bm{\kappa}_2)\, A^{\frac{N_2(N_2 + 1)}{2}} +\mathcal{O}(A^{\frac{N_2(N_2 + 1)}{2}-1}),\\
		&\left| M_{2}^{(k,+)}[\bar{\mathbf{v}}_2^{(1)},:]\right|=
		c_{N_{2}}^{-1}  \Theta_{N_2}(\hat{z}_{2,0},\bm{\kappa}_2)\, A^{\frac{N_2(N_2 + 1)}{2}} +\mathcal{O}(A^{\frac{N_2(N_2 + 1)}{2}-1}),\\
	\end{aligned}
	\end{equation}
	by the formulae \eqref{asyscp1} and \eqref{asyscp2} with the operator $ \mathcal{L}_{1} $ defined by Definition \ref{defi_1}. Then, according to Proposition \ref{prop-app1}, we have
	\[\mathcal{L}_{1} \Theta_{N_{1}}(\hat{z}_{1,0},\bm{\kappa}_{1})=\mathcal{L}_{2} \Theta_{N_{1}}(\hat{z}_{1,0},\bm{\kappa}_{1})=0. \]

Therefore, when $ A\gg 1 $ and $\sqrt{\hat{x}^{2}+\hat{t}^{2}} = \cO(1)$ (i.e., $\sqrt{(x-{x}_{1,0})^{2}+(t-{t}_{1,0})^{2}} = \cO(1)$), we can simplify the expression of $ \tau^{(k)} $ in the equation \eqref{taus_ex1} to the following asymptotic form
	\begin{equation}\label{asyexpre1}
	\begin{aligned}
		\tau^{(k)}& =\left[\left| \mathbf{P}[\mathbf{r}_{21},\mathbf{s}_{21}]\right| +\left| \mathbf{P}[\mathbf{r}_{11},\mathbf{s}_{11}]\right|  \hat{x}_{1,1}^{(k,+)}(\hat{x},\hat{t})  \hat{x}_{1,1}^{(k,-)}(\hat{x},\hat{t}) +\left| \mathbf{P}[\mathbf{r}_{11},\mathbf{s}_{21}]\right|  \hat{x}_{1,1}^{(k,-)}(\hat{x},\hat{t})   + \left| \mathbf{P}[\mathbf{r}_{21},\mathbf{s}_{11}]\right|  \hat{x}_{1,1}^{(k,+)}(\hat{x},\hat{t})   	\right]\\
%		&\left. \\
		&\times 
		c_{N_{1}}^{-2}c_{N_{2}}^{-2}\left|\Theta_{N_1}'(\hat{z}_{1,0},\bm{\kappa}_1)\right|^2 \left|\Theta_{N_2}(\hat{z}_{2,0},\bm{\kappa}_2)\right|^2 A^{\Gamma-2}
		+\cO(A^{\Gamma-3}) \\
		&=  \left| \mathbf{P}[\mathbf{r}_{11},\mathbf{s}_{11}]\right|  
		\left[ \left( \hat{x}_{1,1}^{(k,+)}(\hat{x},\hat{t})  +\Delta_{1}(\hat{z}_{1,0}) \right) \left( \hat{x}_{1,1}^{(k,-)}(\hat{x},\hat{t})  +\bar{\Delta}_{1}(\hat{z}_{1,0}) \right) 
		+\dfrac{ \left| \mathbf{P}[\mathbf{r}_{21},\mathbf{s}_{21}]\right|}{ \left| \mathbf{P}[\mathbf{r}_{11},\mathbf{s}_{11}]\right|} 
		-\dfrac{ \left| \mathbf{P}[\mathbf{r}_{21},\mathbf{s}_{11}]\right|}{ \left| \mathbf{P}[\mathbf{r}_{11},\mathbf{s}_{11}]\right|} 
		\dfrac{ \left| \mathbf{P}[\mathbf{r}_{11},\mathbf{s}_{21}]\right|}{ \left| \mathbf{P}[\mathbf{r}_{11},\mathbf{s}_{11}]\right|} \right] \\ 
		&  \times c_{N_{1}}^{-2}c_{N_{2}}^{-2}    \left|\Theta_{N_1}'(\hat{z}_{1,0},\bm{\kappa}_1)\right|^2 \left|\Theta_{N_2}(\hat{z}_{2,0},\bm{\kappa}_2)\right|^2 A^{\Gamma-2}
		+\cO(A^{\Gamma-3}), \quad k=0,1,2,
	\end{aligned}
	\end{equation}
	based on the four sets of indexes \eqref{indv2} and the formulae \eqref{asyms1}, where the complex constants $ \Delta_{1}(\hat{z}_{1,0})=\cO(1) $ and $ \bar{\Delta}_{1}(\hat{z}_{1,0})=\cO(1) $ are defined by
	\begin{equation}\label{delta1}
		\begin{aligned}
			\Delta_{1}= \dfrac{ \left| \mathbf{P}[\mathbf{r}_{11},\mathbf{s}_{21}]\right|}{ \left| \mathbf{P}[\mathbf{r}_{11},\mathbf{s}_{11}]\right|} , \quad
			\bar{\Delta}_{1}= \dfrac{ \left| \mathbf{P}[\mathbf{r}_{21},\mathbf{s}_{11}]\right|}{ \left| \mathbf{P}[\mathbf{r}_{11},\mathbf{s}_{11}]\right|} , 
		\end{aligned}
	\end{equation}
	the row indexes $ \mathbf{r}_{m1}= \mathbf{r}|_{\mathbf{v}_{1} =\mathbf{v}_{1}^{(m)}, \mathbf{v}_{2}=\mathbf{v}_{2}^{(1)}}$, and the column indexes $ \mathbf{s}_{m1}= \mathbf{s}|_{\bar{\mathbf{v}}_{1} =\mathbf{v}_{1}^{(m)}, \bar{\mathbf{v}}_{2}=\mathbf{v}_{2}^{(1)}}$ with $ m=1,2 $.
	Based on the definition of the elements in the matrix $ \mathbf{P} $ \eqref{mat_p}, we verify the equality of
	\[ \left( \dfrac{ \left| \mathbf{P}[\mathbf{r}_{11},\mathbf{s}_{21}]\right|}{ \left| \mathbf{P}[\mathbf{r}_{11},\mathbf{s}_{11}]\right|} \right)^{*} = \dfrac{ \left| \mathbf{P}[\mathbf{r}_{11},\mathbf{s}_{21}]\right|}{ \left| \mathbf{P}[\mathbf{r}_{11},\mathbf{s}_{11}]\right|},
	\]
	and thereby concludes that $ \Delta_{1}= \bar{\Delta}_{1}^{*}$. In addition, by the Desnanot--Jacobi identity \cite{behrend2013} and the Proposition \ref{prop3}, we can obtain
	\begin{equation}\label{cprs}
		\begin{aligned}
			\dfrac{ \left| \mathbf{P}[\mathbf{r}_{21},\mathbf{s}_{21}]\right|}{ \left| \mathbf{P}[\mathbf{r}_{11},\mathbf{s}_{11}]\right|} 
			-\dfrac{ \left| \mathbf{P}[\mathbf{r}_{21},\mathbf{s}_{11}]\right|}{ \left| \mathbf{P}[\mathbf{r}_{11},\mathbf{s}_{11}]\right|} 
			\dfrac{ \left| \mathbf{P}[\mathbf{r}_{11},\mathbf{s}_{21}]\right|}{ \left| \mathbf{P}[\mathbf{r}_{11},\mathbf{s}_{11}]\right|} &=\frac{|\mathbf{P}_{N_{1}+1,N_{2}}|
				|\mathbf{P}_{N_{1}-1,N_{2}}|}{|\mathbf{P}_{N_{1},N_{2}}|^{2}}\\
			&=|C_{11}|^{2},
		\end{aligned}
	\end{equation} 
	where the matrix $ \mathbf{P}_{N_{1},N_{2}} $ denotes the $ (N_{1}+N_{2})\times(N_{1}+N_{2}) $ principal submatrix of $ \mathbf{P} $ by selecting the rows and columns indexed by $ (1,2,\ldots,N_{1}, 2N_{1}+1,2N_{1}+2, \ldots, 2N_{1}+N_{2}) $.
	
	Since $ (\hat{x},\hat{t})=(x-x_{1,0}, t-t_{1,0}) $, we can absorb the constant term $ \Delta_{1}(\hat{z}_{0,1})=\Delta_{1}({z}_{0,1}) $ into $ (x_{1,0}, t_{1,0}) $,
	and then derive
	\begin{equation}\label{asyexpre2}
		\begin{aligned}
			\tau^{(k)} = &  c_{N_{1}}^{-2}c_{N_{2}}^{-2} \left| \mathbf{P}[\mathbf{r}_{11},\mathbf{s}_{11}]\right| 
			\left[ \hat{x}_{1,1}^{(k,+)}(x-x_{1,0},t-t_{1,0})   \hat{x}_{1,1}^{(k,-)}(x-x_{1,0},t-t_{1,0})   +|C_{11}|^{2} \right]\\ 
			&  \times  \left|\Theta_{N_1}'(\hat{z}_{1,0},\bm{\kappa}_1)\right|^2 \left|\Theta_{N_2}(\hat{z}_{2,0},\bm{\kappa}_2)\right|^2 A^{\Gamma-2}   +\cO(A^{\Gamma-3}), \quad k=0,1,2,
		\end{aligned}
	\end{equation}
	from the equation \eqref{asyexpre1}, where
	\begin{equation}\label{hus1}
	\begin{aligned}
		&\hat{x}_{1,1}^{(k,+)}(x,t)=\ii\chi_{1}^{[1]}\left( x+\chi_{1}t\right)  +h_{3,1}^{[1,k]}, \\
		&\hat{x}_{1,1}^{(k,-)}(x,t)= \ii\chi_{1}^{[1]}\left( x+\chi_{1}t\right)  -(h_{3,1}^{[1,k]})^{*}, \quad h_{3,1}^{[1,0]}=0.\\
	\end{aligned} 
	\end{equation}
    In particular, the constant term $ \Delta_{1} $ in $ (x_{0,1}, t_{0,1}) $ can be negligible as $ A\rightarrow \infty $.
   	By substituting the expression \eqref{asyexpre2} of $ \tau^{(k)} $ into the vector RW solution formula \eqref{horw}, we prove the equality of \eqref{asym_nmrp} when $\sqrt{(x-{x}_{1,0})^{2}+(t-{t}_{1,0})^{2}} = \cO(1)$ and $ A\gg1 $. It follows that, associated with every simple root $ z_{1,0} $ of the polynomial $ \Theta_{N_{1}}(z_{1},\bm{\kappa}_{1}) $, the RW solution $ q_{n}^{[N_{1},N_{2}]}(x,t) $ approaches a first-order RW $ \hat{q}_{n,l}(x-{x}_{1,0}, t-{t}_{1,0}) q_{n}^{[0]}(x,t) $. Therefore, the RW pattern of $ q_{n}^{[N_{1},N_{2}]}(x,t) $ has $ \frac{N_{1}(N_{1}-1)}{2} $ first-order RWs $ \hat{q}_{n,l}(x-{x}_{1,0}, t-{t}_{1,0}) q_{n}^{[0]}(x,t) $.
   	
   	On the other hand, when $ \sqrt{(x-{x}_{2,0})^{2}+(t-{t}_{2,0})^{2}} = \cO(1) $ and $ A\gg1 $, the asymptotic expression \eqref{asym_nmrp} with $ l=2 $ can also be established by a similar argument. In this case, the constant term $\Delta_{2}$ in $(x_{2,0},t_{2,0})$ can be computed as 
   	\begin{equation}\label{delta2}
   		\begin{aligned}
   			\Delta_{2}= \dfrac{ \left| \mathbf{P}[\mathbf{r}_{11},\mathbf{s}_{12}]\right|}{ \left| \mathbf{P}[\mathbf{r}_{11},\mathbf{s}_{11}]\right|}, 
   		\end{aligned}
   	\end{equation}
   	where the row index vector $\mathbf{r}_{11}$ is defined as in the equation \eqref{delta1}, and the column index vectors are given by 
   	$\mathbf{s}_{1m}= \mathbf{s}\big|_{\bar{\mathbf{v}}_{1}=\mathbf{v}_{1}^{(1)},\, \bar{\mathbf{v}}_{2}=\mathbf{v}_{2}^{(m)}}$ 
   	with $m=1,2$. The proof follows similarly and is omitted here. Hence, the RW pattern of $ q_{n}^{[N_{1},N_{2}]}(x,t) $ has also $ \frac{N_{2}(N_{2}-1)}{2} $ first-order RWs $ \hat{q}_{n,2}(x-{x}_{2,0}, t-{t}_{2,0}) q_{n}^{[0]}(x,t) $.  
   	
   	Therefore, we complete the proof of Theorem \ref{theo-nmrp}.

\section{Conclusions and discussions}\label{Sec-conclusion}
In this work, we constructed a class of high-order vector RW solutions for the CNLS equation \eqref{cnlse}, under the setting that the associated characteristic equation \eqref{chareq1} possesses two pairs of complex-conjugate double roots and that the solutions involve multiple large internal parameters. By taking a suitable form of these large internal parameters, we obtained a variety of previously unreported RW patterns. 

We then carried out a detailed asymptotic analysis of the resulting RW patterns. A central finding is that each RW pattern contains two regions, corresponding to the root structures of two different Adler--Moser polynomials, respectively. Each region consists of a collection of spatially separated first-order RWs, with each first-order RW corresponding to a simple root of the associated polynomial. Importantly, the first-order RWs associated with different polynomials can exhibit distinct dynamics: one group displays bright profiles, whereas the other exhibits four-petaled or dark ones. To further illustrate our findings, we presented several representative examples of RW patterns in the CNLS equation \eqref{cnlse}, including double-sector, double-heart, and mixed sector-heart structures. Among these examples, some patterns exhibit the coexistence of bright and four-petaled RWs, whereas others display the coexistence of dark and bright first-order RWs. 

In addition, this paper illustrates RW patterns only for the high-order vector RW solutions whose order index $ [N_{1},N_{2}] $ has both entries being positive integers. When $ [N_{1},N_{2}] $ contains exactly one nonzero positive integer, our approach likewise produces RW patterns corresponding to the root structure of an Adler--Moser polynomial (i.e., the other polynomial degree equals zero). In this respect, the resulting RW patterns are analogous to those in the scalar NLS equation associated with root structures of the Adler--Moser polynomial in Refs. \cite{yang2024,linh2024b,YangB2025}. A key difference, however, is that in the CNLS equation such patterns may involve three distinct first-order RW structures---namely eye-shaped, four-petaled, and anti-eye-shaped RWs---whereas the corresponding scalar NLS patterns admit only the eye-shaped RW. 

Moreover, we focus on the generic case in which two relevant Adler--Moser polynomials both possess only simple roots and share no common roots in the transformed $ (x,t) $-plane. The more intricate situation where the associated polynomials contain multiple roots will be the subject of our future work. Finally, the methodology developed here can be extended to general multi-component NLS equations and to other multi-component integrable systems, thereby enriching the landscape of RW pattern research in multi-component integrable systems and further advancing the theory of RWs.

\section*{Conflict of interests}
The authors have no conflicts to disclose.

\section*{DATA AVAILABILITY}
Data sharing is not applicable to this article as no new data were created or analyzed in this study.

\section*{Acknowledgments} 	
Liming Ling is supported by the National Natural Science Foundation of China (No. 12471236), the Guangzhou Municipal Science and Technology Project (Guangzhou Science and Technology Plan, No. 2024A04J6245) and Guangdong Natural Science Foundation
grant (No. 2025A1515011868)

%%%%%%%%%%%%
\appendix
\renewcommand\thesection{A}
\section*{Appendix A}\label{app:A}
%\refstepcounter{section}  % 手动增加 section 计数器，使其编号为 A
\addcontentsline{toc}{section}{Appendix A}
\renewcommand{\theequation}{A.\arabic{equation}}
\setcounter{equation}{0} % 重新开始编号
Here, we define two operators: the row shift operator $\mathcal{L}_1$ and the column shift operator $\mathcal{L}_2$
\begin{definition}\label{defi_1}
	Let $ N\in \mathbb{Z}_{>0} $ and $ B = \det_{1\le i,j\le N}\big(\varphi_{2j-i}\big) $.
	
	\begin{enumerate}
		\item[(1)] The operator $\mathcal{L}_1$ acts on $B$ by shifting one row at a time:
		\[
		\mathcal{L}_1 B
		:=
		\sum_{k=1}^{N}
		\det_{1\le i,j\le N}\big(\varphi_{2j-i-p_{i,k}}\big),
		\]
		where
		\[
		p_{i,k}=
		\begin{cases}
			2, & i=k,\\
			0, & i\neq k.
		\end{cases}
		\]
		Equivalently, the $k$-th term in the sum is obtained by replacing the $k$-th row
		$\big(\varphi_{2-k}, \varphi_{4-k}, \ldots, \varphi_{2N-k}\big)$ of the determinant
		$B$
		with the shifted row $\big(\varphi_{-k}, \varphi_{2-k}, \ldots, \varphi_{2N-k-2}\big)$,
		while keeping all other rows unchanged.
		
		\item[(2)] The operator $\mathcal{L}_2$ acts on $B$ by shifting one column at a time:
		\[
		\mathcal{L}_2 B
		:=
		\sum_{k=1}^{N}
		\det_{1\le i,j\le N}\big(\varphi_{2j-i-p_{j,k}}\big),
		\]
		where
		\[
		p_{j,k}=
		\begin{cases}
			2, & j=k,\\
			0, & j\neq k.
		\end{cases}
		\]
		Equivalently, the $k$-th term in the sum is obtained by replacing the $k$-th column
		$\big(\varphi_{2k-1}, \varphi_{4k-2}, \ldots, \varphi_{2k-N}\big)$ of the determinant $ B $
		with the shifted column $\big(\varphi_{2k-3}, \varphi_{4k-4}, \ldots, \varphi_{2k-N-2}\big)$,
		while keeping all other columns unchanged.
	\end{enumerate}
\end{definition}

\begin{propo}\label{prop-app1}
	Let $ N\in \mathbb{Z}_{>0} $, for the determinant $ B = \det_{1\le i,j\le N}\big(\varphi_{2j-i}\big) $, we have 
	\begin{equation}\label{app_rcop1}
		\mathcal{L}_1 B = \mathcal{L}_2 B.
	\end{equation}
	
\end{propo}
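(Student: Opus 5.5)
The plan is to realise both $\mathcal{L}_1 B$ and $\mathcal{L}_2 B$ as the derivative of one and the same determinant with respect to an auxiliary parameter. Row-wise and column-wise differentiation of a determinant must then agree, which gives the identity.

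\textbf{Step 1: a one-parameter deformation.} Given the sequence $\{\varphi_k\}_{k\in\mathbb{Z}}$ with $\varphi_k=0$ for $k<0$, define for $s\in\mathbb{C}$
\[
\varphi_k(s):=\sum_{m\ge 0}\frac{s^m}{m!}\,\varphi_{k-2m}.
\]
For the Schur polynomials \eqref{schur2} this is the same as inserting an extra term $s\epsilon^2$ into the exponent of the generating function. However, I will use only this algebraic definition, so the argument does not depend on any special property of the $\varphi_k$. The sum is finite, because $\varphi_{k-2m}=0$ once $2m>k$. Moreover $\varphi_k(s)=0$ for $k<0$, $\varphi_k(0)=\varphi_k$, and
\[
\partial_s\varphi_k(s)=\varphi_{k-2}(s)\qquad\text{for all }k\in\mathbb{Z}.
\]
When $k<2$ the right-hand side vanishes, as it should, since $\varphi_k(s)$ is then constant in $s$ ($\varphi_0(s)=\varphi_0$ and $\varphi_1(s)=\varphi_1$).

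\textbf{Step 2: two expansions of the derivative.} Set $B(s)=\det_{1\le i,j\le N}\big(\varphi_{2j-i}(s)\big)$. Since the determinant is multilinear in the rows, $\partial_s B(s)$ is the sum over $k$ of the determinants in which the $k$-th row is replaced by its derivative. The entry-wise derivative of the $k$-th row is $(\varphi_{2j-k-2}(s))_j$, which is exactly the row shift in Definition~\ref{defi_1}(1). Hence $\partial_s B(s)=\mathcal{L}_1 B(s)$, where $\mathcal{L}_1$ acts on the deformed entries.

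Multilinearity in the columns gives the same expansion column by column. The derivative of the $k$-th column is $(\varphi_{2k-i-2}(s))_i$, which is the column shift in Definition~\ref{defi_1}(2). Hence $\partial_s B(s)=\mathcal{L}_2 B(s)$. Setting $s=0$ yields $\mathcal{L}_1 B=\mathcal{L}_2 B$.

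\textbf{Main obstacle.} The only point needing care is that the derivative rule $\partial_s\varphi_k(s)=\varphi_{k-2}(s)$ holds uniformly for all indices, including the boundary cases where $2j-i-2<0$. Otherwise the shifted rows and columns would not coincide with the entries appearing in $\mathcal{L}_1$ and $\mathcal{L}_2$. This is settled by checking $k<0$, $k\in\{0,1\}$ and $k\ge2$ separately from the explicit finite sum, which I expect to be routine. No earlier result of the paper is needed.
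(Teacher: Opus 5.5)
Your proof is correct. The rule $\partial_s\varphi_k(s)=\varphi_{k-2}(s)$ holds for every $k\in\mathbb{Z}$ just by reindexing $m\mapsto m+1$ in the finite sum, so the case check you flag as the main obstacle is immediate.

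Your route differs from the paper's only in packaging. The paper uses no deformation. The $k$-th term of $\mathcal{L}_1B$ differs from $B$ only in row $k$, so its cofactors along that row are the $C_{k,j}(B)$. Expanding along that row gives $\mathcal{L}_1B=\sum_{i,j}\varphi_{2j-i-2}\,C_{i,j}(B)$. Expanding each term of $\mathcal{L}_2B$ along its replaced column yields the same double sum.

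Your row-wise and column-wise expansions of $\partial_sB(s)$ at $s=0$ are exactly these two cofactor sums (Jacobi's formula read by rows and by columns). So the deformation is scaffolding around the same fact. A linear perturbation $\varphi_{2j-i}+s\,\varphi_{2j-i-2}$ would already suffice, since only $s=0$ matters.

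The paper's version is shorter and uses nothing about the sequence $\{\varphi_k\}$. Yours buys an interpretation: the common value is the derivative of $B$ along an added $\epsilon^{2}$ term in \eqref{schur2}. This is precisely how the $x^{\pm}_{l,2}\varphi_{j-2}$ corrections produce $\mathcal{L}_1\Theta_{N_1}$ in \eqref{asyms1}. It also makes transparent the vanishing the paper uses there. In each term of $\mathcal{L}_2B(s)$, the shifted $k$-th column equals column $k-1$, or is zero for $k=1$. Hence $\partial_sB(s)\equiv 0$, and so $\mathcal{L}_1B=\mathcal{L}_2B=0$.
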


\begin{proof}
	Denote the $ k $-th term in $\mathcal{L}_1 B$ by $ B^{(k)}=\det_{1\le i,j\le N}\big(\varphi_{2j-i-p_{i,k}}\big) $ with $ 1\leq k \leq N $, and let $ C_{i,j}(B) $ denote the cofactor of the entry $ (B)_{i,j} $ in the determinant $ B $. Clearly, $ C_{k,j}(B)=C_{k,j}(B^{(k)}) $. By expanding $ B^{(k)} $ along its $ k $-th row, we obtain
	\[B^{(k)}=\sum_{j=1}^{N} \varphi_{2j-k-2} C_{k,j}(B),\]
	which implies
	\begin{equation}\label{app_rcop01}
		\mathcal{L}_1 B = \sum_{k=1}^{N}B^{(k)}= \sum_{i=1}^{N}\sum_{j=1}^{N} \varphi_{2j-i-2} C_{i,j}(B).
	\end{equation}
	
	On the other hand, we can similarly obtain the $ k $-th term in $\mathcal{L}_2 B$, that is
	\[\sum_{i=1}^{N} \varphi_{2k-i-2} C_{i,k}(B),\]
	and conclude that
	\begin{equation}\label{app_rcop02}
			\mathcal{L}_2 B =\sum_{i=1}^{N}\sum_{j=1}^{N} \varphi_{2j-i-2} C_{i,j}(B).
	\end{equation}
	
	Therefore, by combining the equations \eqref{app_rcop01} and \eqref{app_rcop02}, we can prove the equation \eqref{app_rcop1}.
\end{proof}

%% If you have bibdatabase file and want bibtex to generate the
%% bibitems, please use
%%
\bibliographystyle{elsarticle-harv}
\bibliography{RefofCNLSE}

@article{am1978,
   author = {Adler, M.  and  Moser, J.},
   title = {On a Class of Polynomials Connected with the {K}orteweg-de {V}ries Equation},
   journal = {Commun. Math. Phys.},
   volume = {61},
   pages = {1--30},
   year = {1978},
   type = {Journal Article}
}

@article{am2009,
   author = {Clarkson, P. A.},
   title = {Vortices and Polynomials},
   journal = {Stud. Appl. Math.},
   volume = {123},
   pages = {37--62},
   year = {2009},
   type = {Journal Article}
}

@article{yv2003,
   author = {Clarkson, P.A.  and  Mansfield, E.L.},
   title = {The second {P}ainlev\'e equation, its hierarchy and associated special polynomials},
   journal = {Nonlinearity},
   volume = {16},
   pages = {1--26},
   year = {2003},
   type = {Journal Article}
}

@article{linh2024a,
  title={Rogue wave pattern of multi-component derivative nonlinear {S}chr{\"o}dinger equations},
  author={Lin, Huian  and  Ling, Liming},
  journal={Chaos},
  volume={34},
  pages={043126},
  year={2024},
}

@article{lin2025vector,
  title={Vector rogue wave patterns of the multi-component nonlinear {Schr{\"o}dinger} equation and generalized mixed {Adler--Moser} polynomials},
  author={Lin, Huian and Ling, Liming},
  journal={Acta Math. Sci.},
  volume={45},
  number={6},
  pages={2478--2509},
  year={2025},
  publisher={Springer}
}

@article{linh2024b,
	author = {Lin, Huian  and  Ling, Liming},
	title = {Rogue wave patterns associated with {A}dler--{M}oser polynomials featuring multiple roots in the nonlinear {S}chr{\"o}dinger equation},
	journal = {Stud. Appl. Math.},
	volume = {154},
	pages = {e12782},
	year = {2024},
}

@article{ling2014high,
  title={High-order rogue waves in vector nonlinear {S}chr{\"o}dinger equations},
  author={Ling, Liming  and  Guo, Boling  and  Zhao, LiChen},
  journal={Phys. Rev. E},
  volume={89},
  number={4},
  pages={041201},
  year={2014},
  publisher={APS}
}

@article{ZhaoLing2016,
    author = {Zhao, Li-Chen and Guo, Boling and Ling, Liming},
    title = {High-order rogue wave solutions for the coupled nonlinear Schr{\"o}dinger equations-{II}},
    journal = {J. Math. Phys.},
    volume = {57},
    number = {4},
    pages = {043508},
    year = {2016},
    issn = {0022-2488},
}

@article{qin2023optical,
  title={Optical rogue-wave patterns in coupled defocusing systems},
  author={Qin, Yan-Hong and Ling, Liming and Zhao, Li-Chen},
  journal={Phys. Rev. A},
  volume={108},
  number={2},
  pages={023519},
  year={2023},
  publisher={APS}
}

@article{yang2024,
	author = {Yang, Bo  and  Yang, Jianke},
	title = {Rogue wave patterns associated with {A}dler--{M}oser polynomials in the nonlinear {S}chr\"odinger equation},
	journal = {Appl. Math. Lett.},
	volume = {148},
	pages = {108871},
	year = {2024},
	type = {Journal Article}
}

@article{YangB2025,
  author  = {Yang, B. and Yang, J.},
  title   = {Triangular rogue clusters associated with multiple roots of {Adler--Moser} polynomials in integrable systems},
  journal = {Physica D},
  year    = {2025},
  volume  = {483},
  pages   = {134921}
}

@article{yang2021,
	author = {Yang, Bo  and  Yang, Jianke},
	title = {Rogue wave patterns in the nonlinear {S}chr{\"o}dinger equation},
	journal = {Physica D},
	volume = {419},
	pages = {132850},
	year = {2021},
	type = {Journal Article}
}

@book{GuoB2017book,
  author    = {Guo, B. and Tian, L. and Yan, Z. and others},
  title     = {Rogue Waves: Mathematical Theory and Applications in Physics},
  year      = {2017},
  publisher = {Walter de Gruyter GmbH \& Co KG},
  address   = {Berlin},
}

@article{Akhmediev2009,
  author  = {Akhmediev, N. and Ankiewicz, A. and Taki, M.},
  title   = {Waves that appear from nowhere and disappear without a trace},
  journal = {Phys. Lett. A},
  year    = {2009},
  volume  = {373},
  number  = {6},
  pages   = {675--678}
}

@article{Kjeldsen1984,
  author  = {Kjeldsen, S. P.},
  title   = {Dangerous wave groups},
  journal = {Norwegian Maritime Research},
  year    = {1984},
  volume  = {2},
  pages   = {4--16}
}

@article{Kharif2003,
  author  = {Kharif, C. and Pelinovsky, E.},
  title   = {Physical mechanisms of the rogue wave phenomenon},
  journal = {Eur. J. Mech. B Fluids},
  year    = {2003},
  volume  = {22},
  number  = {6},
  pages   = {603--634}
}

@inproceedings{Rosenthal2005,
  author    = {Rosenthal, W. and Lehner, S.},
  title     = {Results of the {MAXWAVE} project},
  booktitle = {Proceedings of the 14th 'Aha Huliko'a Winter Workshop},
  year      = {2005},
  address   = {Honolulu, Hawaii}
}

@article{Walker2004,
  author  = {Walker, D. A. G. and Taylor, P. H. and Taylor, R. E.},
  title   = {The shape of large surface waves on the open sea and the {Draupner} {New Year} wave},
  journal = {Appl. Ocean Res.},
  year    = {2004},
  volume  = {26},
  number  = {3--4},
  pages   = {73--83}
}

@article{Draper1965,
  author  = {Draper, L.},
  title   = {{`Freak'} ocean waves},
  journal = {Marine Observer},
  year    = {1965},
  volume  = {35},
  number  = {2},
  pages   = {193--195}
}

@article{2010Efimov,
  author  = {Efimov, V. B. and Ganshin, A. N. and Kolmakov, G. V. and others},
  title   = {Rogue waves in superfluid helium},
  journal = {Eur. Phys. J. Spec. Top.},
  year    = {2010},
  volume  = {185},
  number  = {1},
  pages   = {181--193}
}

@article{Shats2010,
  author  = {Shats, M. and Punzmann, H. and Xia, H.},
  title   = {Capillary rogue waves},
  journal = {Phys. Rev. Lett.},
  year    = {2010},
  volume  = {104},
  number  = {10},
  pages   = {104503}
}

@article{Stenflo2010,
  author  = {Stenflo, L. and Marklund, M.},
  title   = {Rogue waves in the atmosphere},
  journal = {J. Plasma Phys.},
  year    = {2010},
  volume  = {76},
  number  = {3--4},
  pages   = {293--295}
}

@article{Yan2010f,
  author  = {Yan, Z.},
  title   = {Financial rogue waves},
  journal = {Commun. Theor. Phys.},
  year    = {2010},
  volume  = {54},
  number  = {5},
  pages   = {947--949}
}

@article{Chabchoub2011,
  author  = {Chabchoub, A. and Hoffmann, N. P. and Akhmediev, N.},
  title   = {Rogue wave observation in a water wave tank},
  journal = {Phys. Rev. Lett.},
  year    = {2011},
  volume  = {106},
  number  = {20},
  pages   = {204502}
}

@article{Chabchoub2012,
  author  = {Chabchoub, A. and Hoffmann, N. and Onorato, M. and others},
  title   = {Observation of hierarchy of up to fifth-order rogue waves in a water tank},
  journal = {Phys. Rev. E},
  year    = {2012},
  volume  = {86},
  number  = {5},
  pages   = {056601}
}

@article{Solli2007,
  author  = {Solli, D. R. and Ropers, C. and Koonath, P. and others},
  title   = {Optical rogue waves},
  journal = {Nature},
  year    = {2007},
  volume  = {450},
  pages   = {1054--1057}
}

@article{Kibler2010,
  author  = {Kibler, B. and Fatome, J. and Finot, C. and others},
  title   = {The {Peregrine} soliton in nonlinear fibre optics},
  journal = {Nature Physics},
  year    = {2010},
  volume  = {6},
  number  = {10},
  pages   = {790--795}
}

@BOOK{Pelinovsky2016,
  title={Extreme Ocean Waves (2nd ed.)},
  year= {2016},
  author={Pelinovsky, Efim  and  Kharif, C. },
  publisher={Springer, New York}
}

@article{fochesato2007,
  title={Numerical modeling of extreme rogue waves generated by directional energy focusing},
  author={Fochesato, Christophe  and  Grilli, S.  and  Dias, F.},
  journal={Wave motion},
  volume={44},
  number={5},
  pages={395--416},
  year={2007},
  publisher={Elsevier}
}

@book{YangB2024book,
  author    = {Yang, B. and Yang, J.},
  title     = {Rogue Waves in Integrable Systems},
  year      = {2024},
  publisher = {Springer Nature Switzerland},
}

@article{ankiewicz2010d,
  title={Discrete rogue waves of the {A}blowitz--{L}adik and {H}irota equations},
  author={Ankiewicz, Adrian  and  Akhmediev, Nail  and  Soto-Crespo, JM},
  journal={Phys. Rev. E},
  volume={82},
  number={2},
  pages={026602},
  year={2010},
  publisher={APS}
}

@article{suh2023,
title = {Rogue waves and their patterns for the coupled {F}okas--{L}enells equations},
author={Ling, Liming  and  Su, Huajie},
journal = {Physica D},
volume = {461},
pages = {134111},
year = {2024},
ISSN = {0167-2789}
}

@article{Kedziora2013,
  author  = {Kedziora, D. J. and Ankiewicz, A. and Akhmediev, N.},
  title   = {Classifying the hierarchy of nonlinear-{S}chr{\"o}dinger-equation rogue-wave solutions},
  journal = {Phys. Rev. E},
  year    = {2013},
  volume  = {88},
  number  = {1},
  pages   = {013207}
}

@ARTICLE{zhangg2022,
  title={Rogue waves and their patterns in the vector nonlinear {S}chr{\"o}dinger equation},
  author={Zhang, G. X. and Huang, P. and Feng, B. F. and Wu, C. F.},
  journal={J. Nonlinear Sci.},
  volume={33},
  pages={116},
  year={2023},
  publisher={Springer}
}

@ARTICLE{ZhangGQ2021,
  title={Multi-component nonlinear Schr{\"o}dinger equations with nonzero boundary conditions: higher-order vector {P}eregrine solitons and asymptotic estimates},
  author={Zhang, G. Q.  and  Ling, L. M.  and  Yan, Z. Y.},
  journal={J. Nonlinear Sci.},
  volume={31},
  number={5},
  pages={81},
  year={2021},
  publisher={Springer}
}

@article{DongJ2022,
   author = {Dong, J. and Ling, L.  and  Zhang, X.},
   title = {{K}adomtsev-{P}etviashvili equation: {O}ne-constraint method and lump pattern},
   journal = {Physica D},
   volume = {432},
   year = {2022},
   type = {Journal Article}
}

@article{behrend2013,
  title={Multiply-refined enumeration of alternating sign matrices},
  author={Behrend, Roger E},
  journal={Adv. Math.},
  volume={245},
  pages={439--499},
  year={2013},
  publisher={Elsevier}
}

%% else use the following coding to input the bibitems directly in the
%% TeX file.

%\begin{thebibliography}{00}

%% \bibitem{label}
%% Text of bibliographic item

%\bibitem{}

%\end{thebibliography}
\end{document}